\newtheorem{Proposition}{Proposition}
\DeclareMathOperator{\Span}{span}
\DeclareMathOperator{\per}{per}
\DeclarePairedDelimiter\floor{\lfloor}{\rfloor}
\DeclareMathOperator{\rank}{rank}
\DeclareMathOperator{\Tr}{Tr}
\begin{document}
\title{Permanent variational wave functions for bosons}
\author{J.~M.~Zhang}
\email{wdlang06@163.com}
\affiliation{Fujian Provincial Key Laboratory of Quantum Manipulation and New Energy Materials,
College of Physics and Energy, Fujian Normal University, Fuzhou 350007, China}
\affiliation{Fujian Provincial Collaborative Innovation Center for Optoelectronic Semiconductors and Efficient Devices, Xiamen, 361005, China}
\author{H.~F.~Song}
\email{song\_haifeng@iapcm.ac.cn}
\affiliation{Laboratory of Computational Physics, Institute of Applied Physics and Computational Mathematics, Beijing 100088, China}
\author{Y. Liu}
\email{liu\_yu@iapcm.ac.cn}
\affiliation{Laboratory of Computational Physics, Institute of Applied Physics and Computational Mathematics, Beijing 100088, China}

\begin{abstract}
We study the performance of permanent states (the bosonic counterpart of the Slater determinant state) as approximating functions for bosons, with the intention to develop variational methods based upon them. For a system of $N$ identical bosons, a permanent state is constructed by taking a set of $N$ arbitrary (not necessarily orthonormal) single-particle orbitals, forming their product and then symmetrizing it. It is found that for the one-dimensional Bose-Hubbard model with the periodic boundary condition and at unit filling, the exact ground state can be very well approximated by a permanent state, in that the permanent state has high overlap (at least 0.96 even for 12 particles and 12 sites) with the exact ground state and can reproduce both the ground state energy and the single-particle correlators to high precision. For a generic model, we have devised a greedy algorithm to find the optimal set of single-particle orbitals to minimize the variational energy or maximize the overlap with a target state. It turns out that quite often the ground state of a bosonic system can be well approximated by a permanent state by all the criterions of energy, overlap, and correlation functions. And even if the error is apparent, it can often be remedied by including more configurations, i.e., by allowing the variational wave function to be a combination of multiple permanent states. The algorithm is used to study the stability of a two-particle system, with great success. All these suggest that permanent states are very effective as variational wave functions for bosonic systems, and hence deserve further studies.
\end{abstract}

\maketitle

\section{Introduction}
The Hartree-Fock approximation for fermions is a paradigm in quantum mechanics \cite{hartree, fock, gaunt, slater1, slater2, ostlund}. Conceptually, it is very simple. It is a variational method. For an $N$-fermion system, one just takes $N$ orthonormal single-particle orbitals $\{ \phi_i | 1\leq i \leq N \}$, constructs the product state $\phi_1(x_1)\phi_2(x_2) \ldots \phi_N (x_N )$, and then anti-symmetrizes it to obtain the Slater determinant wave function
\begin{eqnarray}\label{slater}
 & & \Xi(x_1,x_2, \ldots, x_N) \nonumber \\
  &=& \frac{1}{\sqrt{N!}} \sum_{P \in S_N } (-1)^P \phi_{P_1}(x_1) \phi_{P_2}(x_2) \ldots  \phi_{P_N}(x_N) \nonumber \\
  &=&  \frac{1}{\sqrt{N!}}\det (\phi_{i} (x_{j} )).
\end{eqnarray}
Here $S_N$ denotes the symmetric group of degree $N$. By construction, the determinant state satisfies the anti-symmetry condition and constitutes a legitimate wave function for a collection of identical fermions. With the variational wave function built in this way, the rest work is an optimization problem. One has to choose  the $N$ orthonormal orbitals optimally so as to minimize the energy expectation value of the $N$-body variational state.

It is a natural idea to generalize this approach to bosons. One can take $N$ single-particle orbitals $\{ \phi_i | 1\leq i \leq N \}$, form their product, but then symmetrize it to obtain the following state,
\begin{eqnarray}\label{permstate}
  & & \Phi(x_1,x_2, \ldots, x_N) \nonumber \\
  &=& \frac{1}{\sqrt{N!}} \sum_{P \in S_N }  \phi_{P_1}(x_1) \phi_{P_2}(x_2) \ldots  \phi_{P_N}(x_N) \nonumber \\
  &=&  \frac{1}{\sqrt{N!}} \per (\phi_{i} (x_{j} )),
\end{eqnarray}
which we shall refer to as a permanent state. Unlike the fermionic case, here because of the symmetry instead of anti-symmetry condition, the single-particle orbitals are not necessarily orthogonal to each other \cite{fermi}, but could even be identical. In the extremal case in which all the orbitals are constrained to be the same, we have the Gross-Pitaevskii approximation \cite{gross,pitaevskii}, which has been proven to be very successful for weakly interacting bose gases \cite{rmp1,rmp2}. However, for more general systems, such as the Bose-Hubbard model which we shall study below, the Gross-Pitaevskii approximation is too restrictive and we had better allow more freedom for the $N$ orbitals.

The idea seems very simple. However, probably because mathematically the permanent of a matrix lacks many of the nice properties of the determinant, such an approach has rarely been put into practice. As far as we know, the very limited literature starts with two papers of
Romanovsky \emph{et al.} in 2004 and 2006 \cite{igor1,igor2}. They employed the permanent state as variational wave functions for some few-boson systems in two-dimensional harmonic traps. They went beyond the Gross-Pitaevskii approximation by allowing each particle to occupy a different orbital, for which they coined the term unrestricted Bose-Hartree-Fock
approximation. However, because of the perceived high complexity of the self-consistency equations, they did not seek self-consistent orbitals, but prescribed them as displaced Gaussians. Subsequently, the self-consistency equations for the orbitals were derived by Heimsoth \cite{martin1,martin2}. Unfortunately, the formalism was still unnecessarily complicated, and he did not even implemented the Ryser algorithm for permanent computation. Consequently, he could handle at most six particles.

In this paper, we resume research in this vein but with different perspectives. For some reason, people almost always look at orthonormal single-particle orbitals, and henceforth construct orthonormal Fock states as basis functions for a multi-particle system. But working with orthonormal orbitals is \emph{artificial}---The bosons themselves have no notion of orthogonality and are totally happy to reside in non-orthogonal orbitals.
Therefore, for studying the structure of a bosonic wave function, one should give up the obsession of orthogonal orbitals. It is absolutely possible that a bosonic wave function looking complex with respect to a Fock basis is actually of a simple and compact structure, namely, is equal or close to a permanent state built of non-orthogonal orbitals.
As we shall see below, this is indeed the case for many models, in particular, for the one-dimensional Bose-Hubbard model at unit filling. An immediate implication of such fortunate facts is that permanent states can be used as building blocks for constructing variational wave functions and it is worthwhile to develop variational methods based upon them.
Of course, besides this pragmatic purpose, permanent states are of interest in their own right. They are the simplest bosonic wave functions, yet of rich structures, with information (energy, correlation functions, etc.) not easy to extract sometimes.


This paper is organized as follows. First in Sec.~\ref{secprelim}, we review the connection between the first and second quantization formalisms, and we shall establish some analytic facts about the permanent state. Then in Sec.~\ref{secbhm}, we show that for the one-dimensional Bose-Hubbard model with periodic boundary condition and at unit filling, which is the standard setting for studying the superfluid-Mott insulator transition, the permanent state can be a very good approximation of the exact ground state. It is good not only by the usual energy criterion, but also by the more stringent criterions of overlap and correlation functions. For a Bose-Hubbard model with 12 particles on 12 sites, the energy-minimizing permanent state with prescribed orbitals has an overlap with the exact ground state as large as $0.96$ in the worst case. Of course one should not be satisfied with prescribed orbitals. It is desirable to have more flexibility and presumably the numbers could be further improved if the orbitals are really unrestricted. We thus propose an iterative algorithm in Sec.~\ref{secalg} for searching for the optimal set of orbitals  minimizing the energy. The equations are equivalent to what Heimsoth derived \cite{martin1,martin2}. However, because of the different point of view, our derivation is more elementary and straightforward, and the formulation is more amenable for numerical implementation. Moreover, our formalism can handle easily the multi-configuration case, i.e., the case when the variational wave function is a combination of multiple permanent states. Note that this was not considered previously, but is necessary and effective for improving accuracy. The algorithm can actually be employed to solve another optimization problem, namely, for a given wave function, finding the single- or multi-configurational variational wave function most close to it, i.e., having the largest possible overlap with it. Although this problem is rarely studied in the literature and is not our focus in this paper, it should be a meaningful question for studying the structure of a bosonic wave function. With the optimization algorithm, we can tackle more general models. This is what we do in Sec.~\ref{secgeneral}. We shall see that in many cases, a single- or multi-configurational variational wave function is a very good approximation of the exact ground state of the system. This enables us to use the algorithm to study the stability of a two-boson system, which is analogous to the negative ion of hydrogen. Finally, we conclude in Sec.~\ref{secconclude} with some open problems. We would like to mention that the whole paper is actually a by-product of studying these open problems.

\section{Permanent wave functions}\label{secprelim}
For the sake of simplicity, let us assume a finite-dimensional single-particle Hilbert space
\begin{eqnarray}
 \mathcal{H} = \Span\{|x\rangle, 1 \leq x \leq L \} ,
\end{eqnarray}
where $|x\rangle $ are orthonormal basis vectors. The associated creation (annihilation) operators will be denoted as $a_x^\dagger $ ($a_x$). They satisfy the usual commutation relations. A generic (not necessarily normalized) single-particle state or a single-particle orbital in this space is $ | \phi \rangle =\sum_{x=1}^L  |x \rangle \langle x | \phi\rangle = \sum_{x=1}^L \phi(x) |x \rangle $. The associated creation operator is $a_\phi^\dagger = \sum_{x=1}^L \phi(x)  a_x^\dagger$.

For an $N$-boson system, the many-body Hilbert space is spanned by the orthonormal Fock states
\begin{eqnarray}
 |\textbf{n} \rangle &=& \frac{(a_1^\dagger)^{n_1}(a_2^\dagger)^{n_2}\ldots (a_L^\dagger)^{n_L}   }{\sqrt{ n_1! n_2 ! \ldots n_L ! }} |vac\rangle .
\end{eqnarray}
where $\textbf{n} \equiv (n_1, n_2 , \ldots, n_L )$ is an $L$-tuple with $n_x \geq 0 $ and $\sum_{x=1}^L n_x = N $. The number of such Fock states or the dimension of the many-body Hilbert space is
\begin{eqnarray}\label{dimh}
  \mathcal{D} &=& \binom{N+L -1}{N} = \frac{(N+L-1)!}{N!(L-1)!}.
\end{eqnarray}
A generic $N$-boson state $|\Psi \rangle $ expands as $|\Psi \rangle  = \sum_{\textbf{n}} C(\textbf{n}) |\textbf{n} \rangle $. In first quantization, the same state is expressed by the wave function $\Psi(x_1, x_2, \ldots, x_N )$, with $1\leq x_i \leq L $. Under the action of particle permutations, the coordinate tuples $ \textbf{x} \equiv (x_1, x_2, \ldots, x_N )$ break into different equivalent classes labelled by the occupation tuple $\textbf{n} $. The wave function should be constant on each class. We say a coordinate tuple $\textbf{x}$ belongs to $\textbf{n}$ and denote it as $\textbf{x}\vdash  \textbf{n} $ if in $\textbf{x}$, the value  $x$ appears $n_x$ times. The cardinality of the class $\textbf{n} $ is $N!/\prod_{x=1}^L n_x ! $, and thus by considering the norm of $\Psi $ in both the first and second quantization form, we have \cite{bogoliubov}
\begin{eqnarray}\label{1st2nd}
  \Psi (\textbf{x}) &=& \sqrt{\frac{n_1! n_2 ! \ldots n_L! }{N!}} C(\textbf{n}) , \quad \textbf{x} \in \textbf{n}.
\end{eqnarray}
With this formula, one can convert a wave function in the first quantization form to the second quantization form, and vice versa.

Now suppose we have a set of $N$ arbitrary orbitals $\{ \phi_i(x) , 1\leq i \leq N \}$. The simplest symmetric $N$-particle wave function one can construct out of them is the permanent state in (\ref{permstate}). We shall use the notation
\begin{eqnarray}
  \Phi &=&  \hat{\mathcal{S}}(\phi_1, \phi_2, \ldots, \phi_N)
\end{eqnarray}
to indicate that $\Phi$ is built out of the orbitals $\{ \phi_i(x) , 1\leq i \leq N \}$ according to the product and symmetrization procedure in (\ref{permstate}). By the correspondence (\ref{1st2nd}), it is easy to show that in second quantization, this state has the expression
\begin{eqnarray}\label{permstate2}
  |\Phi\rangle &=& \prod_{i=1}^N a_{\phi_i}^\dagger |vac\rangle .
\end{eqnarray}
This form should reminds us of the standard Fock states. They are also permanent states, but with orthonormal orbitals.

Here we emphasize that in this paper \emph{we abandon the orthogonality and normalization of the orbitals}. This is a fundamental difference between the current approach and the multiconfigurational Hartree theory for bosons (MCHB) \cite{cederbaum}, which although also treats the single-particle orbitals as variational parameters, insists on their  orthogonality and normalization. One apparent advantage of using non-orthogonal orbitals is that the expression of the wave function is more \emph{compact} \cite{farid}---A single permanent state in the form of (\ref{permstate2}) would expand into a multitude of Fock states if the orbitals $\phi_i $ are expanded in terms of an orthonormal basis. The downside is that we have to compute the permanents of overlap matrices, which is expensive in CPU time.

\subsection{Five simple propositions}
At least five simple facts about a permanent state can be easily established. Although some of them are not much used in this paper, we collect them all here for completeness.

In the fermionic case, because of a well-known property of the determinant, the determinant state (\ref{slater}) vanishes identically if the $N$ orbitals are not linearly independent. However, such a perfect cancellation cannot occur for the permanent state in (\ref{permstate}) or (\ref{permstate2}) in the case of bosons. Although this might not be surprising as the terms in (\ref{permstate}) are all of the same sign, we have formulated it as a proposition, which will be referred to later.
\begin{Proposition}\label{prop0}
The $N$-particle permanent state $\Phi$ constructed with $N$ nonzero orbitals $\{ \phi_{1\leq i \leq N}\}$ according to (\ref{permstate})
is necessarily non-vanishing.
\end{Proposition}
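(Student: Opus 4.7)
The cleanest route is to pass to the second-quantized form $|\Phi\rangle = \prod_{i=1}^N a_{\phi_i}^\dagger |vac\rangle$ from Eq.~(\ref{permstate2}) and argue by induction on $N$. The crucial observation is that, unlike the fermionic case, the bosonic commutator $[a_\phi, a_\phi^\dagger] = \langle \phi|\phi\rangle$ is \emph{strictly positive} whenever $\phi$ is nonzero; this obstructs exactly the cancellation mechanism that kills a Slater determinant built from linearly dependent orbitals.

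Concretely, I would first dispose of the base case $N=1$, where $a_{\phi_1}^\dagger |vac\rangle = |\phi_1\rangle \neq 0$ by assumption. For the inductive step, assume $|\Phi_{N-1}\rangle \equiv \prod_{i=1}^{N-1} a_{\phi_i}^\dagger |vac\rangle$ is nonzero and set $|\Phi_N\rangle = a_{\phi_N}^\dagger |\Phi_{N-1}\rangle$. Using $a_{\phi_N} a_{\phi_N}^\dagger = a_{\phi_N}^\dagger a_{\phi_N} + \langle \phi_N|\phi_N\rangle$ inside $\langle \Phi_{N-1}|a_{\phi_N}a_{\phi_N}^\dagger|\Phi_{N-1}\rangle$ yields
\begin{align}
\bigl\| |\Phi_N\rangle \bigr\|^2 \;=\; \bigl\| a_{\phi_N} |\Phi_{N-1}\rangle \bigr\|^2 \;+\; \langle\phi_N|\phi_N\rangle\,\bigl\| |\Phi_{N-1}\rangle\bigr\|^2.
\end{align}
The first term is manifestly nonnegative, and the second is strictly positive by the inductive hypothesis together with $\phi_N \neq 0$. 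Hence $\||\Phi_N\rangle\|^2 > 0$ and the induction closes.

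There is no serious obstacle: the statement is essentially a one-line consequence of the fact that, on any nonzero vector in Fock space, $a_\phi^\dagger$ acts injectively as soon as $\phi \neq 0$. The only point worth flagging is that this injectivity is purely bosonic---had we used an anticommutator the two terms in the display would carry opposite signs and could conspire to vanish, recovering the familiar fermionic cancellation when orbitals coincide. An equivalently short alternative route is to note that $\langle\Phi|\Phi\rangle = \per(M)$ with $M_{ij} = \langle\phi_i|\phi_j\rangle$ the (positive semidefinite Hermitian) Gram matrix, and to invoke Marcus's inequality $\per(M) \geq \prod_i M_{ii} > 0$; the inductive proof above can be viewed as a self-contained derivation of the only instance of that inequality we actually need.
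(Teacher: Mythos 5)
Your proof is correct, but it follows a genuinely different route from the paper's. The paper pairs $\Phi$ against condensate-type states $\Omega = v(x_1)\cdots v(x_N)$, obtains $\langle\Phi|\Omega\rangle \propto \prod_i\langle\phi_i|v\rangle$, and invokes the fact that a complex vector space cannot be covered by finitely many proper subspaces to find a $v$ outside $\bigcup_i\ker(\phi_i)$; the same pairing trick is then reused to prove the uniqueness statement (Proposition~\ref{prop1}), which is presumably why the authors chose it. Your argument instead computes the norm directly via the canonical commutation relation, $\|\Phi_N\|^2 = \|a_{\phi_N}\Phi_{N-1}\|^2 + \langle\phi_N|\phi_N\rangle\,\|\Phi_{N-1}\|^2$, and closes by induction. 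This is sound (the ordering of the $a_{\phi_i}^\dagger$ is immaterial since they commute, and the $1/\sqrt{N!}$ discrepancy between Eqs.~(\ref{permstate}) and (\ref{permstate2}) is harmless), and it buys something the paper's proof does not: a quantitative lower bound $\|\Phi\|^2 \geq \prod_i\langle\phi_i|\phi_i\rangle$, i.e.\ precisely Marcus's inequality $\per(A)\geq\prod_i A_{ii}$ for the Gram matrix $A_{ij}=\langle\phi_i|\phi_j\rangle$ of Eq.~(\ref{expperma}), together with a transparent explanation of why the fermionic cancellation cannot occur. What it does not buy is the extension to the uniqueness argument, for which the paper's pairing with $\Omega$ is the natural tool.
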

\begin{proof}
Consider the inner product of $\Phi$ with the condensate-type symmetric state
\begin{eqnarray}\label{becstate}
  \Omega &=& v(x_1) v(x_2) \ldots v(x_N),
\end{eqnarray}
where $v \in \mathcal{H}$ is an arbitrary single-particle state. We have
\begin{eqnarray}\label{prodeq}
\frac{1}{\sqrt{N!}} \langle \Phi | \Omega \rangle  = \prod_{i=1}^N \langle \phi_i | v \rangle .
\end{eqnarray}
The product on the right hand vanishes if and only if $v$ is orthogonal to some $\phi_i$, or $v\in \bigcup_{i=1}^N \ker (\phi_i)$. Here by an abuse of notation, we have also used $\phi_i $ to denote the linear functional $\langle \phi_i | \cdot \rangle $. The kernel of this functional $\ker (\phi_i)$ is simply the hyperplane orthogonal to $\phi_i$. By the well-known folklore in mathematics that a vector space over $\mathbb{C}$ cannot be the union of a finite number of proper subspaces \cite{rotman}, we know the union $ \bigcup_{i=1}^N \ker (\phi_i)$ cannot cover the whole space $\mathcal{H}$, and for some $v$ the product is non-vanishing, which in turn means that $\Phi $ must be non-vanishing.
\end{proof}

The second one is about uniqueness. Given a set of orbitals $\{\phi_i, 1\leq i \leq N \}$, one can construct a permanent state according to (\ref{permstate}) or (\ref{permstate2}). One might ask whether the same permanent state can be built with a different set of orbitals. Here of course, two sets of orbitals should be deemed equivalent if they differ just by a permutation or some linear scaling. The answer is no as we have
\begin{Proposition}\label{prop1}
Suppose a non-vanishing $N$-particle permanent state $\Phi$ can be constructed with two sets of orbitals $\{ f_{1\leq i \leq N}\}$ and $\{ g_{1\leq i \leq N}\}$, i.e.,
\begin{eqnarray}\label{assumption}
   \Phi &=& \hat{\mathcal{S}}(f_1, f_2, \ldots, f_N) = \hat{\mathcal{S}}(g_1, g_2, \ldots, g_N),
\end{eqnarray}
then for some permutation $\sigma \in S_N $, $f_i \propto g_{\sigma(i)}$ for all $1\leq i\leq N$.
\end{Proposition}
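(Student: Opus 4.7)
The plan is to reduce the operator identity to a polynomial identity and then invoke unique factorization in the polynomial ring. Concretely, I would test both sides of (\ref{assumption}) against the condensate-type states $\Omega_v = v(x_1)\cdots v(x_N)$ of (\ref{becstate}), with $v \in \mathcal{H}$ arbitrary. The identity (\ref{prodeq}) then yields, for every $v$,
\begin{equation*}
\prod_{i=1}^N \langle f_i | v \rangle \;=\; \prod_{i=1}^N \langle g_i | v \rangle.
\end{equation*}
Writing $v = \sum_x v(x)|x\rangle$, each factor $\langle f_i | v \rangle = \sum_x \overline{f_i(x)}\, v(x)$ is a linear form in the $L$ complex variables $v(1), \ldots, v(L)$. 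Because $\mathbb{C}$ is infinite, the pointwise equality above upgrades to an equality of polynomials in $\mathbb{C}[v(1),\ldots, v(L)]$.

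Next I would observe that the orbitals are nonzero. Indeed, if some $f_i$ (or $g_i$) were the zero vector, every term in the symmetrized sum of (\ref{permstate}) would carry a factor of that orbital and $\Phi$ would vanish, contrary to the hypothesis. Hence each $\langle f_i|v\rangle$ and each $\langle g_i|v\rangle$ is a nonzero linear form, and in particular an irreducible element of the UFD $\mathbb{C}[v(1),\ldots,v(L)]$. Unique factorization then forces the two multisets of linear forms to coincide up to units (i.e., nonzero scalars): there exists $\sigma \in S_N$ and constants $c_i \in \mathbb{C}^\ast$ with $\langle f_i|v\rangle = c_i \langle g_{\sigma(i)}|v\rangle$ for all $v$, and hence $f_i = \bar{c}_i\, g_{\sigma(i)}$, which is precisely the claimed proportionality (with the consistency constraint $\prod_i \bar{c}_i = 1$ coming from matching the leading coefficients).

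The step I expect to be the main conceptual point is the passage from the single scalar identity among products of inner products to a factorization statement; this is where one really needs unique factorization together with irreducibility of nonzero linear forms. Everything else (the test against condensate states, the argument that all orbitals must be nonzero, and converting the linear-form equality back into a statement about $f_i$ and $g_{\sigma(i)}$) is essentially bookkeeping. A minor subtlety worth noting in writing up is the conjugation in $\langle f_i | v\rangle$, but since $v$ and its components are treated as free variables this only affects how the coefficients $c_i$ translate into proportionality constants for the orbitals themselves.
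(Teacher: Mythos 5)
Your proof is correct and takes essentially the same route as the paper's second proof of this proposition: both reduce the claim to an equality of products of nonzero linear forms in $\mathbb{C}[z_1,\ldots,z_L]$ and conclude by unique factorization, the only cosmetic difference being that you obtain the polynomial identity by pairing with the condensate states $\Omega_v$ (the device used in the paper's first proof) rather than by expanding the commuting creation operators against the linearly independent Fock basis. The auxiliary points --- that a vanishing orbital would force $\Phi=0$, and that the conjugation in $\langle f_i|v\rangle$ only reshuffles the proportionality constants --- are handled correctly.
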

\begin{proof}
We provide two proofs. The first one is elementary but lengthy. Again, let us
consider the inner product of $\Phi$ with the condensate-type state in (\ref{becstate}).
We have
\begin{eqnarray}\label{prodeq}
\frac{1}{\sqrt{N!}} \langle \Phi | \Omega \rangle  = \prod_{i=1}^N \langle f_i | v \rangle  &=& \prod_{i=1}^N \langle g_i | v \rangle.
\end{eqnarray}
Now suppose $v$ is orthogonal to $f_1$, i.e., $\langle f_1 | v \rangle = 0$, or $v \in \ker (f_1)$. The equality above implies $v \in \bigcup_{i=1}^N \ker (g_i) $. As this holds for any $v\in \ker (f_1)$, we obtain $\ker (f_1) \subseteq \bigcup_{i=1}^N \ker (g_i) $, which in turn means
\begin{eqnarray}
  \ker  (f_1) &=& \ker (f_1) \bigcap \left( \bigcup_{i=1}^N \ker (g_i)  \right) \nonumber \\
  &=& \bigcup_{i=1}^N \left( \ker (f_1) \bigcap \ker (g_i) \right ).
\end{eqnarray}
Again by the well-known folklore that a vector space cannot be the union of a finite number of proper subspaces \cite{rotman}, we know there must be some $i$ such that $\ker(f_1) = \ker (f_1) \bigcap \ker (g_i)$, or $\ker (f_1) = \ker (g_i )$, which means $f_1 \propto g_i $. By permutation or relabeling, we can assume $i=1$ and simply $f_1 = g_1$.

We then wish to factor out $f_1$ and $g_1$ in (\ref{prodeq}) to get
\begin{eqnarray}\label{prodeq2}
\prod_{i=2}^N \langle f_i | v \rangle  &=& \prod_{i=2}^N \langle g_i | v \rangle
\end{eqnarray}
for all $v\in \mathcal{H}$. This already holds for $v \notin \ker(f_1 )$. To show that it actually holds also for $v \in \ker(f_1)$, consider $v\in \ker(f_1)$ and $w \notin \ker(f_1)$. For any nonzero $t \in \mathbb{C}$, $v + t w \notin \ker(f_1)$, otherwise $w =[ ( v + t w ) - v]/t$ would be in $\ker(f_1)$. By (\ref{prodeq2}), we have then
 \begin{eqnarray}\label{prodeq3}
\prod_{i=2}^N \langle f_i | v + t w  \rangle  &=& \prod_{i=2}^N \langle g_i | v + t w \rangle
\end{eqnarray}
for all $t\neq 0 $. That is, two polynomials of $t$ evaluate to the same value for all $t\neq 0$. This means that the two polynomials are actually the same. In particular, their constant terms are identical. We have thus proven that (\ref{prodeq2}) holds for all $v\in \ker(f_1)$ too. The proposition is then proven by induction.

The second proof is more direct. Let $|f_i \rangle = \sum_{j=1}^L f_{ij} |j \rangle $ and similarly $|g_i \rangle = \sum_{j=1}^L g_{ij} |j \rangle $. By (\ref{permstate2}) and (\ref{assumption}),
\begin{eqnarray}
  \Phi  &=& \prod_{i=1}^N \left(  \sum_{j=1}^L f_{ij }a_{j}^\dagger \right ) |vac\rangle=  \prod_{i=1}^N \left(  \sum_{j=1}^L g_{ij }a_{j}^\dagger \right ) |vac\rangle.\quad
\end{eqnarray}
The fact that the $a_j^\dagger$ operators commute and the Fock basis states are linearly independent implies the polynomial equality
\begin{eqnarray}
   \prod_{i=1}^N \left(  \sum_{j=1}^L f_{ij }z_{j} \right ) =  \prod_{i=1}^N \left(  \sum_{j=1}^L g_{ij }z_{j} \right ) ,
\end{eqnarray}
where $z_j $ are indeterminates. The proposition is proven by the unique factorization theorem of multivariate polynomials over $\mathbb{C}$ \cite{fraleigh}.
\end{proof}

To appreciate Proposition \ref{prop1}, one should note that for fermions, a Slater determinant state constructed out of $N$ orbitals is  determined by  the subspace (a point on the so-called Grassmannian manifold \cite{aoto}) spanned by the orbitals, not by the orbitals themselves. The orbitals are just a basis of the subspace. Another basis would yield the same Slater determinant state up to a global constant.

By Proposition \ref{prop1}, the permanent state $\Phi $ is invariant under the permutation $f_i \rightarrow f_{\sigma_i}$ with $\sigma $ being an arbitrary permutation, and the scaling $f_i \rightarrow \lambda_i f_i $ with the scaling factors satisfying the condition $\prod_{i=1}^N \lambda_i = 1 $. This allows us to count the degrees of freedom \cite{manifold} of a permanent state as
\begin{eqnarray}\label{dimperm}
  d = NL - (N-1) = N(L-1)+1.
\end{eqnarray}

Apparently, like not every $N$-fermion state is a Slater determinant state, not every $N$-boson state is a permanent state. However, in the special case of $\dim \mathcal{H} = L= 2$, this is indeed the case. We have
\begin{Proposition}\label{prop2}
Suppose the single-particle Hilbert space is of dimension 2, i.e., $\dim \mathcal{H} = L= 2$, then every $N$-boson state is a permanent state.
\end{Proposition}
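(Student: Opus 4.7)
The plan is to translate the problem into a factorization question for bivariate homogeneous polynomials, where the fundamental theorem of algebra supplies the result almost immediately.

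First I would write the most general $N$-boson state in the $L=2$ Hilbert space, expanded in the Fock basis, as
\[
|\Psi\rangle = \sum_{k=0}^{N} d_k\, (a_1^\dagger)^k (a_2^\dagger)^{N-k} |vac\rangle \equiv P(a_1^\dagger, a_2^\dagger)\,|vac\rangle,
\]
where $P(z_1,z_2)$ is a homogeneous polynomial of degree $N$ in two commuting indeterminates. An $N$-orbital permanent state (\ref{permstate2}) with $|\phi_i\rangle = \alpha_i|1\rangle+\beta_i|2\rangle$ takes the form $\prod_{i=1}^{N}(\alpha_i a_1^\dagger + \beta_i a_2^\dagger)|vac\rangle$, which corresponds to a $P$ that factors into $N$ linear forms. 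The proposition thus reduces to showing that every homogeneous polynomial of degree $N$ in $\mathbb{C}[z_1,z_2]$ splits into $N$ linear factors.

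To carry out that factorization, let $m\geq 0$ be the largest integer with $z_2^m\mid P$ and write $P(z_1,z_2)=z_2^m \tilde P(z_1,z_2)$. By maximality the coefficient of $z_1^{N-m}$ in $\tilde P$ is nonzero, so $\tilde P(z_1,1)$ is a univariate polynomial of degree exactly $N-m$; by the fundamental theorem of algebra it factors as $c\prod_{i=1}^{N-m}(z_1-r_i)$, and rehomogenizing yields $P(z_1,z_2)= c\prod_{i=1}^{N}(\alpha_i z_1+\beta_i z_2)$, with $m$ of the factors equal to $z_2$. Setting $|\phi_i\rangle=\alpha_i|1\rangle+\beta_i|2\rangle$ and absorbing $c$ into any one of the orbitals gives $|\Psi\rangle=\prod_{i=1}^{N}a_{\phi_i}^\dagger|vac\rangle$, which by (\ref{permstate2}) is precisely the permanent state $\hat{\mathcal{S}}(\phi_1,\ldots,\phi_N)$. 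There is essentially no obstacle in this argument; the only delicacy is a possibly vanishing leading coefficient of $P$, handled by first pulling out the maximal power of $z_2$ that divides it.
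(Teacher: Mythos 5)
Your proof is correct and is essentially the paper's own argument: both pull out the maximal power of $a_2^\dagger$ (your $z_2^m$), reduce to a univariate polynomial via the commutativity of the creation operators, and invoke the fundamental theorem of algebra to split it into linear factors that are then read off as the orbitals. The only difference is cosmetic---you phrase the step through explicit dehomogenization and rehomogenization of a bivariate homogeneous polynomial, while the paper writes the dehomogenized polynomial $P(z)=\sum_k c_k z^{N-m-k}$ directly.
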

\begin{proof}
In this case, the $N$-boson Hilbert space is of dimension $N+1$ and a basis is the Fock states $\{(a_1^\dagger)^i (a_2^\dagger)^{N-i} |vac\rangle, 0\leq i \leq N  \}$. But by (\ref{dimperm}), the number of degrees of freedom of a permanent state is also $N+1$, so the proposition is anticipated by dimension counting. To prove it rigorously, we expand an arbitrary state $\Psi$  as
\begin{eqnarray}
  \Psi &=& (a_2^\dagger)^m \sum_{k=0}^{N-m } c_k (a_1^\dagger )^{N-m-k} (a_2^\dagger )^k | vac \rangle .
\end{eqnarray}
Here $m$ is an integer between $0$ and $N$, and $c_0 \neq 0 $. The fact that $a_1^\dagger$ and $a_2^\dagger $ commute motivates us to consider the polynomial $P(z) =  \sum_{k=0}^{N-m } c_k z^{N-m - k }$. Let it
factorize as
 \begin{eqnarray}
   P(z) &=& c_0  \prod_{j=1}^{N-m } (z- z_j).
 \end{eqnarray}
Then it is easy to see that $\Psi$ factorize as
\begin{eqnarray}
  \Psi  &=& c_0  (a_2^\dagger)^m \prod_{j=1}^{N-m } (a_1^\dagger - z_j a_2^\dagger )|vac \rangle .
\end{eqnarray}
By (\ref{permstate2}), we see it is a permanent state and can read off the single-particle orbitals.
\end{proof}

Proposition \ref{prop2} means that for a two-site Bose-Hubbard model \cite{foerster}, which is a canonical model for studying the Bose-Josephson effect, any state is a permanent state. The proposition also reminds us of a similar proposition for fermions \cite{zhang1,zhang2}. Although an arbitrary fermionic wave function is  not generally a Slater determinant state, for the case of $N$ fermions in $L= N+1$ orbitals, the wave function is necessarily a Slater determinant.

Proposition \ref{prop2} is about the case when the dimension of the single-particle Hilbert space is minimal (but still nontrivial). Similarly, when the particle number is minimal (but still nontrivial), i.e., when $N=2$, we have
\begin{Proposition}\label{prop3}
Suppose $N=2$ and $\dim \mathcal{H} = L $, then every $N$-boson state can be written as the sum of at most $\floor{(L+1)/2}$ permanent states. Here $\floor{x}$ is the floor function denoting the greatest integer less than or equal to $x$.
\end{Proposition}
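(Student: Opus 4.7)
The plan is to translate the statement into a matrix-decomposition problem, then invoke a classical diagonalization result and pair up the terms appropriately. Writing the state in second quantization as $|\Psi\rangle = \sum_{i,j=1}^L M_{ij}\, a_i^\dagger a_j^\dagger |vac\rangle$ and using $[a_i^\dagger, a_j^\dagger]=0$, I may assume without loss of generality that $M$ is a complex symmetric matrix. By (\ref{permstate2}), a single permanent state $a_f^\dagger a_g^\dagger |vac\rangle$ corresponds (up to an overall factor) to the symmetric matrix $fg^T+gf^T$, which has rank at most two. The proposition is therefore equivalent to the claim that every $L\times L$ complex symmetric matrix $M$ can be written as a sum of at most $\floor*{(L+1)/2}$ matrices of the form $fg^T+gf^T$.

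For this, I would invoke the Autonne--Takagi decomposition of a complex symmetric matrix: there exist a unitary $U$ and a non-negative real diagonal $D$ with $M = U D U^T$, so that $M = \sum_{k=1}^{r} d_k u_k u_k^T$ with $r = \rank(M)$, $d_k > 0$, and $\{u_k\}$ orthonormal. Next I pair consecutive summands: within the plane spanned by $u_{2j-1}$ and $u_{2j}$, seek $f = a u_{2j-1}+b u_{2j}$ and $g = c u_{2j-1}+d u_{2j}$ such that $fg^T+gf^T = d_{2j-1} u_{2j-1} u_{2j-1}^T + d_{2j} u_{2j} u_{2j}^T$. Expanding gives the three-equation system $2ac = d_{2j-1}$, $2bd = d_{2j}$, $ad+bc=0$, which admits the explicit solution $a = c = \sqrt{d_{2j-1}/2}$, $b = -d = i\sqrt{d_{2j}/2}$. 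If $r$ is odd, the leftover Takagi term $d_r u_r u_r^T$ is itself of the required form, corresponding to a permanent state with $f = g$ proportional to $u_r$. The total count is $\lceil r/2 \rceil \leq \lceil L/2 \rceil = \floor*{(L+1)/2}$.

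I do not anticipate any serious obstacle: the essential content is the availability of Autonne--Takagi for complex symmetric (as opposed to merely Hermitian) matrices, together with the elementary pairing calculation above. If one preferred to avoid citing this decomposition, the same conclusion could be reached inductively --- pick a unit vector $u$ in the image of $M$, subtract an appropriate rank-one symmetric piece $\lambda u u^T$ to lower the rank by one, iterate, and then pair the resulting rank-one pieces as before --- yielding essentially the same argument.
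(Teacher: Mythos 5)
Your proposal is correct and follows essentially the same route as the paper: both reduce the statement to decomposing the complex symmetric matrix $\Psi$, invoke the Autonne--Takagi factorization $\Psi=\sum_j \sqrt{D_j}\,f_jf_j^T$, and then pair consecutive rank-one terms using complex combinations with a factor of $i$ to cancel the cross terms (your $a=c=\sqrt{d_{2j-1}/2}$, $b=-d=i\sqrt{d_{2j}/2}$ is, up to normalization, exactly the paper's choice of $u,v$). The only cosmetic difference is that you phrase the pairing as a small linear system in second-quantized matrix form rather than writing the orbitals directly, and you note the sharper bound $\lceil \rank(\Psi)/2\rceil$, which the paper only uses implicitly.
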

\begin{proof} For $N=2$, the bosonic wave function $\Psi(x_1, x_2)$ with $1\leq x_i \leq L$ can be considered as a complex symmetric matrix. By the Autonne-Takagi theorem \cite{horn}, it can be factorized as
\begin{eqnarray}\label{takagi}
  \Psi (x_1, x_2) &=& \sum_{j=1}^L \sqrt{D_j} f_j(x_1) f_j(x_2) ,
\end{eqnarray}
where $f_j $ are a set of orthonormal functions and $D_j $ are a set of non-negative numbers in decreasing order. The functions $f_j$ are actually the so-called natural orbitals, i.e., eigenvectors of the one-body reduced density matrix $\rho = 2 \Psi \Psi^\dagger$ associated with $\Psi$, and $2 D_j $ are the occupation numbers. In the following, we shall assume that $\Psi $ is normalized, i.e., $\langle \Psi | \Psi\rangle =1$,  so that $\sum_{j=1}^{L} D_j =  1$.

In the form of (\ref{takagi}), the wave function $\Psi$ is already a sum of $L $ permanent states.  Now for any pair  $j\neq k $, we can combine $\sqrt{D_j} f_j(x_1) f_j(x_2)$ and $ \sqrt{D_k}f_k(x_1) f_k(x_2)$ into a single permanent state, i.e.,
\begin{eqnarray}
\hat{\mathcal{S}}(u,v) =  \frac{1}{\sqrt{2}}\left [ u(x_1)v (x_2) + v(x_1) u(x_2) \right ]\nonumber
\end{eqnarray}
with (note that by Proposition \ref{prop1}, this is essentially the only solution)
\begin{eqnarray}
  u & \equiv & \frac{1}{\sqrt[4]{2}}(\sqrt[4]{D_j}f_j + i \sqrt[4]{D_k }f_k) , \nonumber \\
   v &\equiv & \frac{1}{\sqrt[4]{2}}(\sqrt[4]{D_j}f_j - i \sqrt[4]{D_k }f_k). \nonumber
\end{eqnarray}
The proposition is then proven by noting that in (\ref{takagi}), when $L$ is even, we have $L/2$ pairs and when $L$ is odd, we have $( L-1)/2$ pairs and an extra unpaired term, which is already a permanent state.
\end{proof}

It is easy to see that the number $\floor{(L+1)/2}$ in Proposition \ref{prop3} is not only sufficient but also necessary for a generic $2$-boson state. For a generic state, the matrix $\Psi $ is nonsingular or full-ranked. On the other hand, a permanent state is at most 2-ranked. Hence, by the subadditivity property of the rank of a matrix [$\rank(A+B)\leq \rank(A) + \rank(B)$], we need at least $\floor{(L+1)/2}$ permanent states to fully recover the original state. However, in practice, one might just need a sufficiently good approximation of the original state. The question is then, by taking the sum of $M< \floor{(L+1)/2} $ permanent states, i.e., by constructing a state in the form of
\begin{eqnarray}\label{omega0}
  \Omega &=& \sum_{\alpha=1}^M \hat{\mathcal{S}}( u^{(\alpha)} , v^{(\alpha)} ) ,
\end{eqnarray}
where $u^{(\alpha)}$ and $v^{(\alpha)}$ are arbitrary orbitals, to what extent can we approximate a target function $\Psi$? Quantitatively, what is the maximal value of the overlap
\begin{eqnarray}
  O &=& \frac{|\langle \Omega | \Psi \rangle|^2}{\langle \Omega | \Omega \rangle\langle \Psi | \Psi \rangle}
\end{eqnarray}
achievable with such an $M$-configuration state? For this problem, we have
\begin{Proposition}\label{prop4}
Suppose $N=2$, $\dim \mathcal{H} = L $, and $M \leq \floor{(L+1)/2} $. For the target state $\Psi $ in (\ref{takagi}), the largest possible value of the overlap of an $M$-configuration state in the form of (\ref{omega0}) with it is
\begin{eqnarray}
  O_{max} &=& \max_{\Omega }  \frac{|\langle \Omega | \Psi \rangle|^2}{\langle \Omega | \Omega \rangle\langle \Psi | \Psi \rangle} = \sum_{j=1}^{2M} D_j .
\end{eqnarray}
\end{Proposition}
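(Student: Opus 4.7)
The plan is to reinterpret the problem as a best low-rank approximation of a symmetric matrix. Since $N=2$, I identify each wave function with its $L\times L$ coefficient matrix, which is complex symmetric. Under this identification, the wave-function inner product becomes the Frobenius inner product $\Tr(\Omega^\dagger\Psi)$, and a single permanent $\hat{\mathcal{S}}(u,v)$ corresponds to the rank-at-most-$2$ matrix $\tfrac{1}{\sqrt{2}}(uv^T+vu^T)$. Consequently every $M$-configuration state in (\ref{omega0}) has matrix rank at most $2M$. Moreover, the converse also holds: applying the Autonne-Takagi decomposition to $\Omega$ and then invoking the pairing trick from the proof of Proposition \ref{prop3} shows that any complex symmetric matrix of rank $\leq 2M$ can be written as a sum of $M$ permanent states (with any unpaired diagonal term $\sqrt{\mu}\,g g^T$ expressible as $\hat{\mathcal{S}}((\mu/2)^{1/4}g,(\mu/2)^{1/4}g)$, and zero permanents used for padding). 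Thus the feasible set is precisely the set of complex symmetric matrices of rank at most $2M$.

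With this reformulation I would invoke von Neumann's trace inequality: for any $\Omega$ of rank $\leq 2M$,
\[
|\Tr(\Omega^\dagger\Psi)| \leq \sum_{i=1}^{2M}\sigma_i(\Omega)\,\sigma_i(\Psi) = \sum_{i=1}^{2M}\sigma_i(\Omega)\sqrt{D_i},
\]
where the sum truncates at $2M$ because $\sigma_i(\Omega)=0$ for $i>2M$, and where the singular values of the symmetric matrix $\Psi$ coincide with its Takagi values $\sqrt{D_i}$. A single application of Cauchy-Schwarz yields $|\langle\Omega|\Psi\rangle|^2 \leq \|\Omega\|^2\sum_{i=1}^{2M}D_i$, and dividing through by $\|\Omega\|^2\|\Psi\|^2 = \|\Omega\|^2$ gives $O \leq \sum_{i=1}^{2M}D_i$.

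Saturation is witnessed by the truncated Takagi sum $\Omega_\star(x_1,x_2) = \sum_{j=1}^{2M}\sqrt{D_j}\,f_j(x_1)f_j(x_2)$: it has matrix rank $\leq 2M$ and is therefore a valid $M$-configuration state by the first paragraph. A direct calculation using the orthonormality of the $f_j$'s gives $\langle\Omega_\star|\Psi\rangle = \|\Omega_\star\|^2 = \sum_{j=1}^{2M}D_j$, so the overlap evaluates to $\sum_{j=1}^{2M}D_j$. The step I expect to be the most delicate is the converse direction of the rank characterization: one must guarantee that every symmetric matrix of rank $\leq 2M$ really is expressible as $M$ permanents, so that the optimization genuinely coincides with the unrestricted rank-$2M$ approximation problem rather than a strictly smaller feasible set. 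Fortunately, the explicit pairing construction in the proof of Proposition \ref{prop3} handles this cleanly.
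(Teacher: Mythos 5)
Your proof is correct, but it reaches the bound by a different route than the paper. The paper also Takagi-factorizes $\Omega$ as $\sum_{j=1}^{2M}\sqrt{C_j}\,\varphi_j(x_1)\varphi_j(x_2)$, but then estimates the overlap in two Cauchy--Schwarz steps (first over the coefficients $\sqrt{C_j}$, then by enlarging $\sum_j|\langle\varphi_j\varphi_j|\Psi\rangle|^2$ to $\sum_j\sum_k|\langle\varphi_j\varphi_k|\Psi\rangle|^2=\Tr(V^\dagger\Psi\Psi^\dagger V)$) and closes with the Ky--Fan inequality for the Hermitian matrix $\Psi\Psi^\dagger$, which bounds a sum of $2M$ diagonal entries by the $2M$ largest eigenvalues $D_j$. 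You instead reduce everything to a best rank-$2M$ approximation statement and invoke von Neumann's trace inequality $|\Tr(\Omega^\dagger\Psi)|\leq\sum_i\sigma_i(\Omega)\sigma_i(\Psi)$ followed by a single Cauchy--Schwarz; the truncation at $2M$ is then automatic from $\sigma_i(\Omega)=0$ for $i>2M$, and the identification $\sigma_i(\Psi)=\sqrt{D_i}$ is exactly the Takagi/singular-value coincidence for complex symmetric matrices. The two arguments are close cousins (both Ky--Fan and von Neumann are majorization statements for spectra), but yours is somewhat more streamlined and makes explicit a fact the paper leaves implicit: by combining the rank bound with the pairing construction of Proposition \ref{prop3}, the feasible set is \emph{exactly} the complex symmetric matrices of rank at most $2M$, so the problem is the unrestricted symmetric low-rank approximation problem. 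The paper only needs that converse for the achievability step (as do you), but stating it up front clarifies why the answer is the Eckart--Young-type truncation. Your treatment of the unpaired diagonal term $\sqrt{\mu}\,gg^T=\hat{\mathcal{S}}\bigl((\mu/2)^{1/4}g,(\mu/2)^{1/4}g\bigr)$ and the normalization bookkeeping ($\|\Psi\|^2=\sum_jD_j=1$) are both correct.
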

\begin{proof}
Like $\Psi $ in (\ref{takagi}), the $M$-configuration state $\Omega$ in (\ref{omega0}) is also symmetric and can also be factorized as
\begin{eqnarray}\label{omega}
  \Omega &=& \sum_{j = 1}^{2M } \sqrt{C_j } \varphi_j(x_1) \varphi_j(x_2),
\end{eqnarray}
where $\varphi_{1\leq j\leq 2M }$ are a set of orthonormal vectors which can be extended into a complete orthonormal basis $\varphi_{1\leq j \leq L }$, and $C_j \geq 0 $ are ordered in decreasing order. Without loss of generality, let us assume $\Omega $ is normalized so that $\sum_{j=1}^{2M} C_j = 1 $. Note that by construction $\Omega$ is at most of rank $2M $,  and thus here in (\ref{omega}) we have at most $2M $ nonzero terms. For the overlap between $\Omega $ and $\Psi$, we have
\begin{eqnarray}
 O &=&  \left |\sum_{j=1}^{2M}\sqrt{C_j }\langle \varphi_j \varphi_j  | \Psi \rangle \right |^2 \nonumber \\
  &\leq & \left (\sum_{j=1}^{2M} C_j \right) \left(\sum_{j=1}^{2M} |\langle \varphi_j \varphi_j  | \Psi \rangle |^2 \right) \nonumber \\
  & = & \sum_{j=1}^{2M} |\langle \varphi_j \varphi_j  | \Psi \rangle |^2 \leq    \sum_{j=1}^{2M}\sum_{k=1}^L  |\langle \varphi_j \varphi_k  | \Psi \rangle |^2 = \Tr(A A^\dagger),  \nonumber
\end{eqnarray}
where $A \equiv  V^\dagger \Psi W^*$,  with $V$ the $L \times 2 M$ matrix whose columns are $\varphi_{1\leq j \leq 2M }$ and $W$ the $L \times L$ matrix whose columns are $\varphi_{1\leq k \leq L }$. Note that $W$ is unitary. We have
\begin{eqnarray}
  \Tr(A A^\dagger) &=& \Tr(V^\dagger \Psi W^* W^T \Psi^\dagger V ) \nonumber \\
  &= & \Tr(V^\dagger \Psi \Psi^\dagger V ) \leq \sum_{j=1}^{2M } D_j .
\end{eqnarray}
Here we used the Ky-Fan inequality \cite{bhatia} for the matrix $\Psi \Psi^\dagger $, which states the for an $L\times L $ hermitian matrix, the sum of its $m $ diagonal elements is less than or equal to the sum of its $m$ largest eigenvalues, for all $  1\leq m \leq  L$. We have thus proven that the overlap is upper bounded by $\sum_{j=1}^{2M } D_j$. That this upper bound can be achieved is obvious, as we can just take the first $2M $ terms in (\ref{takagi}) and by Proposition \ref{prop3} it is an $M$-configuration state.
\end{proof}
In general, the occupation numbers  $D_j $ decrease fast and a truncation of (\ref{takagi}) with a very limited number of terms can yield a good approximation of the original state.

\subsection{Basic formulae}\label{secformulae}
A generic many-body Hamiltonian is of the form
\begin{eqnarray}\label{1st}
  H = H_1 + H_2 =  \sum_{i=1}^N K(i ) + \sum_{1\leq i < j \leq N} U(i,j ) .
\end{eqnarray}
Here the first sum is over each particle, with $K(i)$ denoting the sum of the kinetic energy and the external potential of the $i$th particle, while the second sum is over each pair, with $U(i,j )$ denoting the interaction between the $i$th and $j$th particle. The quantity of primary interest is the expectation value of $H$ with respect to a permanent state (\ref{permstate}), i.e.,
\begin{eqnarray}
  E_{var} &=& \frac{\langle \Phi |H| \Phi \rangle}{\langle \Phi | \Phi \rangle }=\frac{\langle \Phi |H_1| \Phi \rangle + \langle \Phi |H_2 | \Phi \rangle}{\langle \Phi | \Phi \rangle } .
\end{eqnarray}
It is straightforward to calculate the denominator and the numerator here. But for the sake of generality and in view of the extension to the multiconfigurational case below, let us consider the off-diagonal matrix elements instead of the diagonal matrix elements. Suppose $\Phi^{(1)}$ and $\Phi^{(2)}$ are two permanent states constructed with two sets of orbitals $\{ \phi^{(\alpha )}_{1\leq i \leq N }, \alpha = 1,2 \}$, i.e.,
\begin{eqnarray}
\Phi^{(\alpha)} &=&  \hat{\mathcal{S}}(\phi_1^{(\alpha)}, \phi_2^{(\alpha)}, \ldots, \phi_N^{(\alpha)}),\quad \alpha = 1, 2 .
\end{eqnarray}
Let us first define the $N\times N $ overlap matrix of the orbitals,
\begin{eqnarray}
  A_{ij} &=& \langle \phi_i^{(1)} | \phi_j^{(2)} \rangle   , \quad 1 \leq i, j \leq N .
\end{eqnarray}
We have then
\begin{eqnarray}\label{expperma}
 & & \langle \Phi^{(1)} | \Phi^{(2)} \rangle \nonumber \\
   &=& \frac{1}{N!}\sum_{P,Q\in S_N} \langle \phi^{(1)}_{P_1}|\phi^{(2)}_{Q_1}\rangle \langle \phi^{(1)}_{P_2}|\phi^{(2)}_{Q_2}\rangle\ldots \langle \phi^{(1)}_{P_N}|\phi^{(2)}_{Q_N}\rangle \nonumber \\
   &=& \sum_{R\in S_N} \langle \phi^{(1)}_{1}|\phi^{(2)}_{R_1}\rangle \langle \phi^{(1)}_{2}|\phi^{(2)}_{R_2}\rangle\ldots \langle \phi^{(1)}_{N}|\phi^{(2)}_{R_N}\rangle \nonumber \\
  &=&  \per  (A) .
\end{eqnarray}
Here $\per (A)$ denotes the permanent of the matrix $A$.
As for the matrix elements of $H_1$,
\begin{eqnarray}\label{exph1}
 \langle \Phi^{(1)} |H_1 |  \Phi^{(2)} \rangle  &=& \frac{N}{N!}\sum_{P,Q\in S_N} \langle \phi^{(1)}_{P_1}|K|\phi^{(2)}_{Q_1}\rangle \prod_{i=2}^N \langle \phi^{(1)}_{P_i}|\phi^{(2)}_{Q_i}\rangle  \nonumber \\
  &=& \sum_{i_1= 1 }^N\sum_{j_1 = 1 }^N \langle \phi_{i_1}^{(1)} |K | \phi_{j_1}^{(2)} \rangle  \per (A;i_1|j_1) , \quad \quad
\end{eqnarray}
where $\per (A;i_1|j_1 )$ denotes the permanent of the $(N-1)\times (N-1)$ minor obtained  by deleting the $i_1$th row and the $j_1$th column from $A$. Similarly, for the matrix elements of $H_2 $,
\begin{eqnarray}\label{exph2}
&&  \langle \Phi^{(1)} |H_2 |  \Phi^{(2)} \rangle  \nonumber \\
 &=& \frac{N(N-1)}{2 (N!) }\sum_{P,Q\in S_N} \langle \phi^{(1)}_{P_1}\phi^{(1)}_{P_2}|U|\phi^{(2)}_{Q_1}\phi^{(2)}_{Q_2}\rangle \prod_{i=3}^N \langle \phi^{(1)}_{P_i}|\phi^{(2)}_{Q_i}\rangle  \nonumber \\
  &=& \frac{1}{2} \sum_{i_1\neq i_2,1}^N \sum_{j_1\neq j_2,1}^N \langle \phi^{(1)}_{i_1} \phi^{(1)}_{i_2}|U | \phi^{(2)}_{j_1} \phi^{(2)}_{j_2} \rangle \nonumber \\
  & & \quad \quad \quad\quad \quad \quad\quad \quad \quad \times  \per (A;i_1, i_2| j_1, j_2) .
\end{eqnarray}
Here in the last line, the summation $\sum_{i_1\neq i_2,1}^N $ means that $i_1$ and $i_2$ both run from $1$ to $N$, but they must take different values. The second summation is interpreted similarly. By $\per (A;i_1, i_2| j_1, j_2)  $ we mean the permanent of the $(N-2)\times (N-2)$ minor of $A$ obtained by deleting row $i_1$, $i_2$ and column $j_1$, $j_2$.

Above we see that to calculate the norm and physical expectation values of a permanent state, we have to calculate the permanent of the overlap matrix $A$ and those of its minors. This is the price we have to pay for working with non-orthonormal orbitals.

As is generally believed, unlike the determinant of a matrix,  the permanent of the matrix cannot be calculated in polynomial time. Currently, the best known general exact algorithm is the Ryser algorithm \cite{ryser}, which reduces the naive
$N \cdot N!$ evaluations to $O(N^2 2^{N-1})$. By using the Gray code, a further reduction by a factor of $N$ can be achieved \cite{fastryser}. This is the algorithm we use in this work \cite{code}. With this algorithm, it takes about 1 sec ($0.01$ sec) to calculate the permanent of a $22\times 22 $ ($15\times 15 $, respectively) real-valued matrix on a commercial laptop computer and with MATLAB. We mention that in this paper, all calculation is done with MATLAB but without invoking the parallel computing toolbox.

\section{Bose-Hubbard model at unit filling}\label{secbhm}

To see whether a permanent wave function can be a good approximation of the ground state of a bosonic system, we take the one-dimensional Bose-Hubbard model with the periodic boundary condition and at unit filling as a case study. The Hamiltonian, as is often written in the second-quantization formalism, reads
 \begin{eqnarray}
   H &=& - \sum_{x=0}^{L-1} (a_x^\dagger a_{x+1} + a_{x+1}^\dagger a_{x}) + \frac{g}{2} \sum_{x=0}^{L-1}  a_x^\dagger a_x^\dagger a_x a_x .\quad
 \end{eqnarray}
Here the single-particle Hilbert space $\mathcal{H}$ is spanned by the orthonormal site (Wannier) states $\{ |x \rangle , 0 \leq x \leq L-1 \}$. The periodic boundary condition means $|x \rangle = | x + L \rangle $. Note that we have taken  the hopping strength as the unit of energy, and the Hamiltonian depends only on the parameter $g\geq 0 $, which characterizes the on-site interaction strength.

For our purpose, it is often more convenient to work with the first-quantization formalism of (\ref{1st}).
The corresponding $K$ and $U$ operators have matrix elements as
\begin{eqnarray}
   K_{x,x'}  &=& -(\delta_{x, x'+1} + \delta_{x, x'-1}),  \\
  U_{x_1 x_2, x_1' x_2'} &=& g \delta_{x_1 , x_2} \delta_{x_1 , x_1'} \delta_{x_1 , x_2'} .  \label{kandu}
\end{eqnarray}

In this section, we shall confine ourself to the unit filling case, namely, the case when the particle number $N= L$. The commensurate condition and the translation symmetry suggest putting the $i$th particle in an orbital centered at site $i$, with all the orbitals of the same shape and related to each other by translations. Specifically,
\begin{eqnarray}\label{shape}
  \phi_i(x) &=& \phi(x - i ) ,\quad \quad  1 \leq i \leq N ,
\end{eqnarray}
for some function $\phi$. Here the periodic boundary condition requires $\phi(x) = \phi(x+L )$. The total variational wave function (VWF) is
then determined by the single-particle orbital $\phi $. To construct an $L$-periodic function $\phi $, we can choose an arbitrary primitive function $\chi (x)$ defined on the whole axis $-\infty < x < +\infty$, and form the superposition
\begin{eqnarray}\label{infsum}
  \phi(x) &=& \sum_{j= -\infty }^\infty \chi(x - j L ) .
\end{eqnarray}
We have tried two types of primitive functions, i.e., the Lorentz function and the exponential function,
\begin{subequations}\label{primitive}
\begin{eqnarray}
  \chi^{(l)}(x;\lambda) &=& (1 + \lambda^2 x^2)^{-1},  \\
  \chi^{(e)}(x; \lambda ) &=& e^{- \lambda |x| } ,
\end{eqnarray}
\end{subequations}
where $\lambda \geq 0 $ is a parameter controlling the width of the functions.
For these two simple types of primitive functions, the summation in (\ref{infsum}) can be carried out analytically and yields
\begin{subequations}
\begin{eqnarray}
  \phi^{(l)}(x;\lambda)  &=& \frac{1- e^{-4\pi/L\lambda }}{|1- e^{-2\pi/L\lambda } e^{i 2 \pi x /L }|^2},  \\
 \phi^{(e)}(x;\lambda)   &=&  \frac{e^{-\lambda x } + e^{-\lambda (L- x)}}{1 - e^{-\lambda L }}, \quad 0 \leq x \leq L .
\end{eqnarray}
\end{subequations}
Note that in the limit of $\lambda \rightarrow 0 $, both $\phi^{(l)}$ and $\phi^{(e)}$ reduce to the zero-momentum Bloch state on the periodic lattice, while in the opposite limit of $\lambda \rightarrow \infty$, both of them reduce to the Kronecker delta function $\delta_{x,0}$.

\begin{figure}[tb]
\includegraphics[width= 0.45\textwidth ]{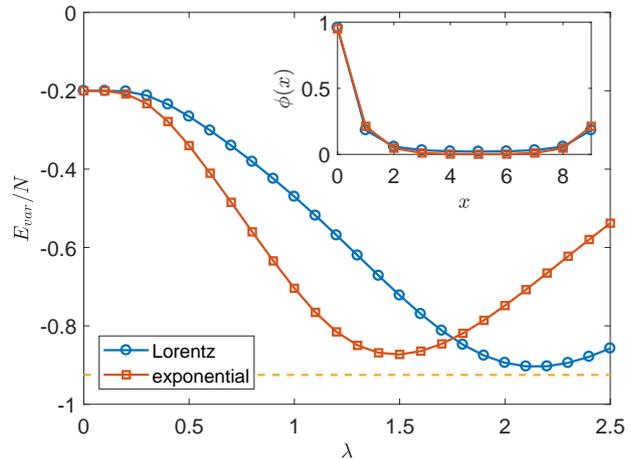}
\caption{(Color online) Energy expectation value (per particle) of the permanent variational state constructed with either the Lorentz-type (\ref{primitive}a) or the exponential-type (\ref{primitive}b) primitive orbitals. The horizontal dashed line indicates the exact ground state energy calculated by exact diagonalization. The inset shows the optimal orbitals corresponding to the minima of the curves $E_{var}(\lambda)$. The parameters are $N=L = 10$ and $g= 4 $. }
\label{fig1}
\end{figure}

\begin{figure}[tb]
\includegraphics[width= 0.45\textwidth ]{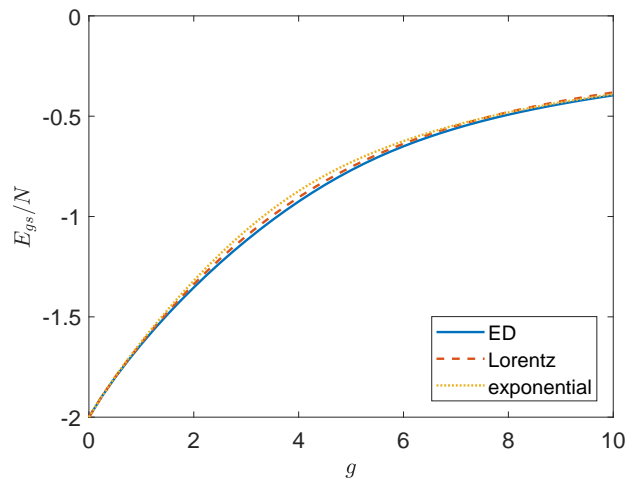}
\caption{(Color online) Ground state energy (per particle) estimated with the permanent variational state constructed with either the Lorentz-type (\ref{primitive}a) or the exponential-type (\ref{primitive}b) primitive orbitals. The solid line is the exact value obtained by exact diagonalization (ED). In the region of $g\lesssim  8$, the Lorenztian-type orbital yields a better estimate; while in the region of $g\gtrsim 8$, the exponential-type orbital is better, although this latter fact is hardly visible in the figure. The parameters are $N=L = 10$. The dimension of the Hilbert space is 92\,378. }
\label{fig2}
\end{figure}

\begin{figure*}[tb]
\includegraphics[width= 0.45\textwidth ]{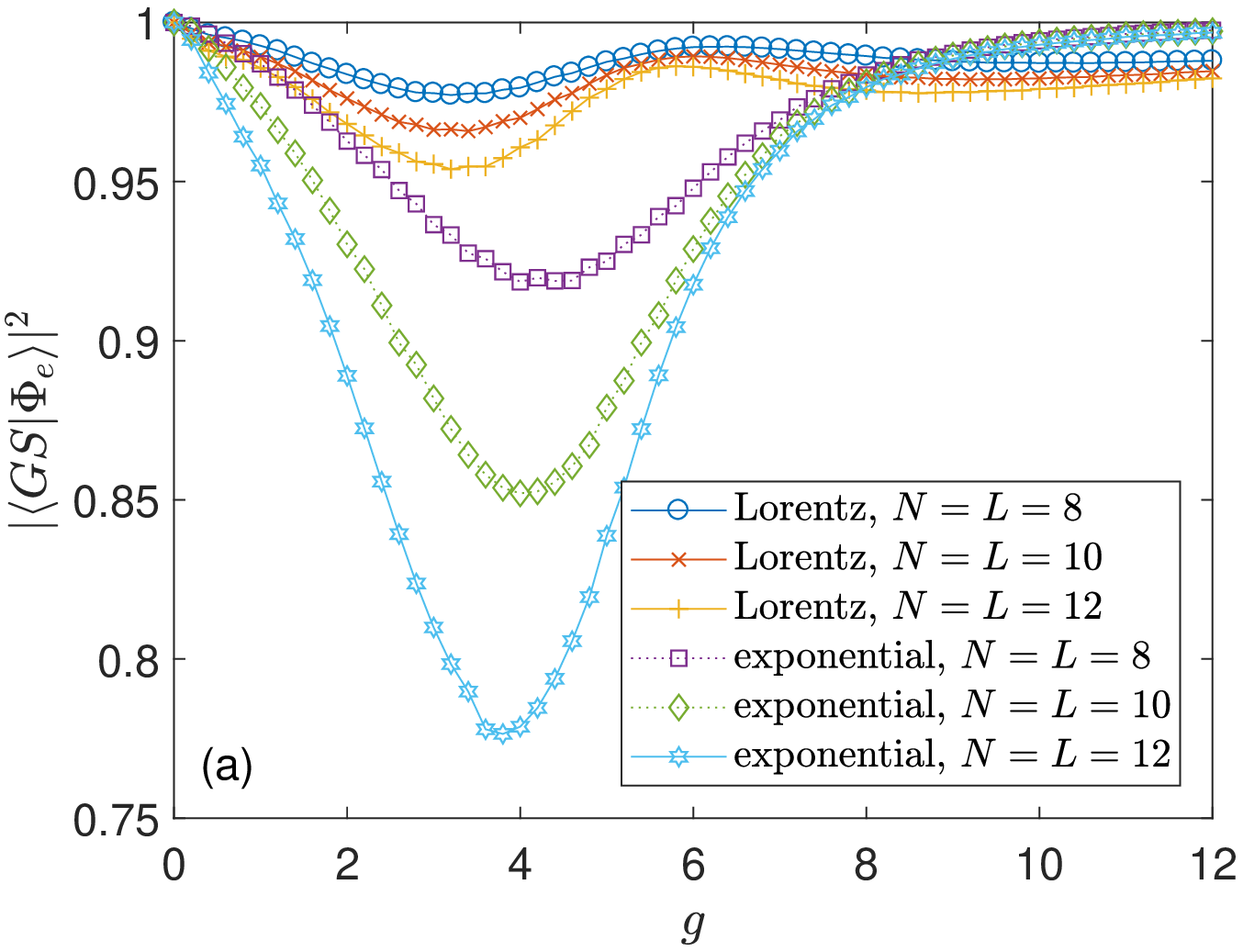}
\includegraphics[width= 0.45\textwidth ]{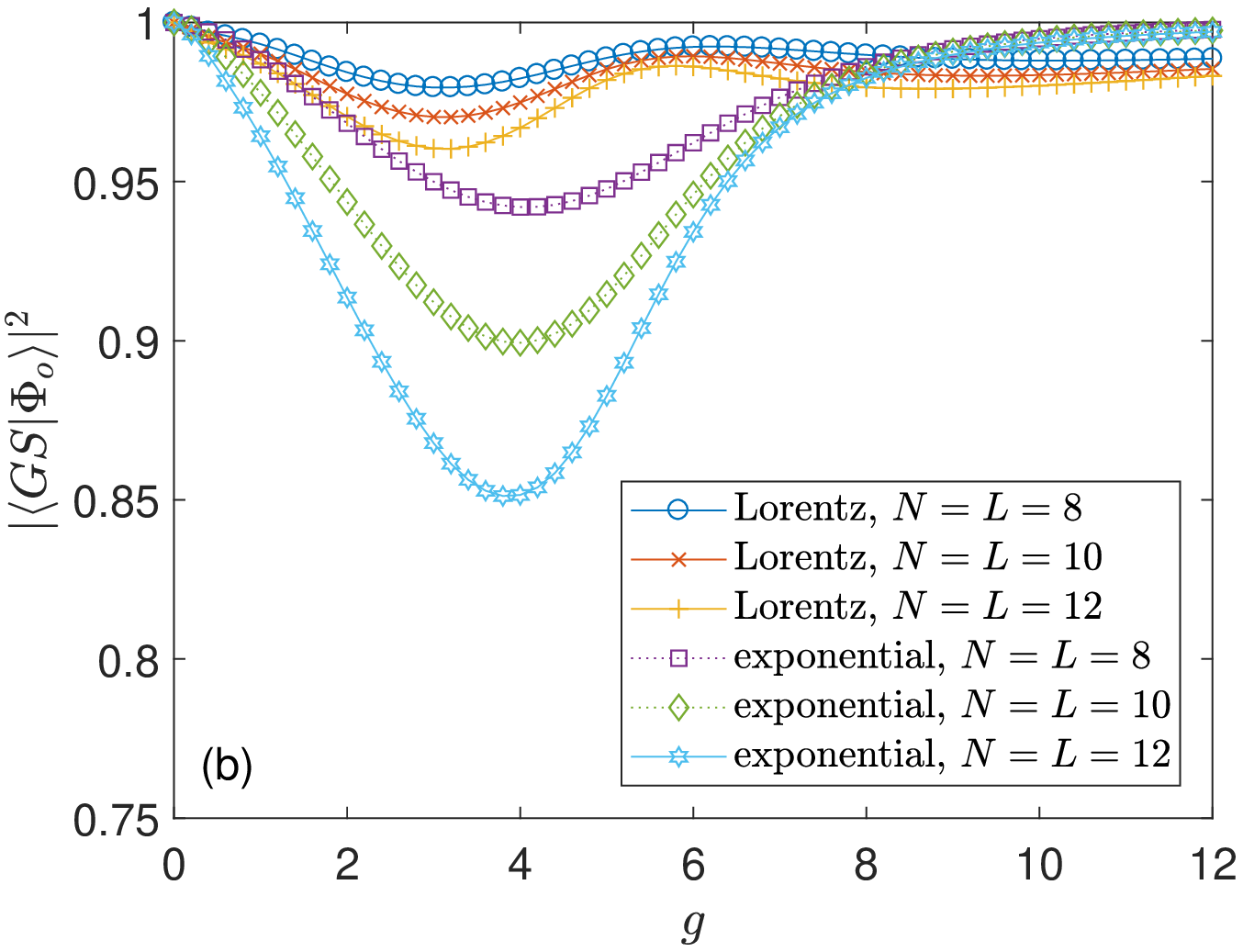}
\caption{(Color online) Overlap between (a) the energy-minimizing permanent variational state $\Phi_e$ and (b) the overlap-maximizing permanent variational state $\Phi_o$ constructed with either the Lorentz-type (\ref{primitive}a) or the exponential-type (\ref{primitive}b) primitive orbitals with the exact ground state $|GS \rangle $. We see that for each lattice size $L$, in the region of $g\lesssim 8 $, the Lorent-type variational wave function has a higher overlap with the exact ground state; while in the region of $g \gtrsim 10$, the exponential-type variational wave function becomes better.}
\label{fig3}
\end{figure*}

The strategy is then simply to vary the parameter $\lambda$ and calculate the variational energy $E_{var}(\lambda ;  g )$ as a function of $\lambda $ by using the formulae in Sec.~\ref{secformulae}. Here and henceforth, the dependence of $E_{var}$ on the primitive orbitals should be understood tacitly. By the variational principle, $E_{var}$ is always above the exact ground state energy $E_{gs}^{ext}$. The concern is whether the minimum of $E_{var}$ can be sufficiently close to $E_{gs}^{ext}$. A case study with $N=L=10$ and $g=4$ is shown in Fig.~\ref{fig1}. We see that for both types of primitive orbitals, at some value of $\lambda$, $E_{var}$ dips towards the horizontal line indicating $E_{gs}^{ext}$. At the minima, the relative error is $2\%$ and $4\%$, respectively, for the Lorentz-type and exponential-type orbital. This is very encouraging and turns out to be typical.

By determining the minimum of the curve $E_{var}(\lambda ;g )$ for a fixed value of $g$, we can get a variational estimate (denoted as $E_{gs}^{var}$) of the ground state energy and an optimal permanent variational wave function $\Phi_{{e}}$. In Fig.~\ref{fig2}, the variational energy $E_{gs}^{var}$ is compared with the exact ground state energy $E_{gs}^{ext}$ obtained by exact diagonalization (ED) \cite{ed}. We see that over the full range of $0\leq g < \infty $, the variational estimates agree with the exact values very well. The discrepancy is apparent only for the exponential-type variational wave function in the region around $g=4$, with a relative error about $5.6\%$. However, in this region, the Lorentz-type wave function is a much better approximation, reducing the relative error to about $2\%$.
The general observation is that for $g\lesssim 8$, the Lorentz-type wave function yields a better upper bound for the ground state energy, while for $g \gtrsim 8 $, the exponential-type wave function is better, although the latter fact is hardly visible in Fig.~\ref{fig2}. Here in passing, we mention that if we take the Gross-Pitaevskii approximation, the orbital occupied by all the particles should be the zero-momentum Bloch state, as this choice minimizes the kinetic energy and the interaction energy simultaneously. The energy per particle would be $-2 + g (L-1)/L $, i.e., linear in $g$, which is qualitatively wrong for large values of $g$.

\begin{figure*}[tb]
\includegraphics[width= 0.32\textwidth ]{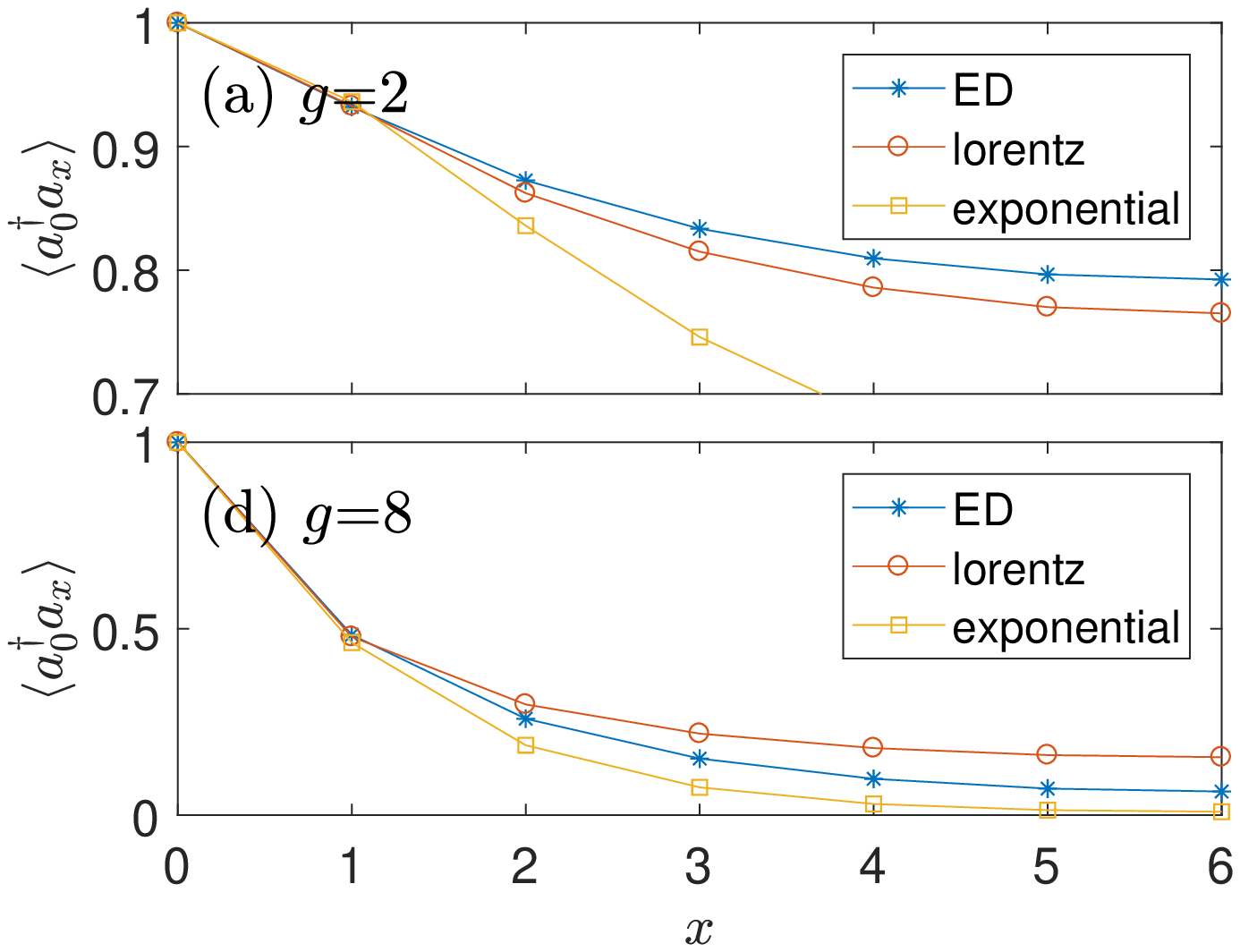}
\includegraphics[width= 0.32\textwidth ]{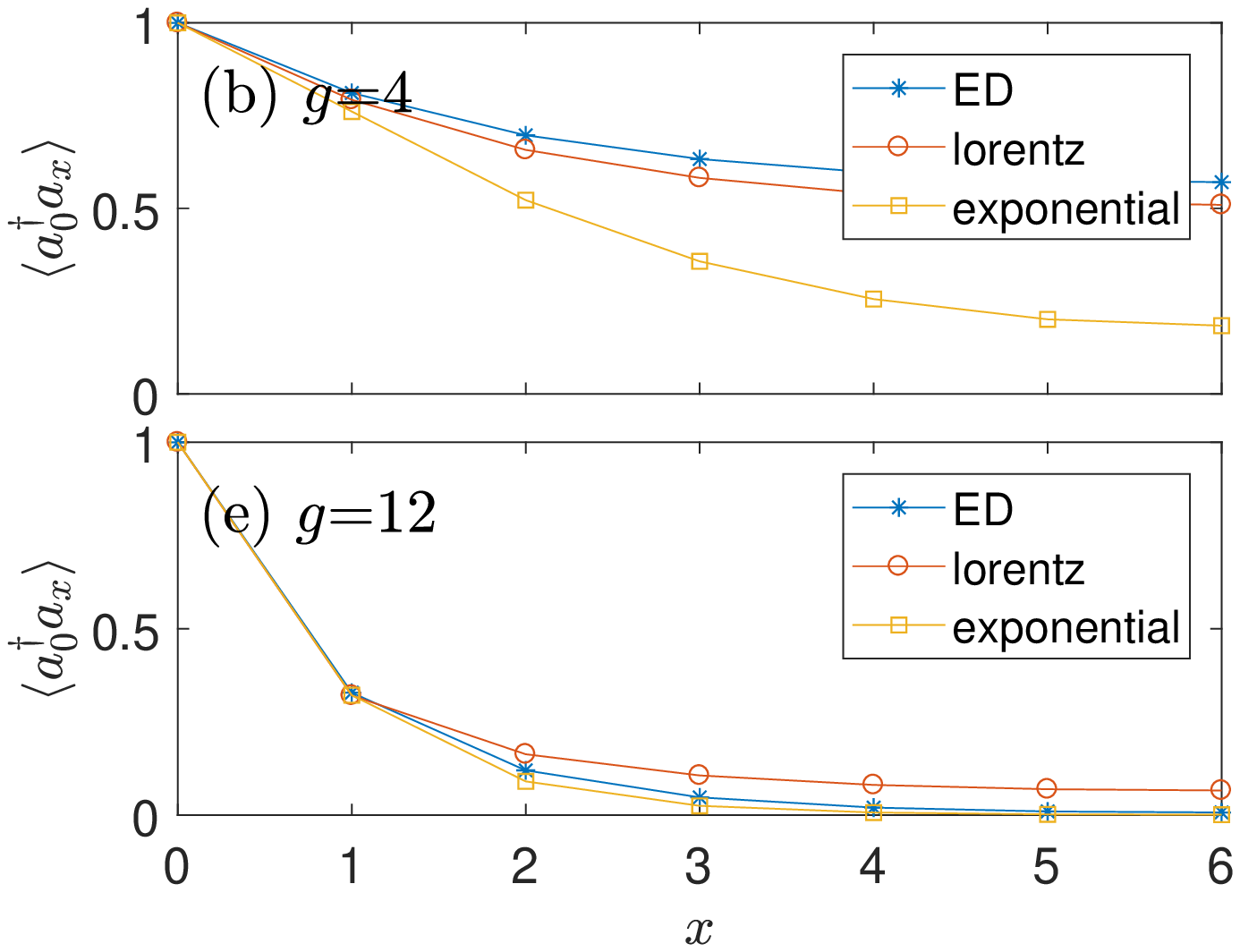}
\includegraphics[width= 0.32\textwidth ]{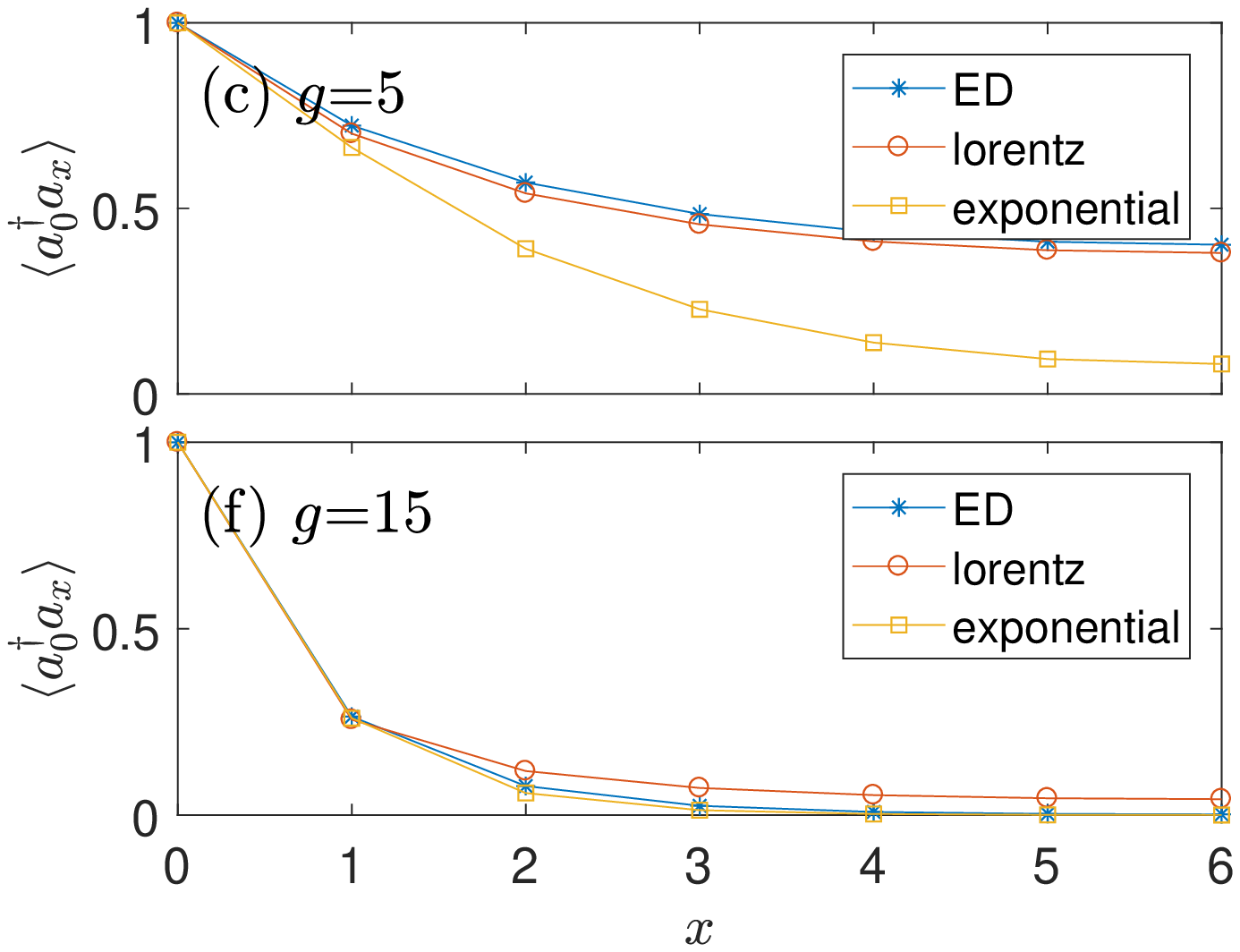}
\caption{(Color online) Single-particle correlator $\langle a_0^\dagger a_x \rangle $. In each panel, the $\ast $ markers are for the exact ground state $|GS \rangle $, while the circles and squares are for the energy-minimizing variational state $\Phi_e$ with Lorent-type or exponential-type orbitals, respectively.  The common parameters are $N=L = 12$. Note that because of the periodic boundary condition, the largest possible distance between two sites is $6 $.}
\label{fig_corr_e}
\end{figure*}

\begin{figure*}[tb]
\includegraphics[width= 0.32\textwidth ]{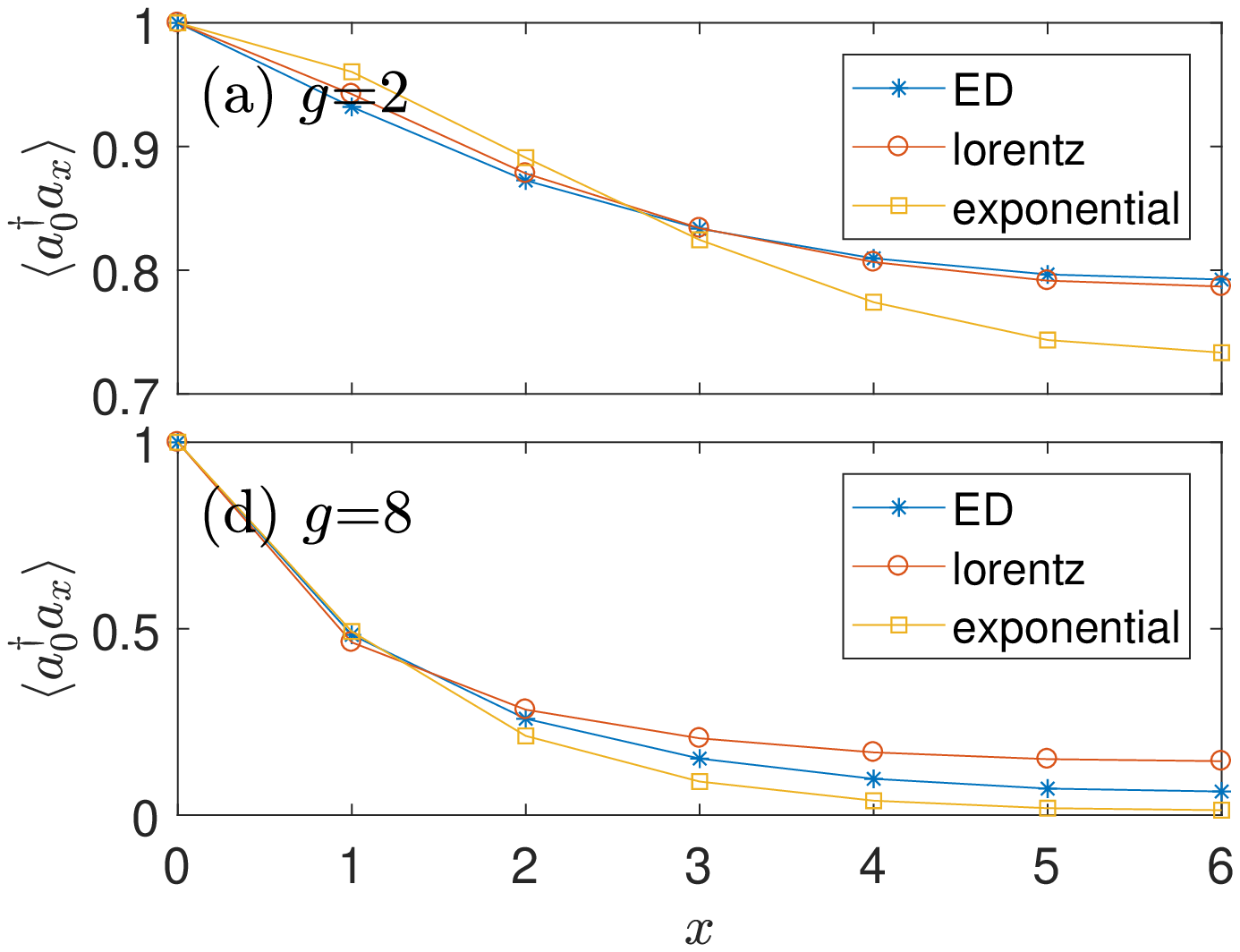}
\includegraphics[width= 0.32\textwidth ]{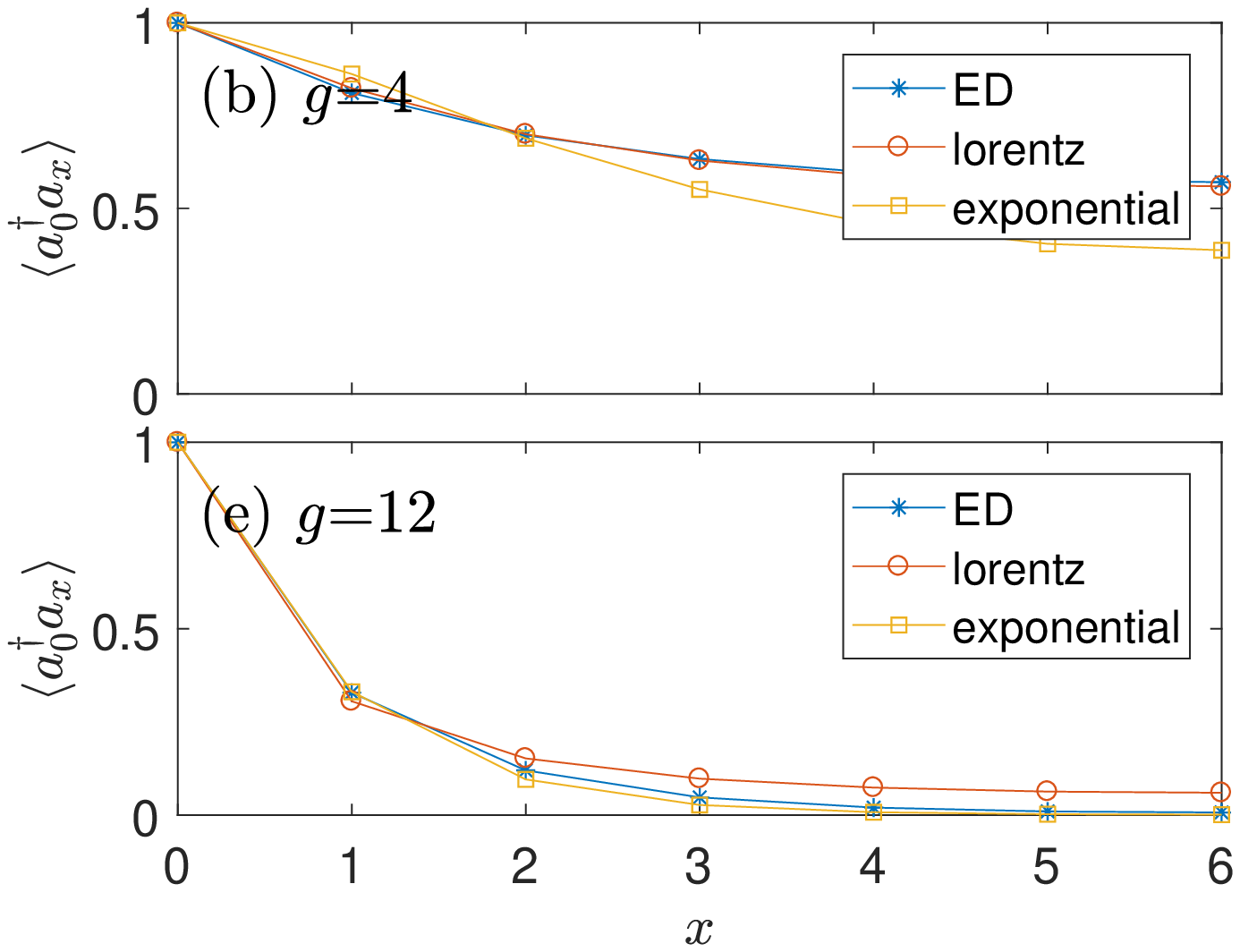}
\includegraphics[width= 0.32\textwidth ]{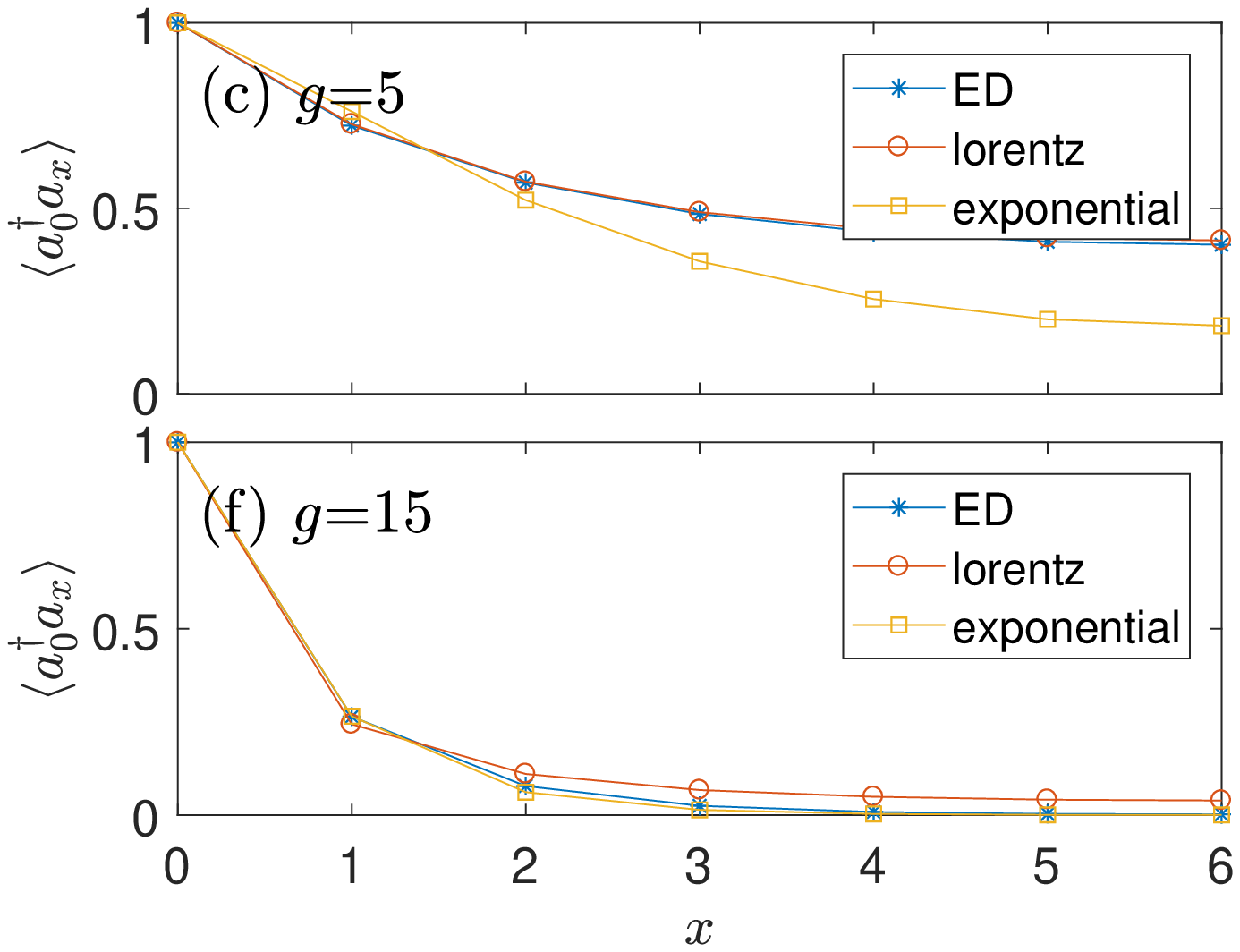}
\caption{(Color online)  Single-particle correlator $\langle a_0^\dagger a_x \rangle $. In each panel, the $\ast $ markers are for the exact ground state $|GS \rangle $, while the circles and squares are for the overlap-maximizing variational state $\Phi_o$ with Lorent-type or exponential-type orbitals, respectively.  The common parameters are $N=L = 12$. Note that because of the periodic boundary condition, the largest possible distance between two sites is $6 $.}
\label{fig_corr_o}
\end{figure*}

Other than energy, a more stringent test for the accuracy of the variational wave function is its overlap with the exact ground state. We have thus Fig.~\ref{fig3}(a), in which the overlap between the exact ground state $|GS\rangle$, which is obtained by ED, and the energy-minimizing (hence the subscript $e$) VWF $|\Phi_e\rangle $, is shown as a function of $g$. We see that in the full range, the overlap is at least $0.78$ for the exponential-type VWF for a system as large as $N=L=12$, and this number is even higher (0.95) for the Lorentz-type VWF. To appreciate these numbers, one should note that the dimension of the many-body Hilbert space is as large as $1 \,352\,078$. We also note that  Fig.~\ref{fig3}(a) and Fig.~\ref{fig2} are consistent with each other. In Fig.~\ref{fig3}(a), all the curves show minima in the proximity of $g = 4$. This is exactly where the variational energies deviate most significantly from the exact one in Fig.~\ref{fig2}. In this region, the Lorentz-type VWF has a much higher overlap than the exponential-type VWF with the exact ground state and accordingly, in this region the former has a lower energy as shown in Fig.~\ref{fig2}.
On the other hand, when $g\gtrsim 8 $, the exponential-type VWF becomes better by the overlap criterion and accordingly, its energy is lower as shown in Fig.~\ref{fig2}. The fact that in the small-$g$ region, the Lorentz-type VWF wins over the exponential-type VWF while in the large-$g$ region, the exponential-type VWF takes over might be understandable in view of the superfluid-Mott insulator transition. The large-$g$ region corresponds to the insulator phase, in which because of the strong particle-particle repulsion, each particle tends to be localized in its own site and the tunneling into neighboring sites should be exponentially small. The small-$g$ region corresponds to the superfluid phase, in which the particles are more mobile and a more extended orbital should be more appropriate.

So far, we have been taking the energy minimizing state $\Phi_e $ among either class of VWFs as an approximation of the exact ground state $|GS \rangle $. Usually, this is the only thing one can do if the exact ground state is unavailable. However, if the exact ground state is available in a certain way, say, by exact diagonalization as we do here, a natural alternate approximation of it should be the variational state having maximal overlap with it. Let us denote it as $\Phi_o$, with the subscript meaning overlap. Hence, we have two related by different optimization problems. One is energy minimization and the other overlap maximization. There is no reason that the two solutions $\Phi_e$ and $\Phi_o$ should be the same, and by definition, we have $|\langle GS | \Phi_e \rangle |^2 \leq |\langle GS | \Phi_o \rangle |^2 $ necessarily. In Fig.~\ref{fig3}(b), we show $ |\langle GS | \Phi_o \rangle |^2 $ as a function of $g$. In comparison with Fig.~\ref{fig3}(a), we see that all the curves shift upwards as expected. For the exponential-type states, the increase of the overlap is quite apparent. For instance, while in Fig.~\ref{fig3}(a), the minimum of $|\langle GS | \Phi_e \rangle |^2 $ is about 0.78 for $N=L=12$,  in Fig.~\ref{fig3}(b), the minimum of $|\langle GS | \Phi_o \rangle |^2 $ is about 0.85. This strongly indicates that the two optimization problems are related but really different. For the Lorentz-type states, the increase of the overlap is less apparent but still visible. For $N=L=12$, the minimum of the overlap increases from $0.95 $ to $0.96$.
From the curves in Fig.~\ref{fig3}, by extrapolation one can infer that even for a system as large as $N=L = 20$, with a many-body Hilbert space of dimension about $6.9\times 10^{10}$, the minimal values of the overlaps $|\langle GS | \Phi_e \rangle |^2$ and $|\langle GS | \Phi_o \rangle |^2$ would be about $0.9$ if we take the Lorentz-type orbital. These numbers are very impressive.

We also note that in Fig.~\ref{fig3}, all the curves, regardless of the primitive orbital type or the criterion, show minima in the vicinity of $g=4$. This should be anticipated in view of the superfluid-Mott insulator transition. According to previous works \cite{dmrg,elesin}, the transition occurs at about $g_c = 3.61$. Close to the transition, the exact ground state should be most complex and it is hardest to approximate it with some simple functions.

In hindsight, the large overlap between the variational states and the exact ground state should be reasonable. There are at least three reasons that are in favor of such a welcome result. First, both VWFs can reproduce the exact ground state in either limit of $g=0 $ and $g= \infty$. Second, it is well-known that for an arbitrary value of $g$, by the Perron-Frobenius theorem \cite{horn}, the ground state is strictly positive everywhere in the Fock-state basis. This property is shared by both VWFs by construction. Third, it is also known that the ground state belongs to the trivial representation of the symmetry group of the model (the dihedral group), or more specifically, it is invariant under all translations and reflections, a property again shared by both VWFs by construction.

Finally, as yet another check of the quality of the VWFs, we consider the single-particle correlator $\langle a_0^\dagger a_x \rangle $.
This expression is convenient for the exact ground state, which is obtained by ED in the Fock-state basis. For the VWFs, which are in the first-quantization formalism, we note that the corresponding single-particle operator $\hat{C}$ has matrix elements $\langle m |\hat{C} | n \rangle = \delta_{m,0} \delta_{n, x}$, and we can use (\ref{expperma}) and (\ref{exph1}) to calculate its expectation value
\begin{eqnarray}
  \langle a_0^\dagger a_x \rangle = \sum_{i,j = 1 }^N \phi_{i}^*(0) \phi_{j}(x) \frac{ \per (A;i|j)}{\per( A )} .
\end{eqnarray}
In Fig.~\ref{fig_corr_e}, $\langle a_0^\dagger a_x \rangle $ is plotted against $x$ for the energy-minimizing state $\Phi_e$ and the exact ground state $|GS\rangle $. We see a picture consistent with Fig.~\ref{fig2} and Fig.~\ref{fig3}. In Fig.~\ref{fig_corr_e}(a)-(c), when $g$ is small and the lorent-type VWF is better, the correlator predicted by the lorent-type VWF is very close to the exact one. In Fig.~\ref{fig_corr_e}(e)-(f), when $g$ gets large, the exponential-type VWF is better, and accordingly the correlator predicted by the exponential-type VWF is close to the exact one. For any value of $g$, either the lorent-type or the exponential-type VWF will be a good approximation by all the three criterions.
Here it is also interesting to note that while the exponential-type VWF always underestimates the correlator, the lorent-type VWF slightly underestimates it in the small-$g$ region, while overestimates it in the large-$g$ region. This might be related to the superfluid-Mott insulator transition.

In Fig.~\ref{fig_corr_o}, $\langle a_0^\dagger a_x \rangle $ is plotted against $x$ for the overlap-maximizing states $\Phi_o$. It is normal to expect that $\Phi_o$ reproduces the correlation function better than $\Phi_e$. This is indeed the case. We see that for $g\leq 6$, when $ |\langle GS | \Phi_o \rangle |^2 $ is significantly higher than $|\langle GS | \Phi_e \rangle |^2$, the curves of $\langle a_0^\dagger a_x \rangle $ shift closer to the exact curve for both types of orbitals. In particular, the curves with the Lorentz-type orbitals almost coincide with the exact curves. For $g\geq 8$, $|\langle GS | \Phi_e \rangle |^2 \simeq |\langle GS | \Phi_o \rangle |^2 $, indicating that $\Phi_e \simeq \Phi_o$, we do not see any significant change of the curves.

\section{Optimization algorithm}\label{secalg}
In the proceeding section, we have seen that for the one-dimensional Bose-Hubbard model with the periodic boundary condition and at unit filling, it is possible to construct some permanent state out of some simple orbitals to approximate its exact ground state very well.  This is checked by examining the variational energy, the overlap with the exact ground state, and the single-particle correlation function.

The close approximation is achieved with some \emph{preassigned} orbitals depending on a single parameter $\lambda$. A natural question is whether the numbers can be further improved by allowing more freedom of the orbitals. This leads to two optimization problems. First, for a given bosonic system with Hamiltonian (\ref{1st}), how can we find a set of orbitals $\{\phi_i, 1\leq i \leq N \}$, such that the energy expectation value of the permanent state $\Phi$ constructed in (\ref{permstate}),
\begin{eqnarray}\label{ratio}
  E &=& \frac{\langle \Phi | H | \Phi \rangle }{\langle \Phi | \Phi \rangle } ,
\end{eqnarray}
is minimized? Second, for a given normalized bosonic wave function $|\Psi\rangle $, how can we find the permanent state which is the optimal approximation of it? That is, how can we find the permanent state as in (\ref{permstate}) such that the overlap \cite{zhang1,zhang2}
\begin{eqnarray}\label{overlap}
  O  &=& \frac{\langle \Phi | \Psi \rangle \langle \Psi | \Phi \rangle }{\langle \Phi | \Phi \rangle  }
\end{eqnarray}
is maximized?

There exists a common simple strategy for both problems \cite{zhang1,zhang2}. For clarity, let us focus on the first problem for the present. Let us fix $N-1$ orbitals, say, the orbitals $\phi_{2\leq i \leq N}$, and try to find an optimal $\phi_1$. To this end, we note that with  the orbitals $\phi_{2\leq i \leq N}$ fixed, the numerator and denominator in (\ref{ratio}) are both hermitian forms of $\phi_1$. That is, one can find operators $\hat{F}$ and $\hat{G}$ such that $\langle \phi_1 | \hat{F}  | \phi_1 \rangle = \langle \Phi | H | \Phi \rangle$ and $\langle \phi_1 | \hat{G}  | \phi_1 \rangle = \langle \Phi | \Phi \rangle$. We then can rewrite the ratio as
\begin{eqnarray}\label{ratio2}
  E(\phi_1 )  &=& \frac{\langle \phi_1 | \hat{F}  | \phi_1 \rangle }{\langle \phi_1 |\hat{G}| \phi_1 \rangle } ,
\end{eqnarray}
By definition, $\hat{F}$ and $\hat{G}$ are hermitian single-particle operators depending on the orbitals $\phi_{2\leq i \leq N}$. Importantly, $\hat{G} $ is even positive definite as long as $\phi_{2\leq i \leq N}$ are all nonzero, as by definition $\langle \phi_1 |\hat{G}| \phi_1 \rangle = \langle \Phi | \Phi \rangle \geq 0 $, with the equality achieved only if $\phi_1 = 0 $. Here we recall Proposition \ref{prop0}, which asserts that $\Phi$ is necessarily nonzero if $\phi_{1\leq i \leq N }$ are nonzero. It is a straightforward but lengthy calculation to derive the explicit expressions of $\hat{F}$ and $\hat{G} $, so we defer it to the Appendix. Suppose we have prepared the operators $\hat{F}$ and $\hat{G} $ (this is the most time-consuming part of the iteration). The optimal $\phi_1$ that will minimize the ratio in (\ref{ratio2})  is just the solution of the following generalized eigenvalue problem
\begin{eqnarray}\label{generalized}
  \hat{F} \phi &=& \varepsilon \hat{G} \phi
\end{eqnarray}
corresponding to the smallest eigenvalue $\varepsilon_{min} $, and the minimum of the ratio is just $\varepsilon_{min}$. Note that we just need the smallest eigenvalue. Therefore, we can resort to the Lanczos algorithm.

Once we have updated $\phi_1$, we can turn to $\phi_2$, and then to $\phi_3$, and so on. However, for convenience of programming, we can just make a circular shift of the orbitals $\phi_{i} \rightarrow \phi_{i-1}$, and continue to update $\phi_1$. In this process, the variational energy decreases monotonically, and as it is lower bounded by the exact ground state energy, it will definitely converge.

We have to mention that (\ref{generalized}) is essentially the self-consistency equation derived by Heimsoth before by the method of performing variational differentiation of abstract Hilbert space vectors \cite{martin1, martin2}. However, hopefully here our different point of view has led to a more compact and transparent formalism. As we shall see in the subsection below, this formalism allows easy extension to the multiconfiguration case, which was not considered previously.

Now it  should be clear that the second problem can be treated similarly. The ratio (\ref{overlap}) can be written as
\begin{eqnarray}
  O &=&  \frac{\langle \phi_1 | \gamma \rangle \langle \gamma  | \phi_1 \rangle }{\langle \phi_1 |\hat{G} | \phi_1 \rangle },
\end{eqnarray}
where the single-particle orbital $\gamma $ is defined by the summation or partial contraction
\begin{eqnarray}\label{gamma}
  \gamma(x) &=& \sqrt{N!} \sum_{x_2,\ldots, x_N=1}^L \Psi(x, x_2, \ldots, x_N ) \prod_{i=2}^N \phi_i^*(x_i ) .\quad
\end{eqnarray}
Once $\gamma$ is calculated (again, this is the most time-consuming part), the optimal $\phi_1$ can be obtained by solving a generalized eigenvalue equation similar to (\ref{generalized}), with $| \gamma \rangle \langle \gamma  | $ replacing $\hat{F}$.

Naively, the summation in (\ref{gamma}) has the complexity of $L^{N-1}N $, as $x_{2\leq i \leq N}$ run independently from $1$ to $L$. However, one should note that $\Psi $ is invariant under permutations of $x_{2\leq i \leq N} $. Making use of this fact and changing the dummy variables from $x_i $ to $y_{i-1}$, (\ref{gamma}) can be written as
\begin{eqnarray}\label{gamma2}
  \gamma(x) &=& \sqrt{N!} \sum_{\textbf{y}} \Psi(x, \textbf{y} ) \frac{\per(\phi^*(\textbf{y}))}{\textbf{n}(\textbf{y})!} ,
\end{eqnarray}
where the summation is over the ordered $(N-1)$-tuple $\textbf{y}\equiv(y_1, y_2, \ldots, y_{N-1}) $ with $ 1\leq y_1 \leq y_2 \ldots \leq y_{N-1} \leq L $, and $\phi^*(\textbf{y}) $ denotes the $(N-1) \times (N-1) $ matrix with its $i$th row being the $y_i$th row of the $L\times (N-1) $ matrix $(\phi_2^*, \phi_3^*, \ldots, \phi_N^*)$. In the denominator, $\textbf{n}(\textbf{y})! \equiv  \prod_{i=1}^L n_i ! $, with $n_i $ denoting the times $i $ appears in $\textbf{y}$. The computational complexity is now on the order of ${L+N-2\choose N-1} 2^{N-2} N$.

\subsection{The multiconfiguration case}\label{secmconf}
So far, we have assumed a single configuration. This is yet the only case people have considered \cite{igor1, igor2, martin1, martin2}. For better approximation, one can try $M> 1$ sets of orbitals $\{\phi^{(\alpha)}_i, 1\leq \alpha \leq M, 1\leq i \leq N  \}$, and let the variational wave function be the sum of the permanent wave functions constructed by each set of orbitals. Specifically,
\begin{eqnarray}\label{mconf}
  \Phi  &=& \sum_{\alpha = 1 }^M \Phi^{(\alpha )} =  \sum_{\alpha = 1 }^M \hat{\mathcal{S}}(\phi^{(\alpha)}_1, \ldots,\phi^{(\alpha)}_N ).
\end{eqnarray}
Fixing the orbitals $\phi_{2\leq i \leq N}^{(\alpha)}$ in each set, the variational energy (\ref{ratio}) can be written as
\begin{eqnarray}\label{ratio3}
  E  &=& \frac{\sum_{\alpha,\beta= 1}^M \langle \Phi^{(\alpha) } | H | \Phi^{(\beta)} \rangle }{\sum_{\alpha,\beta= 1}^M \langle \Phi^{(\alpha)} | \Phi^{(\beta)} \rangle } \nonumber \\
  &=& \frac{\sum_{\alpha,\beta= 1}^M \langle \phi^{(\alpha) }_1 |\hat{F}^{(\alpha \beta) } | \phi_1^{(\beta)} \rangle }{\sum_{\alpha,\beta= 1}^M \langle \phi^{(\alpha)}_1 |\hat{G}^{(\alpha \beta) }| \phi_1^{(\beta)} \rangle },
\end{eqnarray}
where $\hat{F}^{(\alpha \beta) }$ and $\hat{G}^{(\alpha \beta )}$ depend on the fixed orbitals $\phi_{2\leq i \leq N}^{(\alpha)}$ and $\phi_{2\leq i \leq N}^{(\beta)}$, and can be calculated with essentially the same formulae as in the Appendix---Just add the superscript $\alpha$ to the orbitals in the bras and $\beta$ to the orbitals in the kets. It is easily seen that
\begin{eqnarray}
  (\hat{F}^{(\alpha \beta)})^\dagger = \hat{F}^{(\beta \alpha)},\quad (\hat{G}^{(\alpha \beta)})^\dagger = \hat{G}^{(\beta \alpha) }.
\end{eqnarray}
Apparently, (\ref{ratio3}) can be cast in the same form as (\ref{ratio2}), if we identify $\phi_1$ as the concatenated vector
\begin{eqnarray}
  \phi_1 &\equiv &  (\phi_1^{(1)}; \phi_1^{(2)} ; \ldots ; \phi_1^{(M)} ),
\end{eqnarray}
which is of length $LM $, and define the block operators $\hat{F} \equiv (\hat{F}^{(\alpha \beta) })$ and $\hat{G} \equiv (\hat{G}^{(\alpha \beta) })$, which are of size $LM \times LM $. The same update and iteration procedures can then be carried out.

Similarly, in the multiconfiguration case, the overlap (\ref{overlap}) can be written as
\begin{eqnarray}
  O &=&  \frac{\sum_{\alpha,\beta= 1}^M \langle \phi_1^{(\alpha)} | \gamma^{(\alpha)} \rangle \langle \gamma^{(\beta)}  | \phi_1^{(\beta)} \rangle }{\sum_{\alpha,\beta= 1}^M \langle \phi^{(\alpha)}_1 |\hat{G}^{(\alpha \beta )}| \phi_1^{(\beta)} \rangle  },
\end{eqnarray}
where the single-particle orbital $\gamma^{(\alpha)} $, like $\gamma$ in (\ref{gamma}), is defined by the same summation with $\phi_i^{(\alpha)}$ replacing $\phi_i $. Again, the same strategy as above can be applied.

\subsection{A pitfall with $N=2$}

A pitfall is to be avoided in implementing the multi-configuration scheme. In the single-configuration case, the hermitian operator $\hat{G}$ is strictly positive definite as long as the orbitals  $\phi_{2\leq i \leq N }$ are nonzero. This is because by definition $\langle \phi_1 | \hat{G} | \phi_1 \rangle = \langle \Phi | \Phi \rangle  $, and by Proposition \ref{prop0}, $\Phi $ is nonzero if $\phi_{1\leq i \leq N }$ are nonzero.
In contrast, in the multi-configuration case (\ref{mconf}), we do not necessarily have $\Phi \neq 0 $ even if all the orbitals $\{\phi^{(\alpha)}_i, 1\leq \alpha \leq M, 1\leq i \leq N  \}$ are nonzero---the configurations could cancel each other out. This happens particularly in the two-particle case of $N = 2$. For illustration, let us consider the two-particle, two-configuration case. For any value of $\{ \phi^{(1)}_2 , \phi^{(2)}_2\}$, if we choose $\{ \phi^{(1)}_1 , \phi^{(2)}_1\}$ as $\{ \phi^{(2)}_2, -\phi^{(1)}_2  \} $,
\begin{eqnarray}\label{phivanish}
  \Phi &=& \hat{\mathcal{S}}(\phi^{(1)}_1, \phi^{(1)}_2 ) +  \hat{\mathcal{S}}(\phi^{(2)}_1, \phi^{(2)}_2 )\nonumber \\
  &=& \hat{\mathcal{S}}(\phi^{(2)}_2, \phi^{(1)}_2 ) -  \hat{\mathcal{S}}(\phi^{(1)}_2, \phi^{(2)}_2 ) =0.
\end{eqnarray}
That the total wave function $\Phi $ could vanish means that $\hat{G}$ in the multi-configuration case is just positive semi-definite---It could have zero eigenvalues. Theoretically, this does not cause any problem because when the denominator of (\ref{ratio3}) vanishes, its numerator vanishes too. In other words, the eigenvectors of $\hat{G}$ with the zero eigenvalue are also eigenvectors of $\hat{F}$ with
the zero eigenvalue. In numerics, one has to restrict $\hat{F}$ and $\hat{G}$ to the subspace spanned by the eigenvectors of $\hat{G}$ with nonzero (hence positive) eigenvalues. This can be easily implemented once $\hat{G}$ is diagonalized (but a better approach is described below).

In practice, this cautious extra effort is necessary only for $N=2$. In our extensive numerical simulations, we have never encountered a case of $\hat{G}$ becoming singular for $N\geq 3$. The reason is yet to be understood but it is not surprising as when $N\geq 3$, for fixed $\phi_{2\leq i \leq N}^{(\alpha)}$, unlike (\ref{phivanish}), it is hard to find $\phi_1 = (\phi_1^{(1)}; \phi_1^{(2)} ; \ldots ; \phi_1^{(M)} )$ to make $\Phi $ vanish; likely there is no solution. But for $N=2$, it occurs necessarily. With $M$ configurations, generally $\hat{G}$ has $ \binom{M}{2}$ eigenvectors with the eigenvalue zero. This number can be understood in view of (\ref{phivanish}). For fixed orbitals $\phi_{2}^{(\alpha)}$, one can find $ \binom{M}{2}$ linearly independent vectors $\phi_1$ such that $\Phi $ vanishes.

To handle the singularity of $\hat{G}$ for $N=2$ and $M > 1$, we consider instead of the ratio in (\ref{ratio2}), the modified ratio
\begin{eqnarray}
 \tilde{ E}(\phi_1 ) &=&  \frac{\langle \phi_1 | \hat{F} + \mu_1 \hat{P}  | \phi_1 \rangle }{\langle \phi_1 |\hat{G} + \mu_2 \hat{P } | \phi_1 \rangle },
\end{eqnarray}
where $\hat{P}$ is the projection operator onto $\ker{\hat{G}}$, the kernel of $\hat{G}$, and $\mu_{1,2}$ are two real parameters which can be chosen arbitrarily except that $\mu_2$ must be positive so that $\hat{G} + \mu_2 \hat{P}$ is positive definite.  By construction, $\tilde{E}$ agrees with $E$ on the subspace orthogonal to $\ker{\hat{G}}$, and takes on the value $\mu_1/\mu_2 $ on $\ker{\hat{G}}$. Therefore, the modified generalized eigenvalue problem
\begin{eqnarray}\label{modified}
  (\hat{F} + \mu_1 \hat{P}) \phi &=& \varepsilon (\hat{G} + \mu_2 \hat{P})\phi
\end{eqnarray}
yields essentially the same generalized eigen-pairs as the initial one (\ref{generalized}). In particular, the minimal generalized eigenvalue (and its associated eigenvector) is not altered as long as $\mu_1/\mu_2$ is not smaller. In this paper, we find  it  okay to choose $\mu_1 = 0 $ and $\mu_2= 1 $.

\subsection{Convergence of the algorithm}\label{secalgcon}
\begin{figure}[tb]
\includegraphics[width= 0.45\textwidth ]{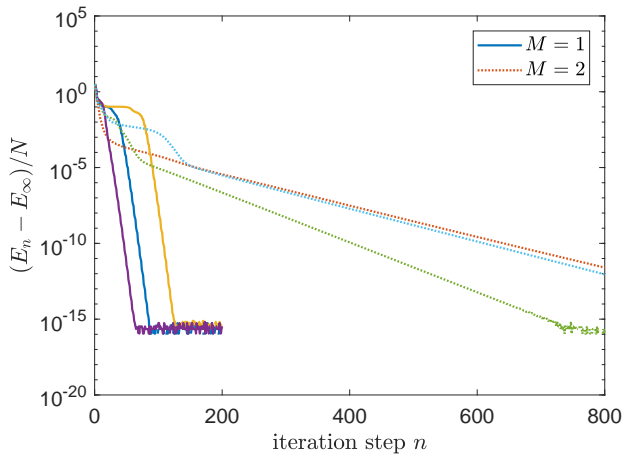}
\caption{(Color online) Convergence of the energy of the permanent variational state. Each curve corresponds to a different set of initial orbitals. The solid (dotted) lines are with $M=1$ ($M=2$) configuration(s). In each step, one orbital is updated. The energy after $n$ steps is denoted as $E_n$. The limiting value $E_\infty$ is approximated by $E_{2400}$. The setting is a one-dimensional Bose-Hubbard model in a harmonic trap with the Hamiltonian (\ref{htrap}). The parameters $(N,L,\kappa, g )=(3,13,0.2, 5)$.}
\label{fig_convergence}
\end{figure}

We take a concrete model to illustrate the convergence behavior of the algorithm. Consider a one-dimensional Bose-Hubbard model in a harmonic trap. The Hamiltonian is
\begin{eqnarray}\label{htrap}
   H_{trap} &=& - \sum_{x=-L_0}^{L_0 -1} (a_x^\dagger a_{x+1} + \text{h.c.} ) + \frac{g}{2} \sum_{x=-L_0}^{L_0}a_x^\dagger a_x^\dagger a_x a_x \nonumber \\
   & &+\kappa  \sum_{x=-L_0}^{L_0} x^2 a_x^\dagger a_x .
 \end{eqnarray}
Here $\kappa > 0 $ is the stiffness of the harmonic potential. For symmetry, we have assume a lattice of size $L = 2 L_0 + 1$. We take the open boundary condition.

We start from $M N$ random orbitals [see Eq.~(\ref{mconf})], where $M$ is the configuration number and $N$ is the particle number.  Essentially, we just generate an $ML\times N $ matrix with each element chosen randomly from the interval $[0,1]$ according to the uniform distribution.
The orbitals (the columns) are then updated in a circular way. The variational state after $n$ updates (steps) will be denoted as $\Phi_n $, and its energy will be denoted as $E_n$. Note that $E_n$ is obtained simultaneously in solving the generalized eigenvalue problem (\ref{generalized}).

By construction, $E_n $ decreases monotonically and will definitely converge. The concern is in which way and with what rate it converges to its limit $E_\infty$. This is studied in Fig.~\ref{fig_convergence}, where for a set of values of the parameters $(N,L,\kappa,g )$ and with $M=1$ or $2$, some typical trajectories of $E_n - E_\infty $ (here we just approximate $E_\infty $ by some $E_n $ with a large enough $n$) are displayed. For each value of $M$, we have three trajectories corresponding to three different sets of initial orbitals.

We see that often the trajectory is not very regular---It is neither pure exponential nor pure power law, but clearly divides into different parts. In many cases, after some relaxation or transition stage, which can last for a long time, the energy eventually enters an exponentially decreasing mode. In our extensive numerical simulations, the observation is that the trajectory of the energy depends not only on the model and the model parameters, but also on the initial conditions and can differ significantly from run to run. We cannot draw any definite rule for the convergence rate of the energy, but the feeling is that the convergence tends to be slower with more configurations and weaker interactions.

\subsubsection{The non-interacting case}
\begin{figure*}[tb]
\includegraphics[width= 0.32\textwidth ]{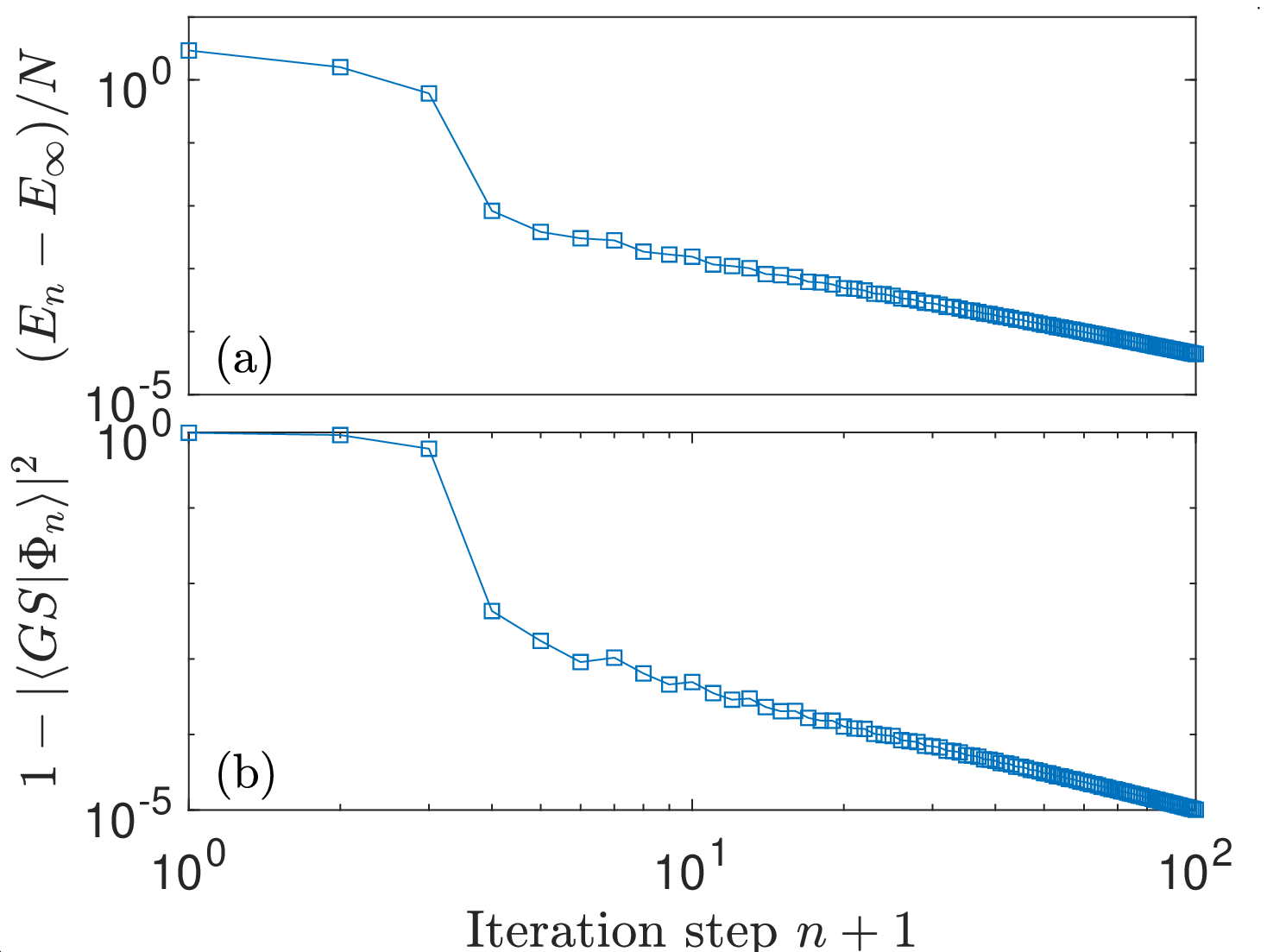}
\includegraphics[width= 0.32\textwidth ]{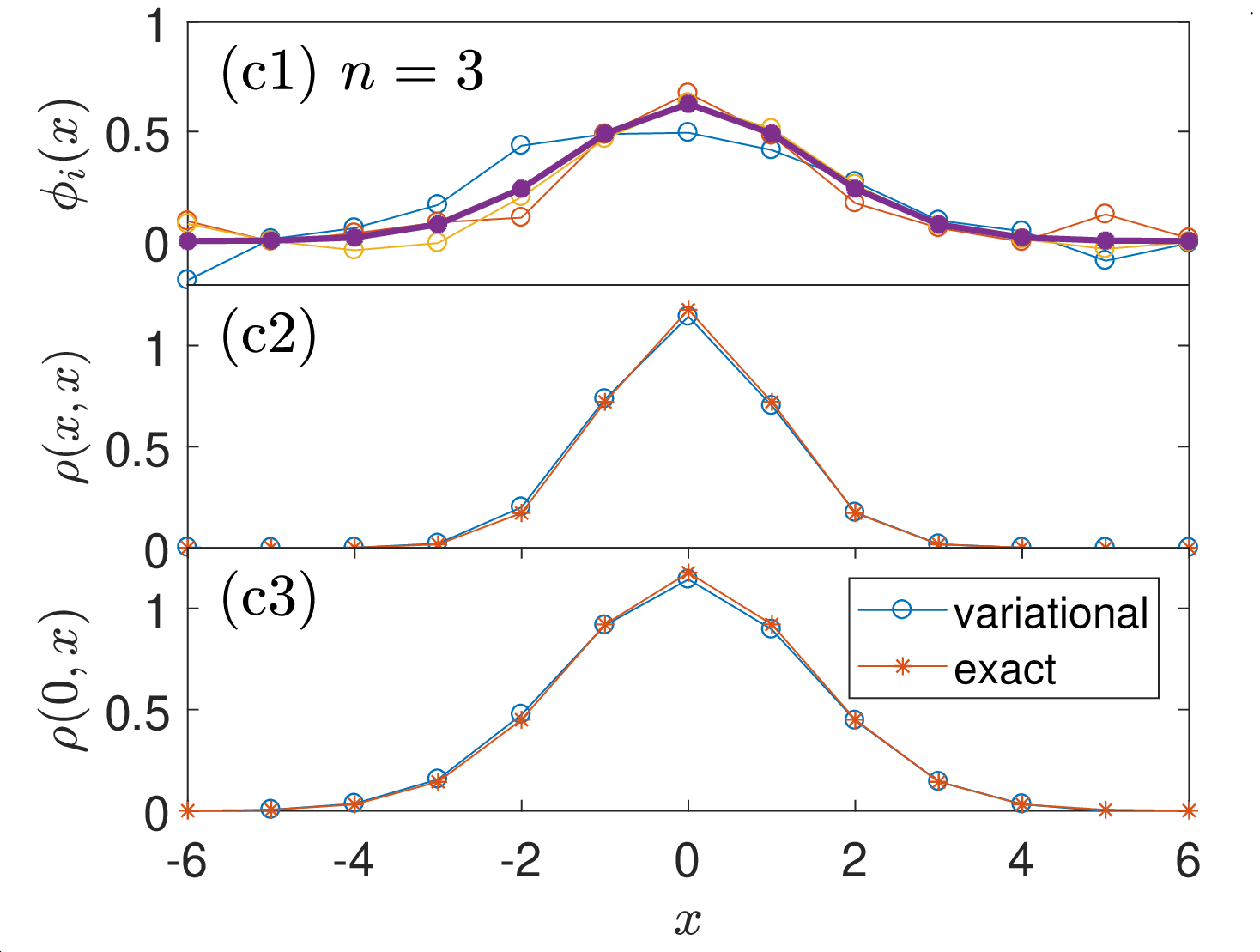}
\includegraphics[width= 0.32\textwidth ]{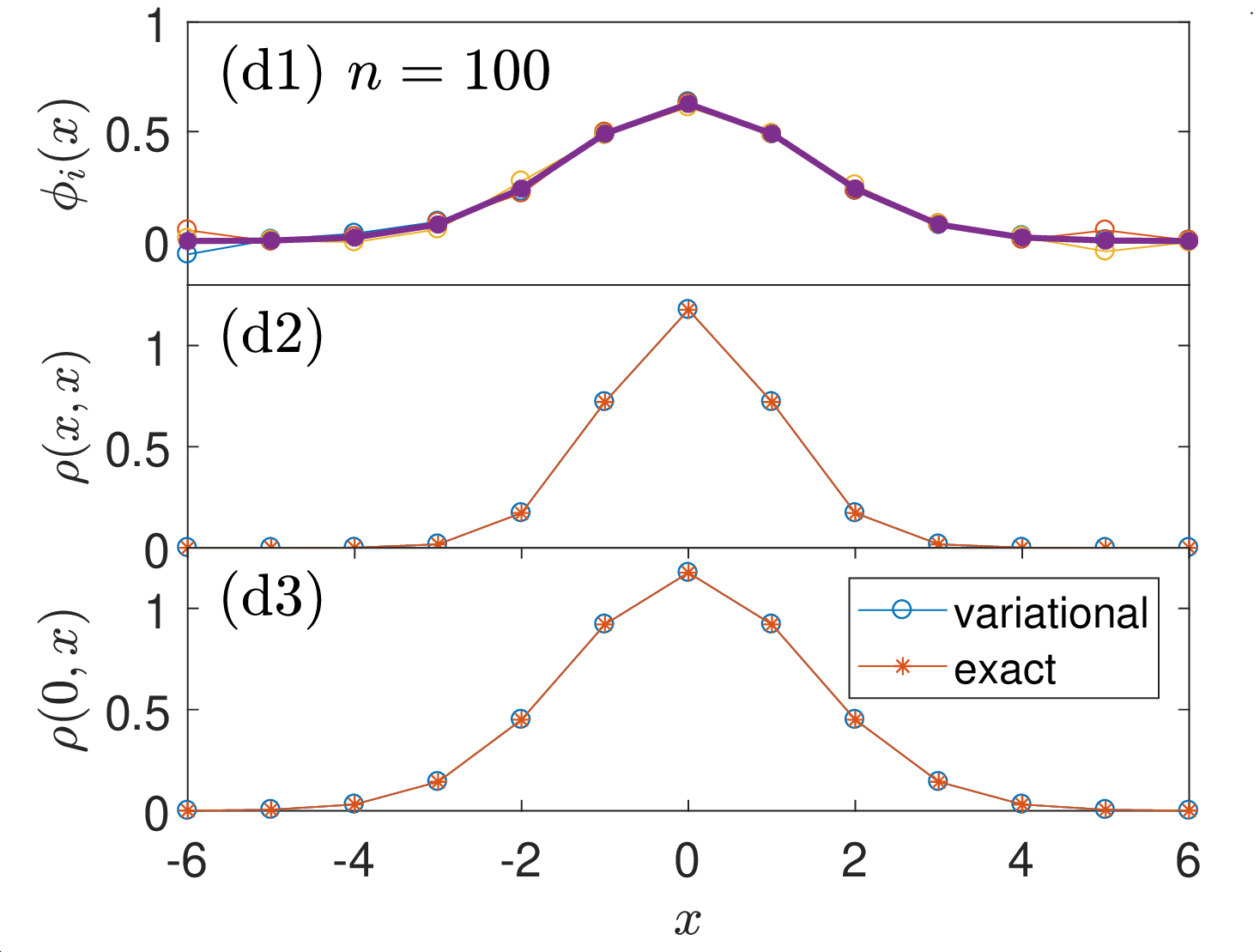}
\caption{(Color online) (a) and (b): Trajectories of the energy error $E_n - E_\infty $ and the infidelity $1- |\langle GS | \Phi_n \rangle |^2$ in a particular run in the non-interacting case ($g=0$). (c) and (d): The orbitals $\phi_i(x)$, the particle density distribution $\rho(x,x)$, and the one-particle correlator $\rho(0,x)$ at two snapshots. In (c1) and (d1), the solid dots represent the single-particle ground state. The setting is the same as in Fig.~\ref{fig_convergence}. The parameters are also the same except for $g$. }
\label{fig_free}
\end{figure*}

Possibly the most embarrassing and surprising thing is that the convergence is slowest in the non-interacting limit. In this trivial case, the exact ground state $|GS \rangle $ is simply a condensate-type (and hence a permanent) state with all particles occupying the single-particle ground state. We do not need to invoke the algorithm for its calculation, however, if we do, the convergence is as slow as a power law. In Figs.~\ref{fig_free}(a) and \ref{fig_free}(b), with $g=0$ but the other parameters the same as in Fig.~\ref{fig_convergence}, trajectories of the energy error $E_n-E_\infty $ and the infidelity $1- |\langle GS | \Phi_n \rangle |^2$ are shown respectively for a particular run. Here we take $E_\infty $ to be the exact ground state energy. In either figure, the curve drops down steeply at about $n \sim N $, and afterwards it follows a straight line in the log-log plot. Basically, the picture is that after the first round of update, i.e., when each orbital has been updated once, the variational wave function is already very close to the exact state (the overlap is over $0.99$ in the particular case). Afterwards, it improves slowly by a power law.

The exact reason behind the slow convergence is yet to be understood. Here we just emphasize that the convergence is slow only in the asymptotic sense. In the initial phase, the algorithm can already deliver the state $\Phi_n$ close enough to its limit $\Phi_\infty = |GS\rangle $.

In Figs.~\ref{fig_free}(c) and \ref{fig_free}(d), we show two snapshots of the constituent orbitals, the particle density distribution $\rho(x,x)$, and the correlation function $\rho(0,x)$, which are respectively the diagonal and off-diagonal parts of the single-particle  reduced density matrix $\rho$ defined as
\begin{eqnarray}
  \rho(x_1, x_2) &=& \langle a_{x_2}^\dagger a_{x_1} \rangle .
\end{eqnarray}
We see that even for an $n$ as small as $n=N=3$, the permanent variational state $\Phi_n $ can already reproduce the exact values of $\rho(x,x)$ and $\rho(0,x)$ to high precision. The intriguing thing is that while the total wave function is already very close to the exact many-body ground state in terms of energy, overlap, and some most relevant correlation functions, the constituent orbitals are still far away or at least visibly different from the exact single-particle orbital. In view of Proposition \ref{prop1}, which states that different sets of orbitals necessarily result in different many-body wave functions, the current observation implies that the latter is not necessarily very sensitive to perturbations of the former in some cases. This in turn implies that it might not be a good idea to use convergence of the orbitals as a criterion in determining the termination of the algorithm \cite{martin2}.

\subsubsection{Local minima}
\begin{figure*}[tb]
\includegraphics[width= 0.45\textwidth ]{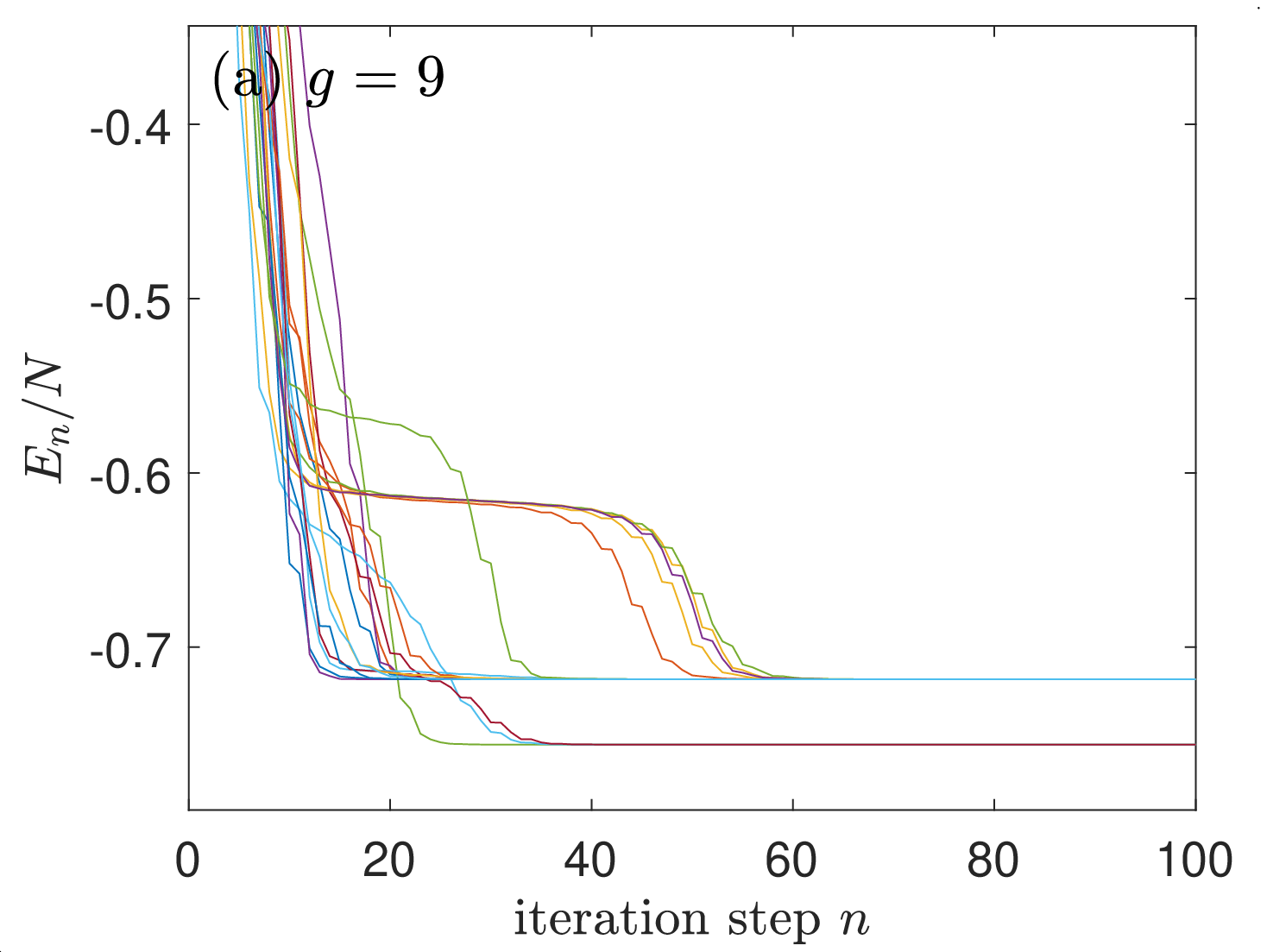}
\includegraphics[width= 0.45\textwidth ]{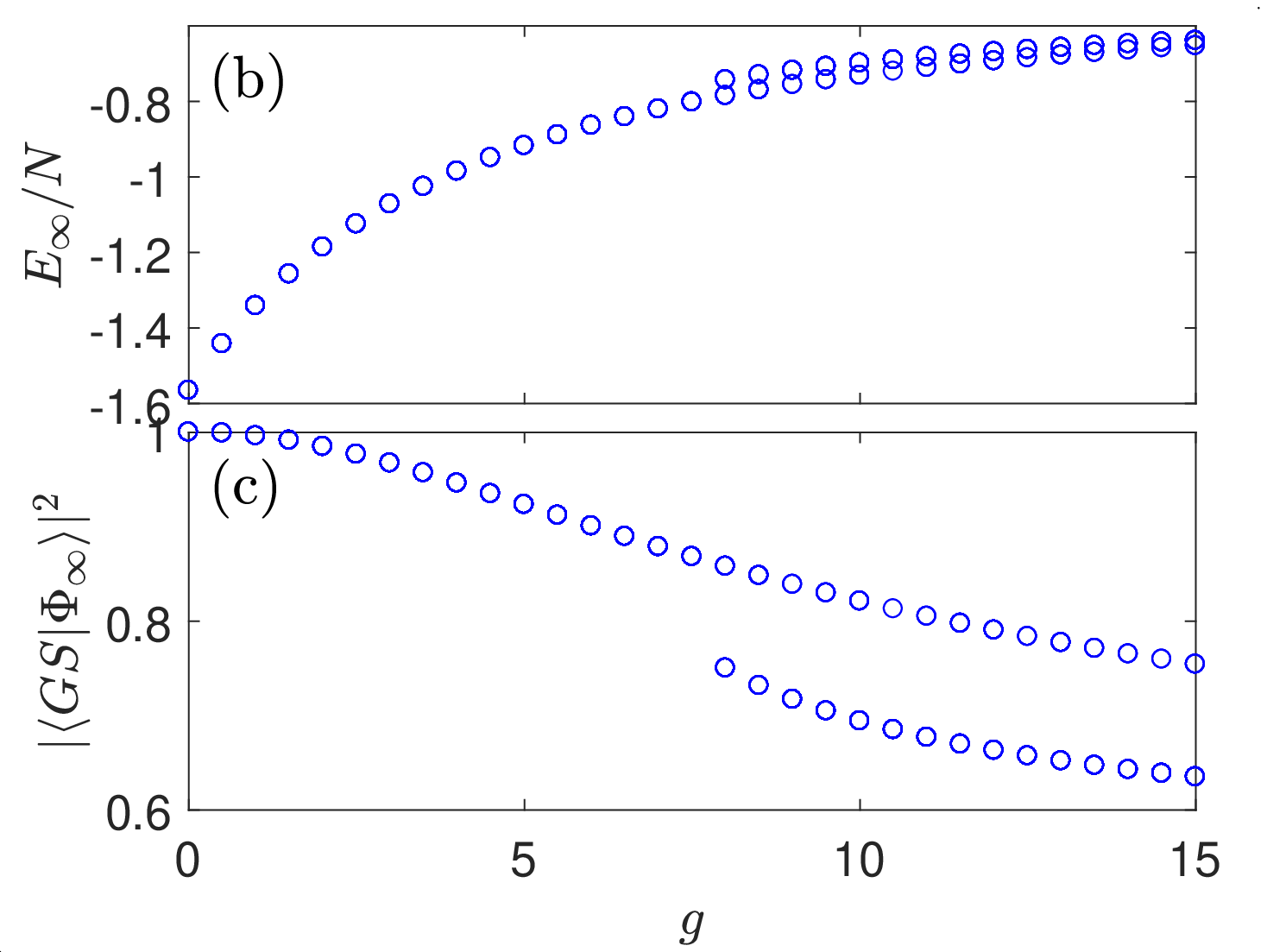}
\caption{(Color online) (a) Twenty trajectories of the variational energy $E_n$. Each trajectory starts with a different set of randomly generated initial orbitals. (b) Possible limiting values of the variational energy $E_\infty $ and the overlap $|\langle GS | \Phi_\infty \rangle|^2 $. Note that for each value of $g$, we have 20 runs as in (a) and in each run the number of iteration steps is 1500. The setting is a one-dimensional Bose-Hubbard model in a harmonic trap with the Hamiltonian (\ref{htrap}). The fixed parameters are $(N,L,\kappa,M )=(3,13,0.2,1)$. }
\label{fig_localm}
\end{figure*}

It should be no wonder that the algorithm can get stuck in a local minimum like many other greedy algorithms. This is illustrated in Fig.~\ref{fig_localm}(a) with a model of (\ref{htrap}) and some specific value of $g$. We see that many trajectories of $E_n$ settle down on a secondary minimum. Note that for clarity, we have shown only the first 100 steps, but actually the horizontal lines extend all the way up to $n=1500$.

In Fig.~\ref{fig_localm}(b), we plot the possible eventual values of the variational energy $E_n$ and the overlap $|\langle GS | \Phi_n \rangle |^2$ as functions of $g$. For each value of $g$, like in Fig.~\ref{fig_localm}(a), we have run the algorithm 20 times, each time up to $n=1500$. We see that for $g$ smaller than some critical value $g_c \simeq 8$, we get only a single value for either of $E_\infty$ and $|\langle GS | \Phi_\infty \rangle |^2$, which indicates that there is only a global minimum, however for $g$ larger than $g_c $, we get two different values for either of $E_\infty$ and $|\langle GS | \Phi_\infty \rangle |^2$, which indicates the presence of a second, local minimum.

Our experience is that local minima are ubiquitous. Generally, their number increases with the number $M $ of configurations. To avoid them and enhance the probability of hitting the global minimum, we simply run the algorithm multiple times, say $20$ times for $M=4$ configurations.

\subsubsection{Real versus complex}
\begin{figure}[tb]
\includegraphics[width= 0.45\textwidth ]{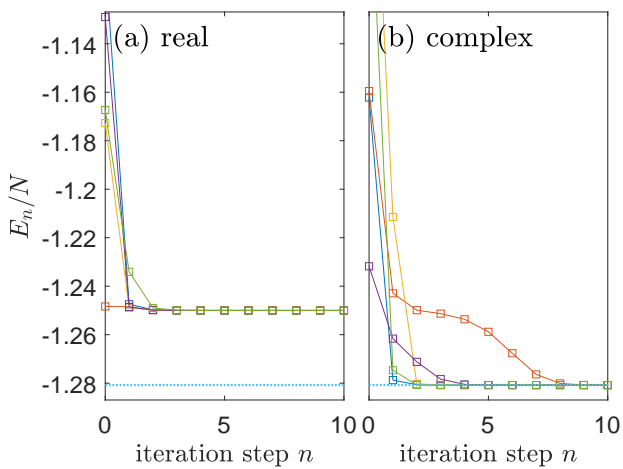}
\caption{(Color online) Trajectories of the energy of the permanent variational state with either (a) real or (b) complex orbitals. In each case, we have five runs starting with five different sets of random orbitals. The number of configurations is $M=1$. The horizontal dotted lines indicate the exact ground state energy. The setting is a two-particle, two-site Bose-Hubbard model with the Hamiltonian (\ref{twositebhm}). The parameter $ g= -1$.}
\label{fig_realvscomplex}
\end{figure}

So far, we have assumed the optimal orbitals to be real. For many systems with the time reversal symmetry, the many-body ground state is real and the assumption that the optimal orbitals should also be real seems very reasonable. However, this is not the case.

We take a minimal model to illustrate the possibility that the optimal orbitals could be complex although the total wave function is real.
Consider a two-particle, two-site Bose-Hubbard model, i.e., a model with $(N,L)= (2,2)$. The Hamiltonian is
\begin{eqnarray}\label{twositebhm}
   H_{ds} = -(a_1^\dagger a_2 + a_2^\dagger a_1 ) + \frac{g}{2} (a_1^\dagger a_1^\dagger a_1 a_1 + a_2^\dagger a_2^\dagger a_2 a_2 ) .\quad
\end{eqnarray}
Consider a real wave function $\Psi(x_1, x_2)$, with $x_{1,2} = 1, 2$, of this model. By Proposition \ref{prop2} or Proposition \ref{prop3}, $\Psi $ can be written as a permanent state
\begin{eqnarray}\label{twotwo}
  \Psi  &=& \hat{S}(u, v )
\end{eqnarray}
with two orbitals $u$ and $v$. The question is whether $u$, $v$ can both be real. Componentwise, (\ref{twotwo}) means
\begin{eqnarray}
  \Psi(1,1) &=& \sqrt{2}u_1 v_1, \quad \Psi(2,2) = \sqrt{2}u_2 v_2 , \nonumber  \\
  \Psi(1,2) &=& \frac{1}{\sqrt{2}}(u_1 v_2 +v_1 u_2)  . \nonumber
\end{eqnarray}
We have then
\begin{eqnarray}
\Delta  \equiv \Psi(1,2)^2 - \Psi(1,1)\Psi(2,2) &=& \frac{1}{2}( u_1 v_2 - v_1 u_2 )^2. \nonumber
\end{eqnarray}
We thus see that for (\ref{twotwo}) to have real solutions, a necessary condition is that the quantity $\Delta $ be non-negative. It is easy to verify that this is also sufficient.

Hence, when $\Delta < 0 $, in the single configuration approximation, the wave function $\Psi $ can be exactly recovered with complex orbitals but not with real orbitals. Note that such a condition is satisfied by a cat-type state, which can be realized as the ground state of the model (\ref{twositebhm}) with an attractive on-site interaction $g< 0$. In Fig.~\ref{fig_realvscomplex}, we show the variational energies of the ground state of such a model, calculated with either real or complex orbitals. We see that while complex orbitals can deliver the exact value, real orbitals miss it with some overestimation.

Therefore, we see that at least theoretically, complex orbitals are superior to real orbitals for energy minimization. However, the observation is that for all the models we consider in this paper, if the interaction is repulsive ($g\geq 0 $) and if we take the single configuration approximation ($M=1$), the complex approach leads to identical results with the real approach. More specifically, in these circumstances, even if we start with complex orbitals, the iteration will result in real orbitals. The reason is yet to be understood. For multiple configurations ($M>1$), with complex orbitals, often we do get lower energies, however, the improvement is not very significant. We thus often confine ourselves to real orbitals in the following. Anyway, real arithmetics are four times fast than complex arithmetics, and the disadvantage in accuracy can be compensated by including more configurations.

\begin{figure*}[tb]
\includegraphics[width= 0.45\textwidth ]{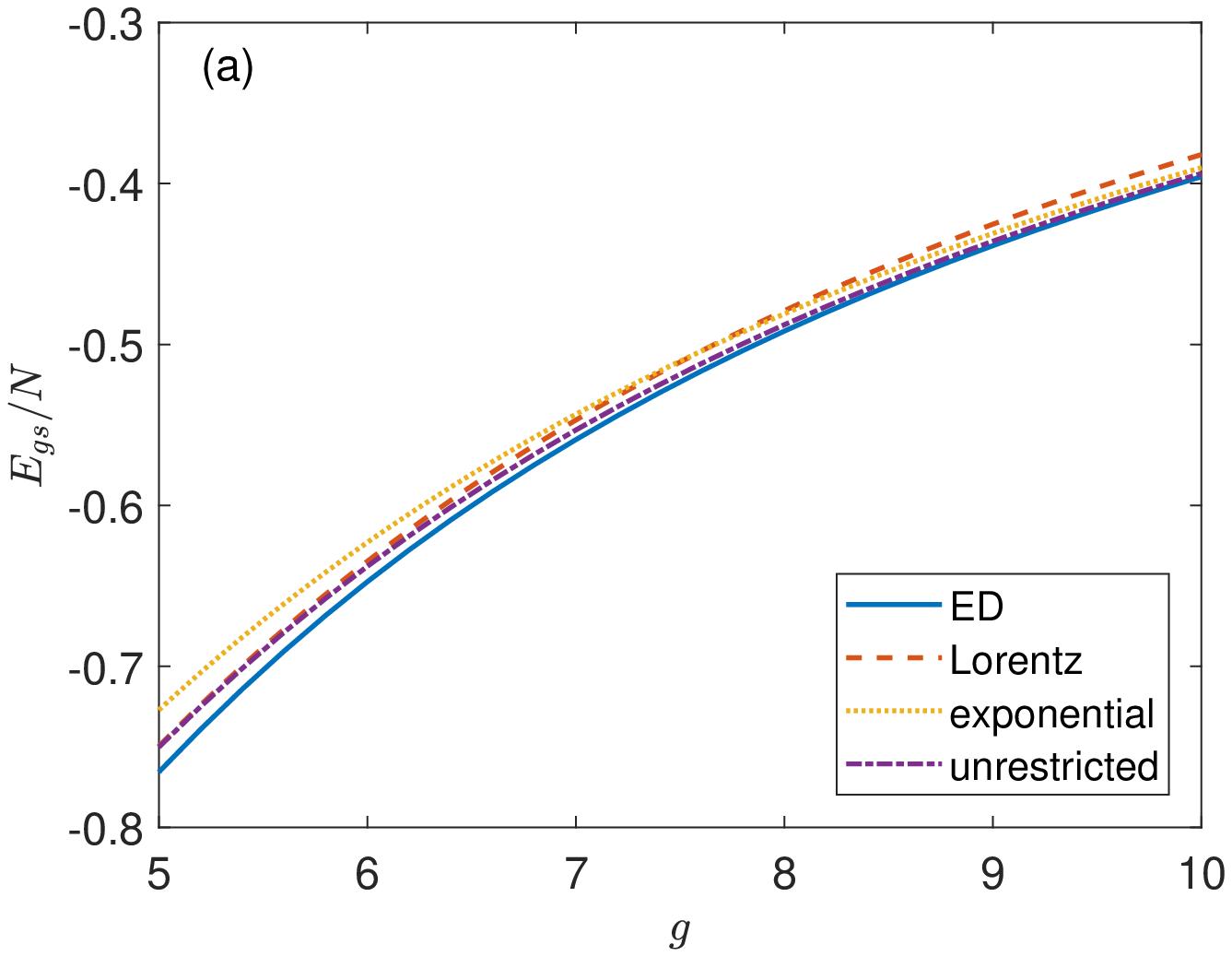}
\includegraphics[width= 0.45\textwidth ]{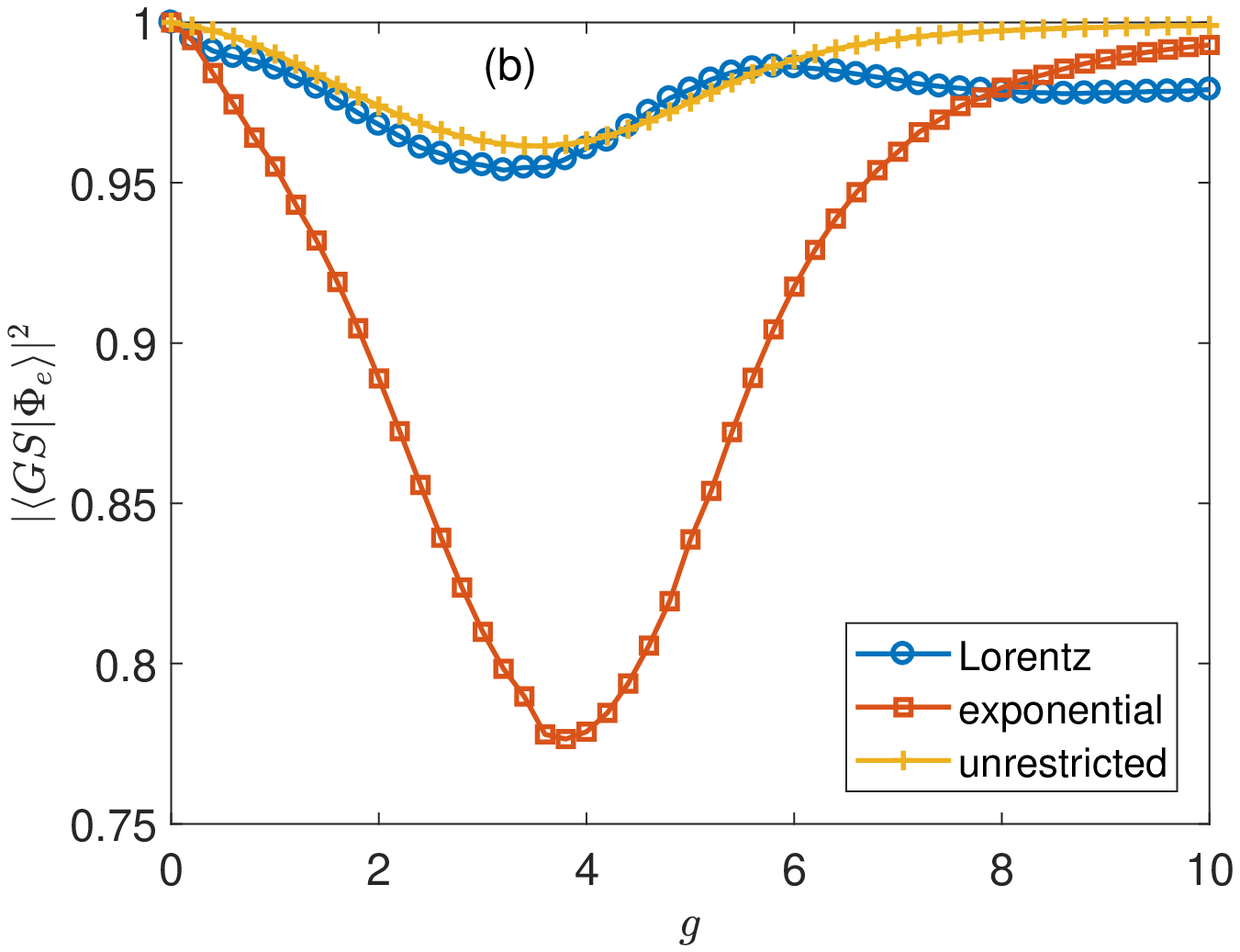}
\caption{(Color online) (a) Energy per particle of the energy-minimizing permanent states $ \Phi_e $ constructed with either the Lorentz-type (\ref{primitive}a) or the exponential-type (\ref{primitive}b) primitive orbitals, or unrestricted orbitals. The solid line indicates the exact ground state energy calculated by exact diagonalization (ED). (b) Overlap between the energy-minimizing permanent states $| \Phi_e \rangle$ and the exact ground state $|GS\rangle $. In this figure, the parameters are $N=L = 12$.  }
\label{fig_pbc_alg}
\end{figure*}

\begin{figure*}[tb]
\includegraphics[width= 0.32\textwidth ]{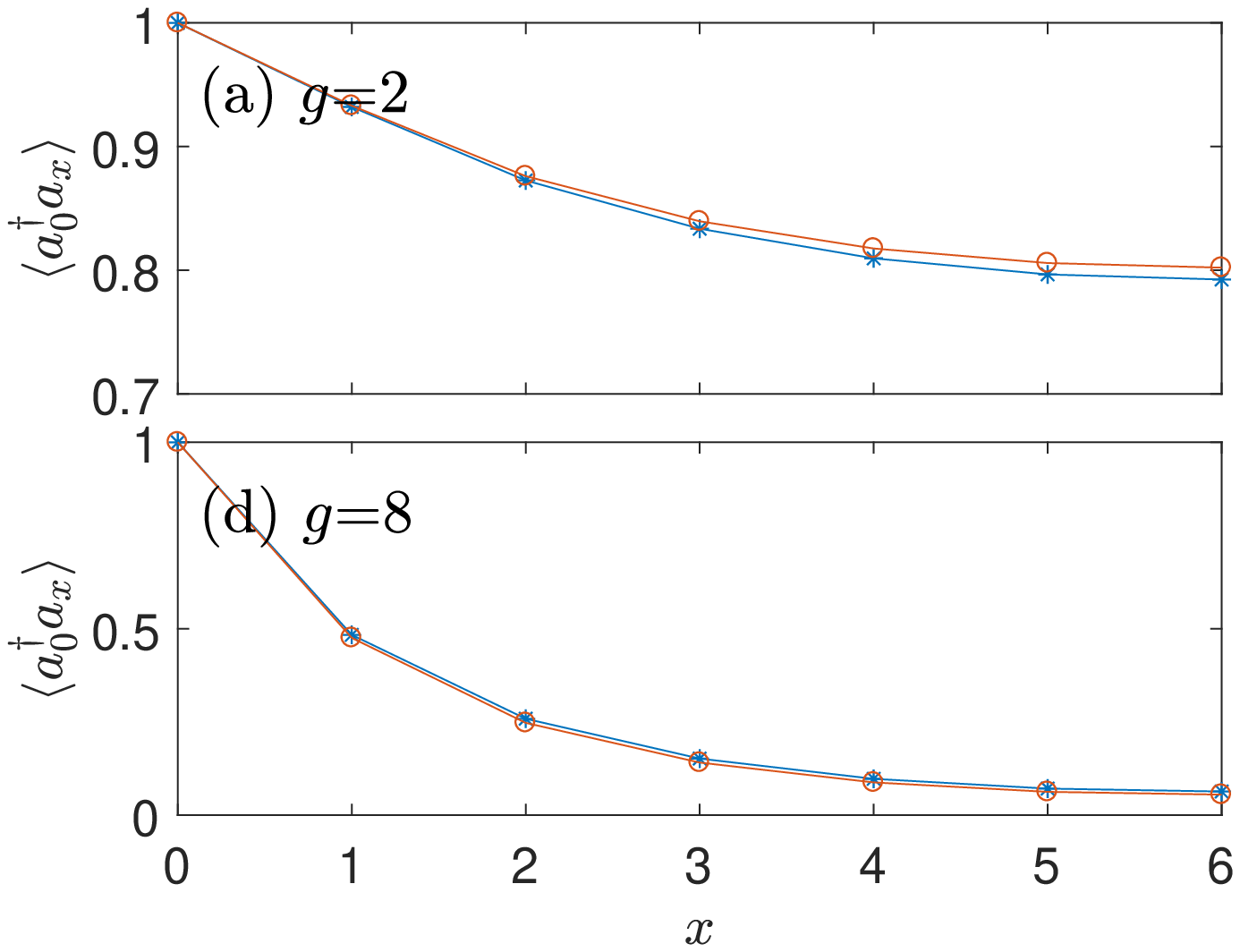}
\includegraphics[width= 0.32\textwidth ]{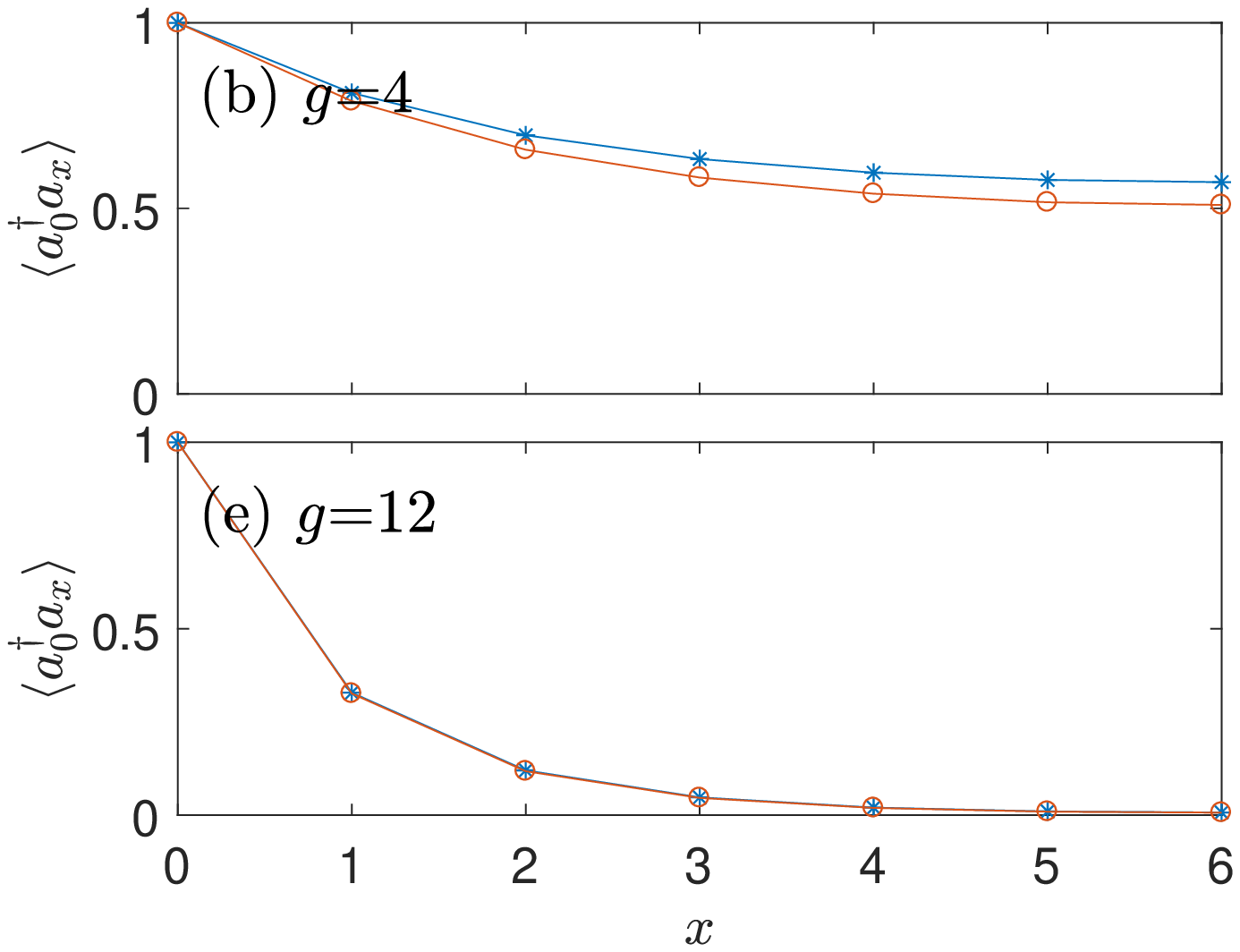}
\includegraphics[width= 0.32\textwidth ]{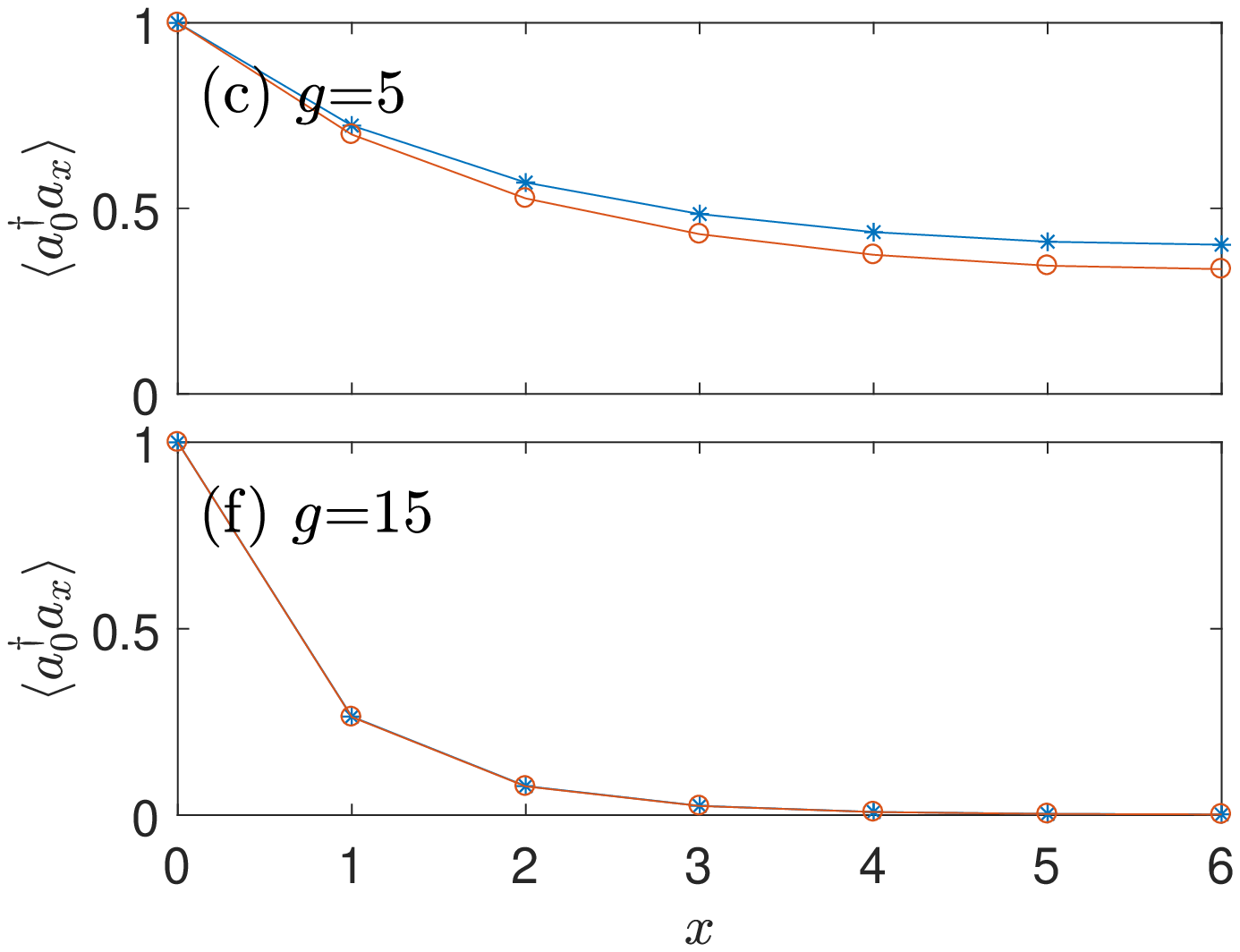}
\caption{(Color online) Single-particle correlator $\langle a_0^\dagger a_x \rangle $. In each panel, the $\ast $ markers are for the exact ground state $|GS\rangle $, while the circles are for the unrestricted energy-minimizing VWF $\Phi_e$.  The common parameters are $N=L = 12$. Note that because of the periodic boundary condition, the largest possible distance between two sites is $6 $.}
\label{fig_corr_alg}
\end{figure*}

\section{Application of the algorithm}\label{secgeneral}
\subsection{The Bose-Hubbard model revisited}
With the numerical optimization algorithm above, we can handle more general models. But let us start from the Bose-Hubbard model with the periodic boundary condition and at unit filling, and see how much it can improve over the results in Sec.~\ref{secbhm}.

For this particular model, we take a single configuration ($M=1$). We never encounter any local minimum, and the symmetry of the model is perfectly preserved by the optimal orbitals. That is, although we always start from random orbitals, the orbitals we eventually get are always of the same shape and differ from each other just by translations, as described by (\ref{shape}).

In Fig.~\ref{fig_pbc_alg}, we show the estimated ground state energy $E_{gs}$ and the overlap $|\langle GS| \Phi_e \rangle |^2$
obtained with unrestricted orbitals. For comparison, also shown are the results with Lorentz or exponential orbitals. We see that in the region $g\leq 6$, the VWF with unrestricted orbitals does not improve much over the VWF with Lorentz orbitals neither by energy nor by overlap. Accordingly, the predicted correlation function $ \langle a_0^\dagger a_x \rangle  $ is close to that predicted by the Lorentz VWF, as can be seen by comparing Figs.~\ref{fig_corr_alg}(a)-(c) with Figs.~\ref{fig_corr_e}(a)-(c). However, in the region $g\geq 6$,  unrestricted orbitals do lead to improvement over both the Lorentz and the exponential orbitals, both by the criterions of energy and overlap. For instance, at $g=8$ and with $N=L=12$, the overestimate in energy (difference between the variational and exact ground state energy) reduces from $0.0127$ (Lorentz) and $ 0.0106$ (exponential) to $0.0039 $ (unrestricted), and simultaneously the deficiency in overlap ($1- |\langle GS | \Phi_e \rangle |^2$) reduces from 0.0213 (Lorentz) and 0.0203 (exponential) to 0.0026 (unrestricted). These numbers indicate that the algorithm produces really good approximation of the exact ground state. Indeed, as Figs.~\ref{fig_corr_alg}(d)-(f) show, now the VWF-predicted correlation function $ \langle a_0^\dagger a_x \rangle  $ almost coincides with the exact values for $g\geq 8$.

Overall, in Figs.~\ref{fig_pbc_alg} and \ref{fig_corr_alg}, we see that for a system as large as $N=L=12$ and in the whole range of $g$, the ground state can be very well approximated by a permanent state. This is very impressive in view of the dimension of the many-body Hilbert space, which is as large as $\mathcal{D} =1\,352\,078 $. The exact ground state is obtained by exact diagonalization and is a vector of this size with the Fock states as a basis. In contrast, the permanent variational state is constructed with $N=12$ orbitals, each of which is a vector of size $L=12$. That the orbitals are not even independent but related to each other by translations means that the permanent state is encoded with a $12 \times 12 $ circulant matrix with only 12 independent variables. From the data compression point of view, with the permanent state as an approximation of the exact ground state, the compression ratio is very high while the fidelity is still very good.

\subsection{More general models}

\begin{figure}[tb]
\includegraphics[width= 0.45\textwidth ]{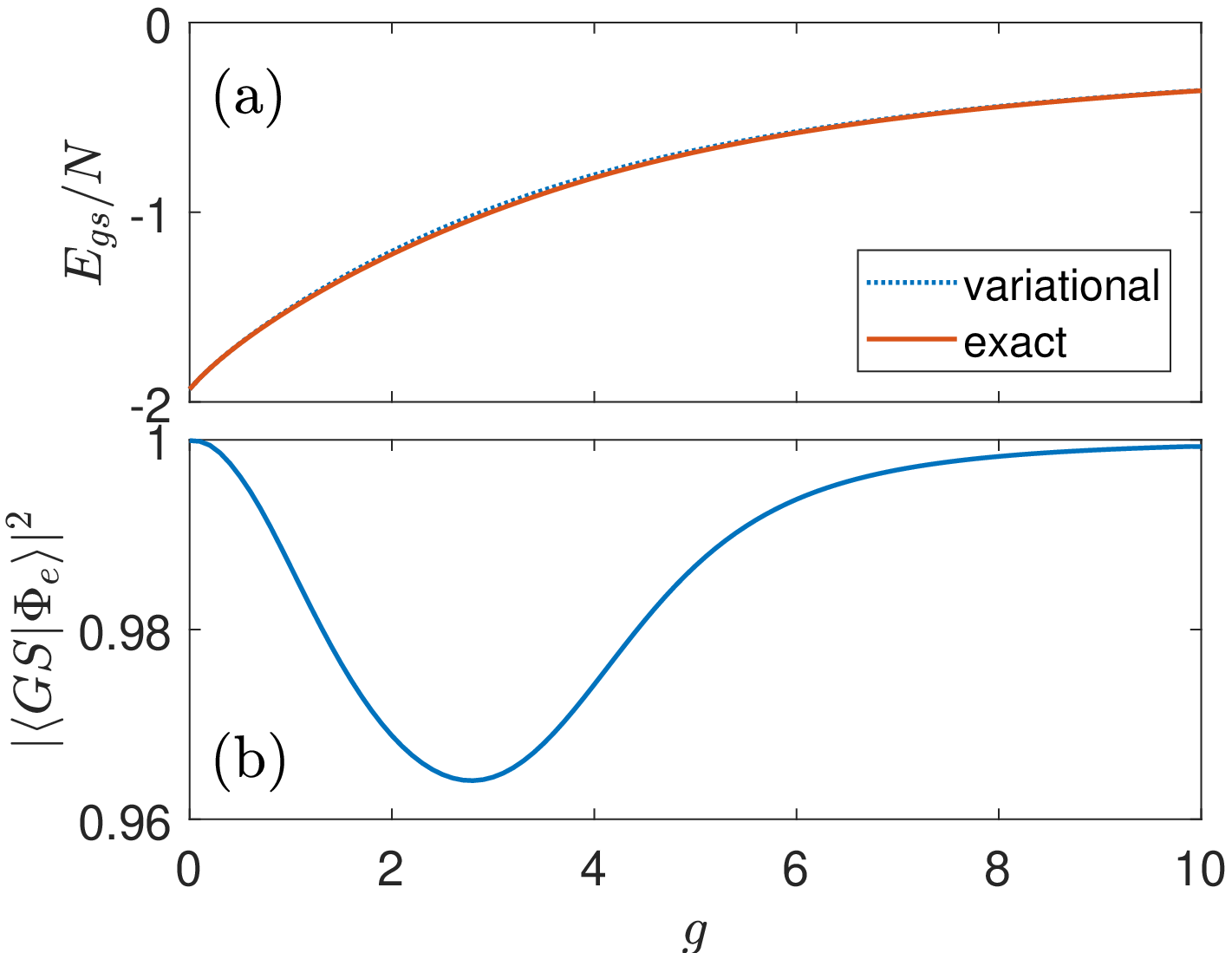}
\caption{(Color online) (a) Ground state energy (per particle) of a Bose-Hubbard model at unit filling with the open boundary condition [see (\ref{hobc}) for the Hamiltonian]. The solid line is obtained by exact diagonalization, while the dotted line by the permanent variational approach. (b) Overlap between the exact ground state $|GS\rangle $ and the energy-minimizing permanent state $\Phi_e$. The parameters are $N=L=11$.}
\label{fig_obc}
\end{figure}

\begin{figure*}[tb]
\includegraphics[width= 0.32\textwidth ]{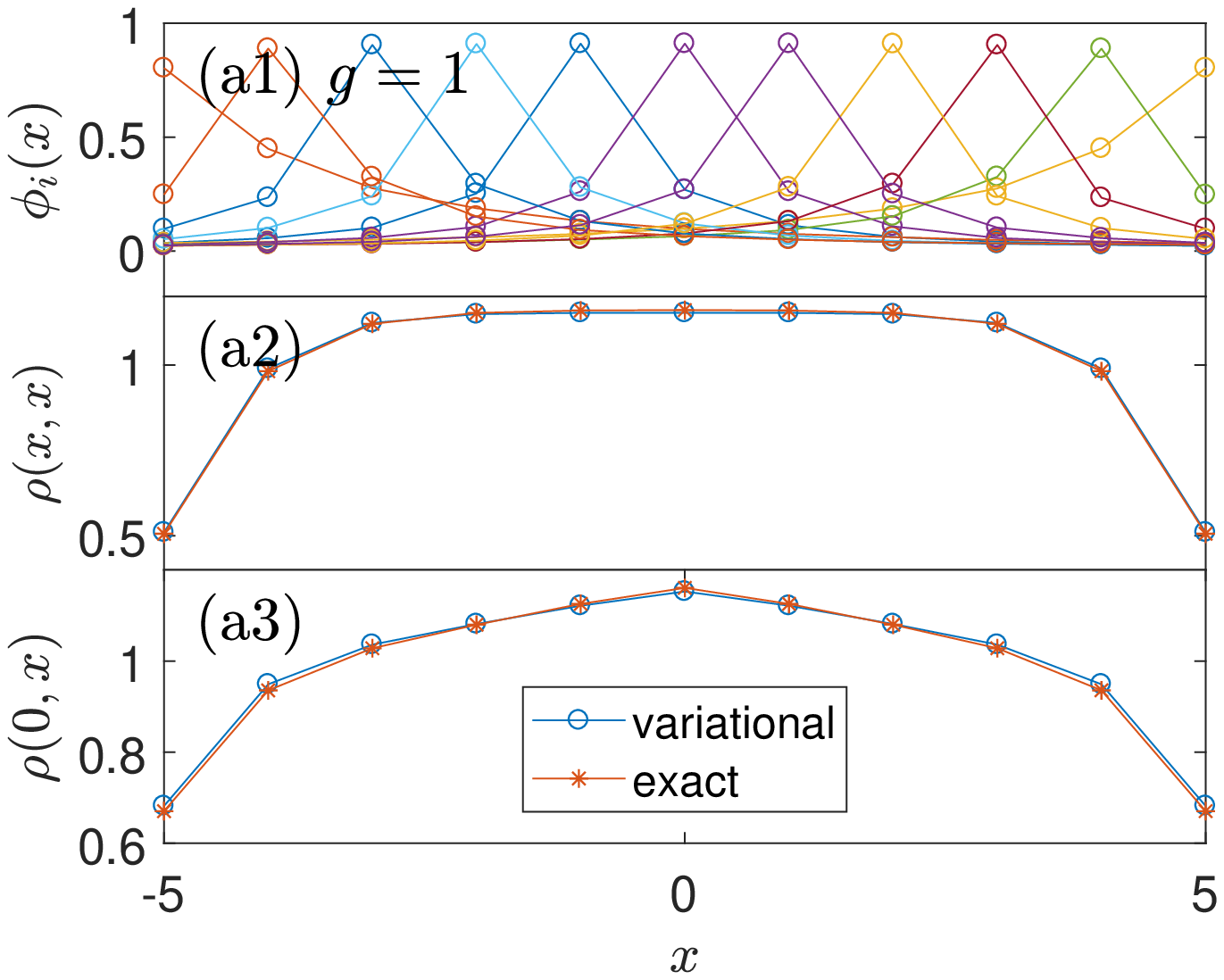}
\includegraphics[width= 0.32\textwidth ]{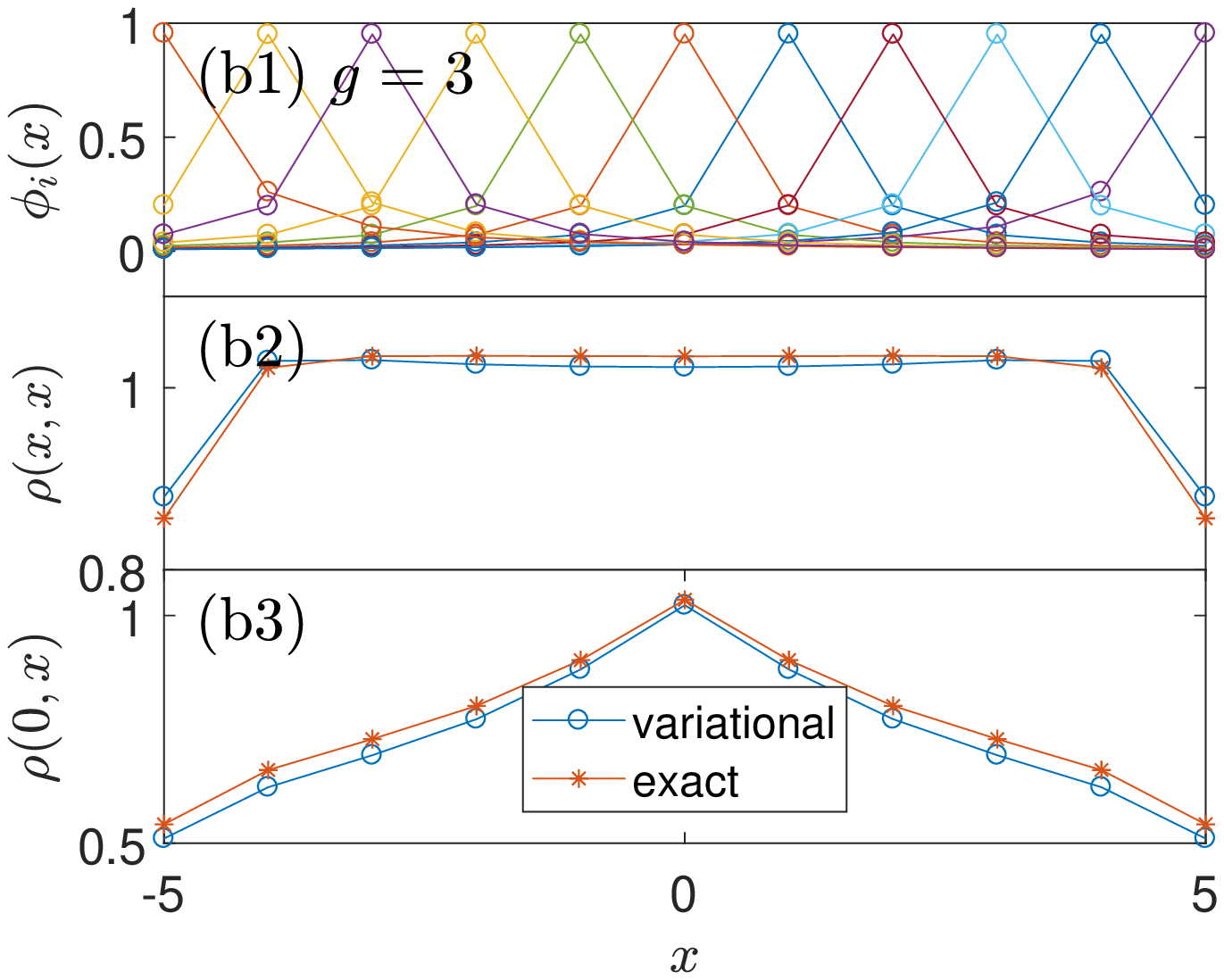}
\includegraphics[width= 0.32\textwidth ]{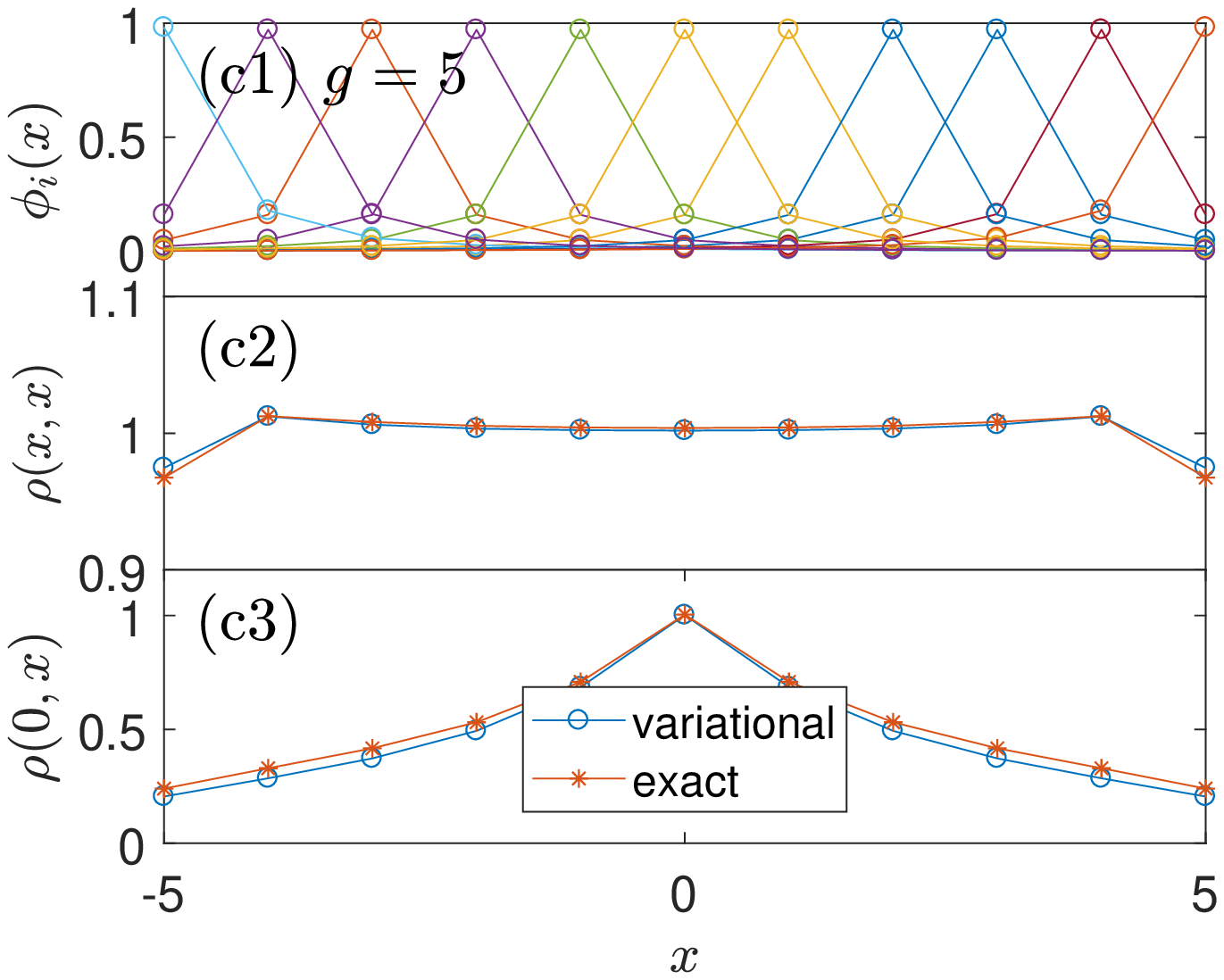}
\caption{(Color online) The orbitals $\phi_i(x)$, the particle density distribution $\rho(x,x)$, and the one-particle correlator $\rho(0,x)$ for three different values of $g$. The circles are for the permanent state, while the $\ast$ markers are for the exact diagonalization results. The setting is a one-dimensional Bose-Hubbard model with the open boundary condition as defined in (\ref{hobc}). The fixed parameters are $(N,L) = (11,11)$. }
\label{fig_obc2}
\end{figure*}

\begin{figure*}[tb]
\includegraphics[width= 0.45\textwidth ]{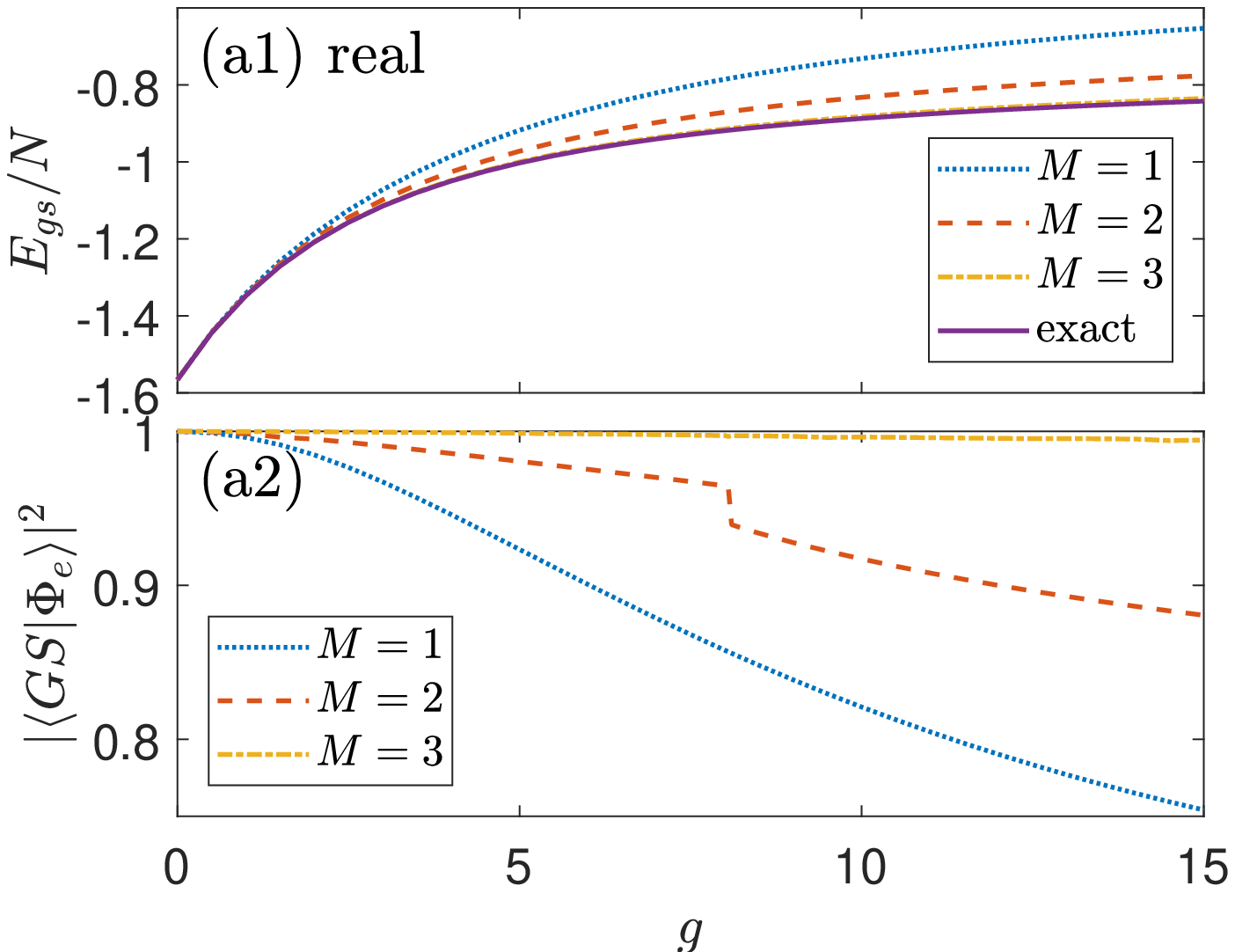}
\includegraphics[width= 0.45\textwidth ]{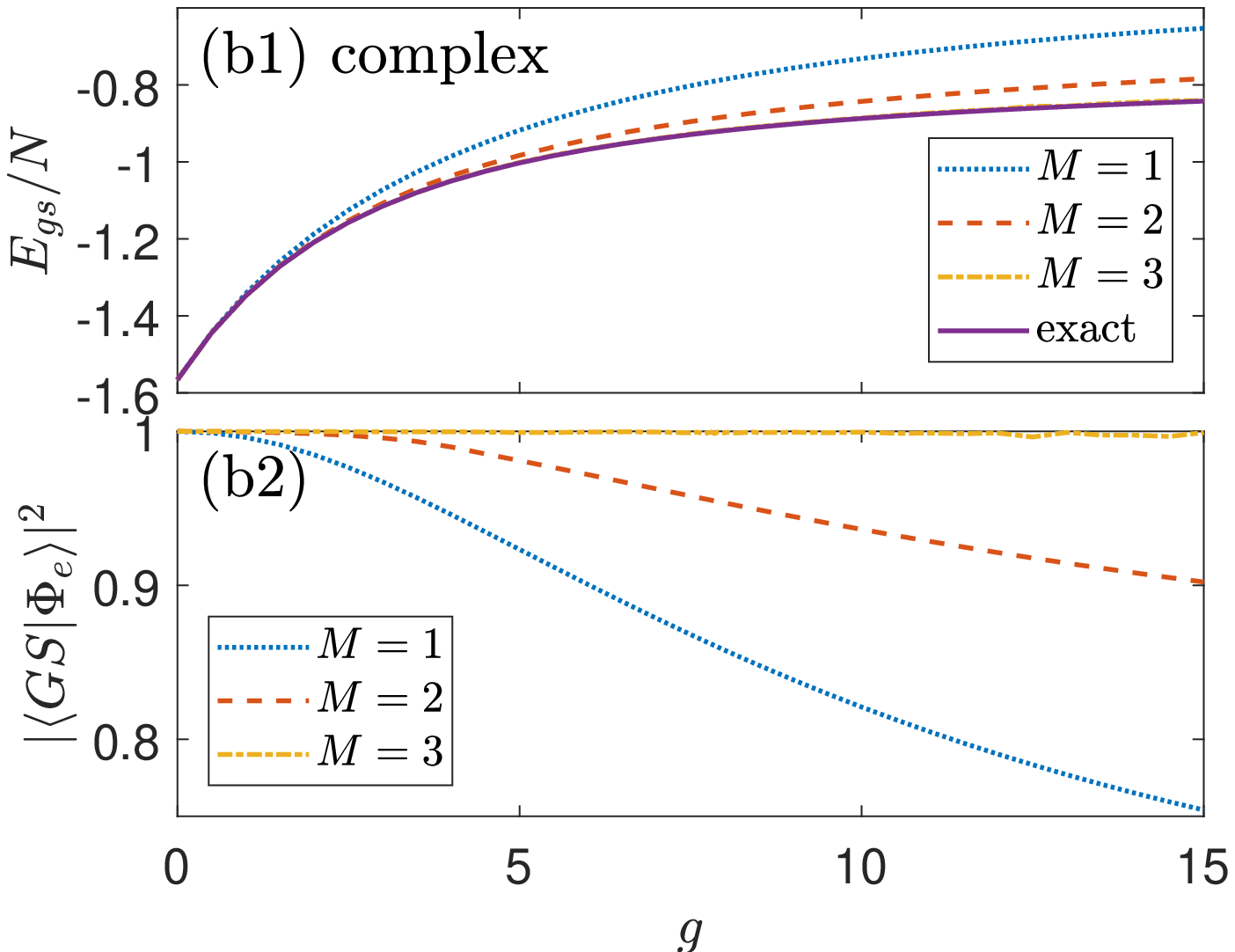}
\caption{(Color online) (a1) and (b1): Ground state energy per particle of a Bose-Hubbard model in a harmonic trap. The solid lines are obtained by exact diagonalization, while the other lines are by the permanent variational wave functions with different numbers of configurations. (a2) and (b2): Overlap between the exact ground state $|GS \rangle $ and the energy-minimizing variational state $\Phi_e$.  The fixed parameters are $(N,L,\kappa) = (3,13, 0.2)$. The left (right) column is calculated with real (complex) orbitals. }
\label{fig_trapEO}
\end{figure*}

\begin{figure*}[tb]
\includegraphics[width= 0.32\textwidth ]{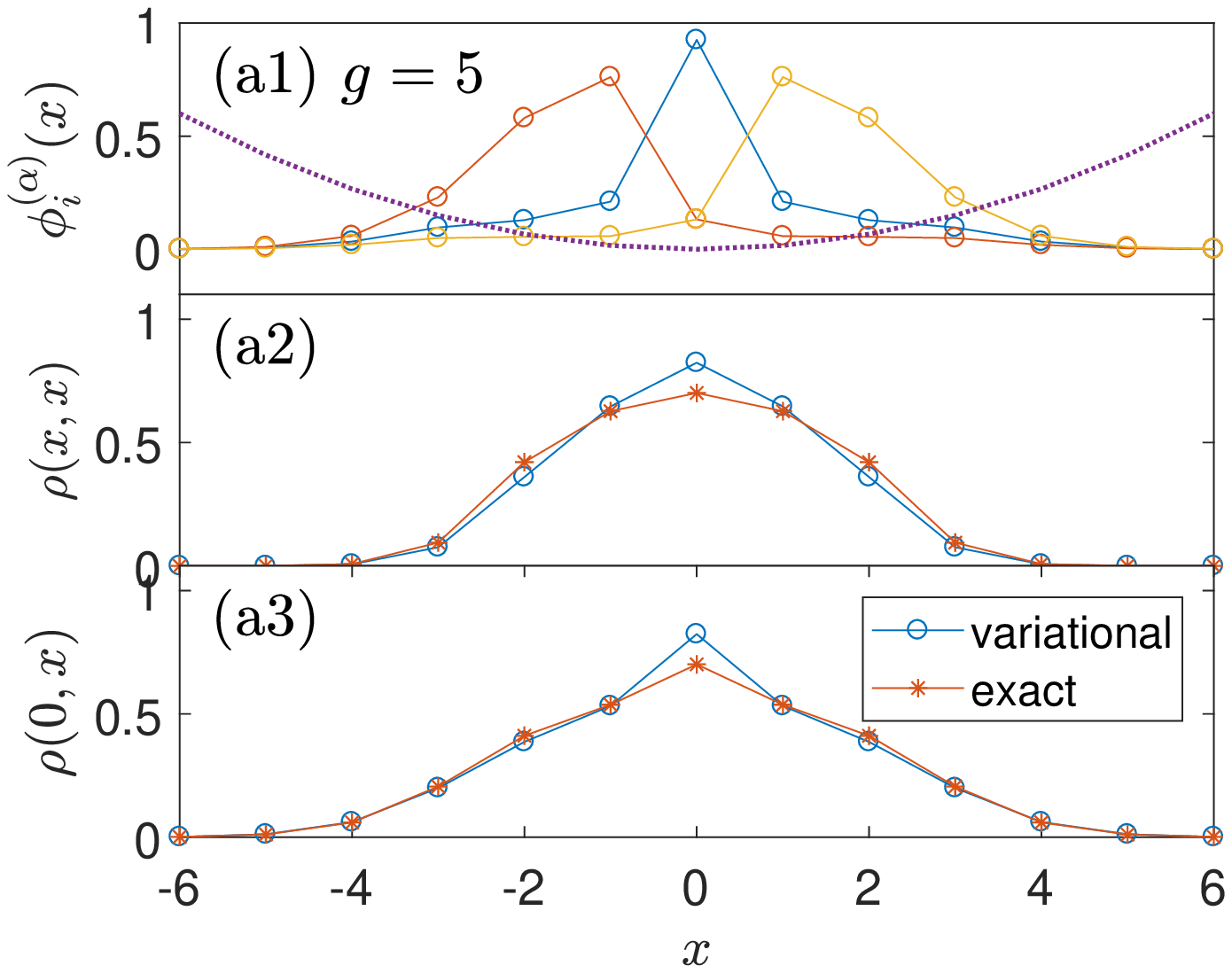}
\includegraphics[width= 0.32\textwidth ]{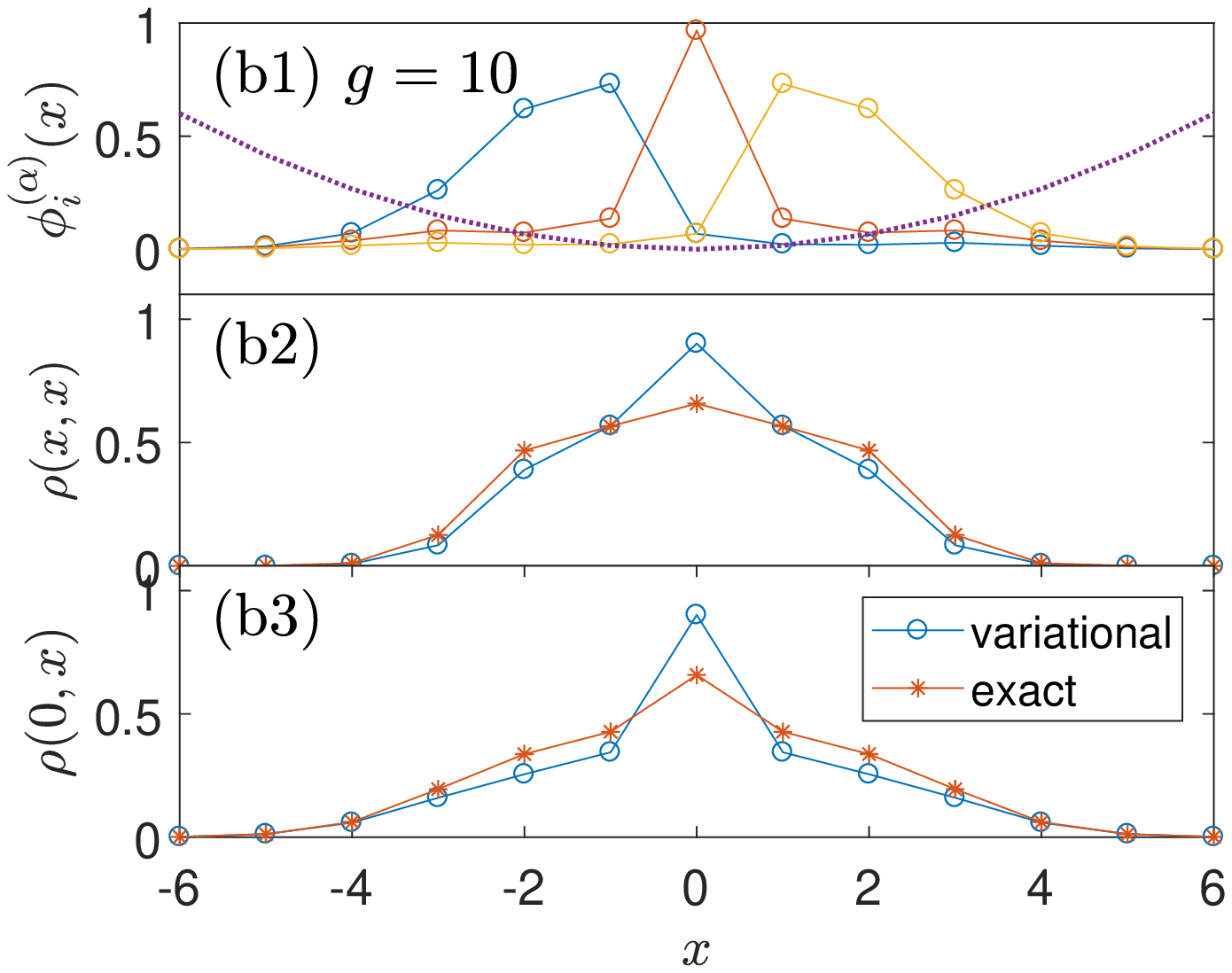}
\includegraphics[width= 0.32\textwidth ]{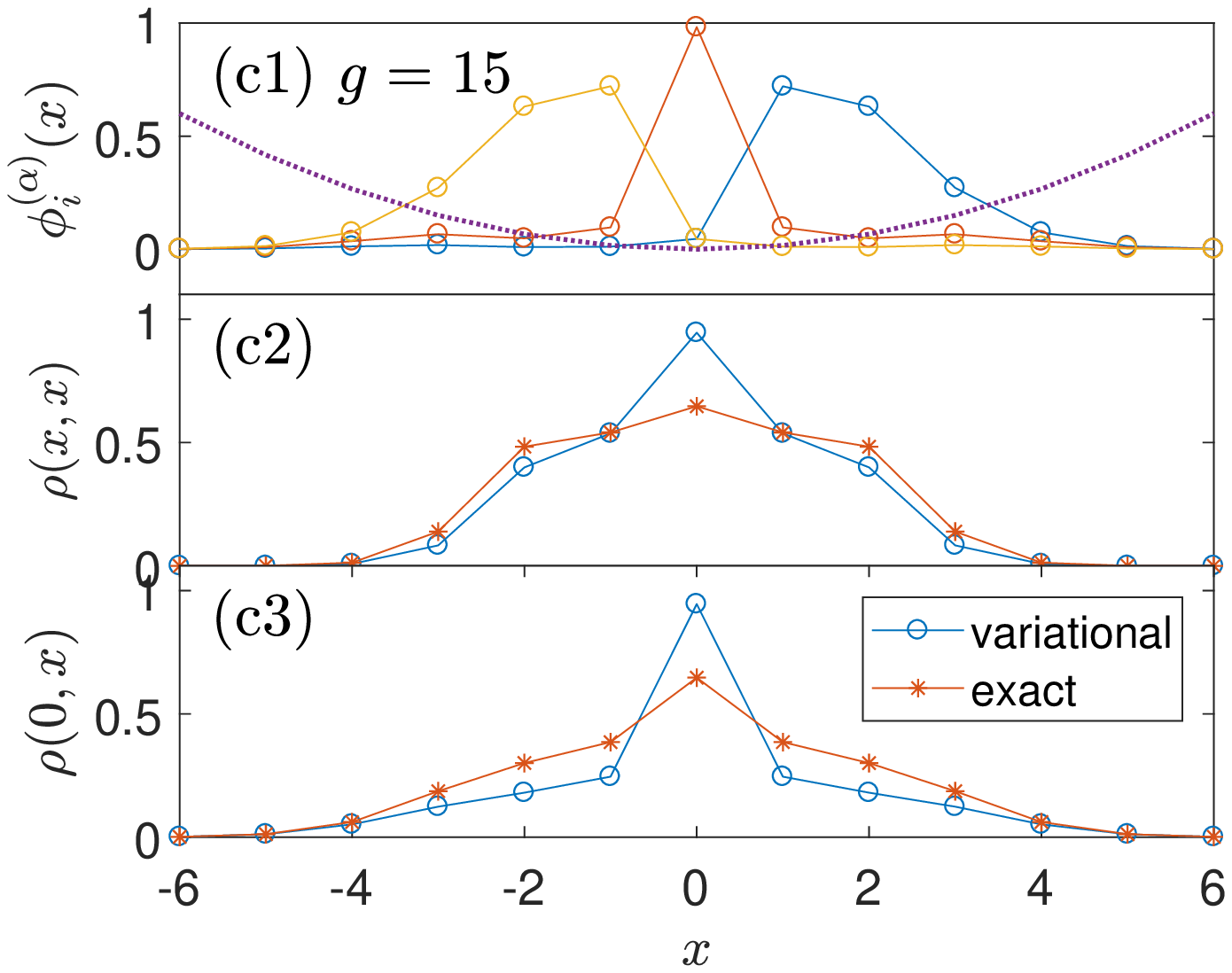}
\caption{(Color online) The orbitals $\phi_i(x)$, the particle density distribution $\rho(x,x)$, and the one-particle correlator $\rho(0,x)$ for three different values of $g$. The circles are for the permanent state, while the $\ast$ markers are for the exact diagonalization results.  The setting is a one-dimensional Bose-Hubbard model in a harmonic trap as defined in (\ref{htrap}). The fixed parameters are $(N,L,\kappa) = (5, 13, 0.25)$ as in Fig.~\ref{fig_convergence}. In the top panels, the dotted line sketches the harmonic potential. }
\label{fig_trapM1}
\end{figure*}

\begin{figure*}[tb]
\includegraphics[width= 0.32\textwidth ]{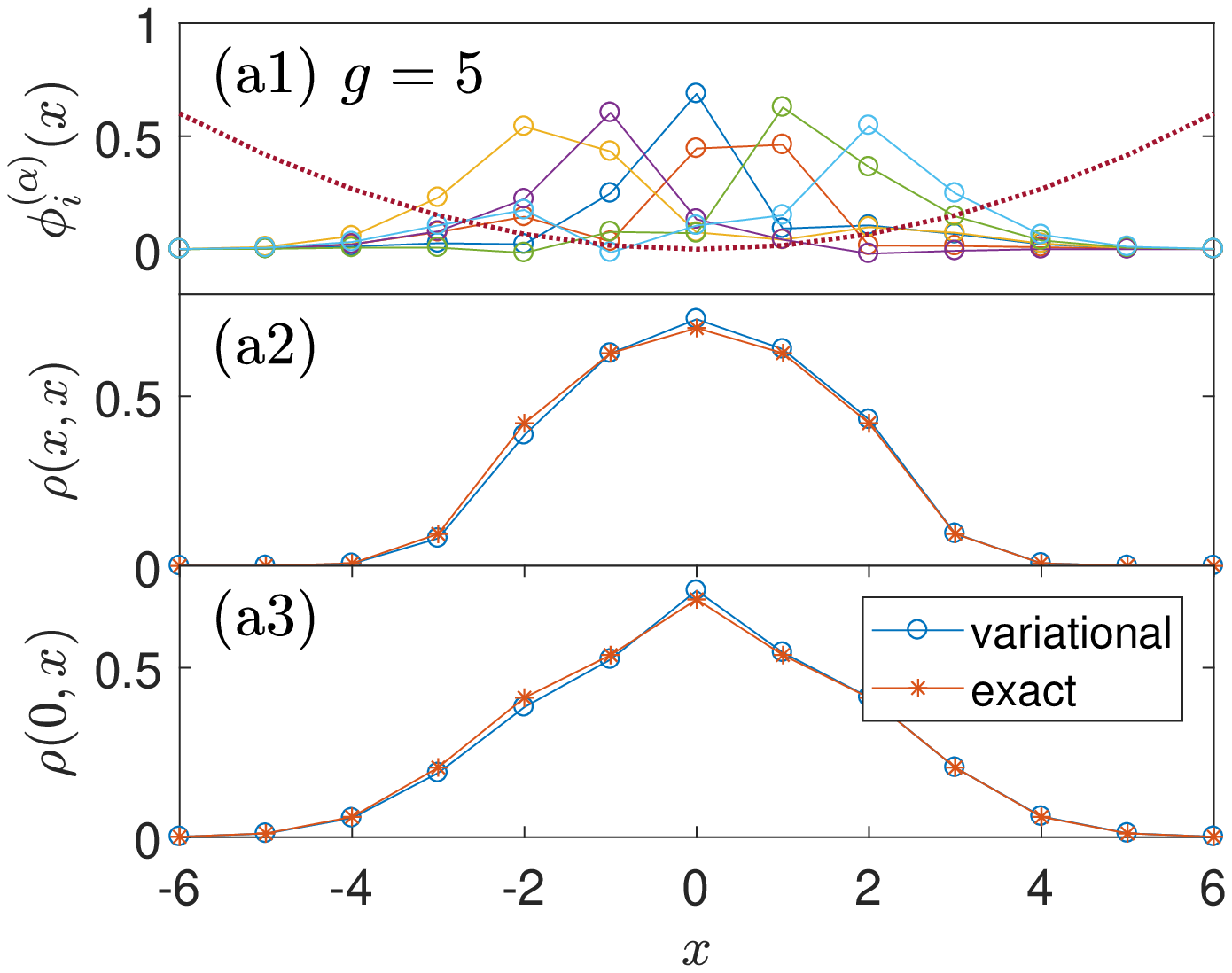}
\includegraphics[width= 0.32\textwidth ]{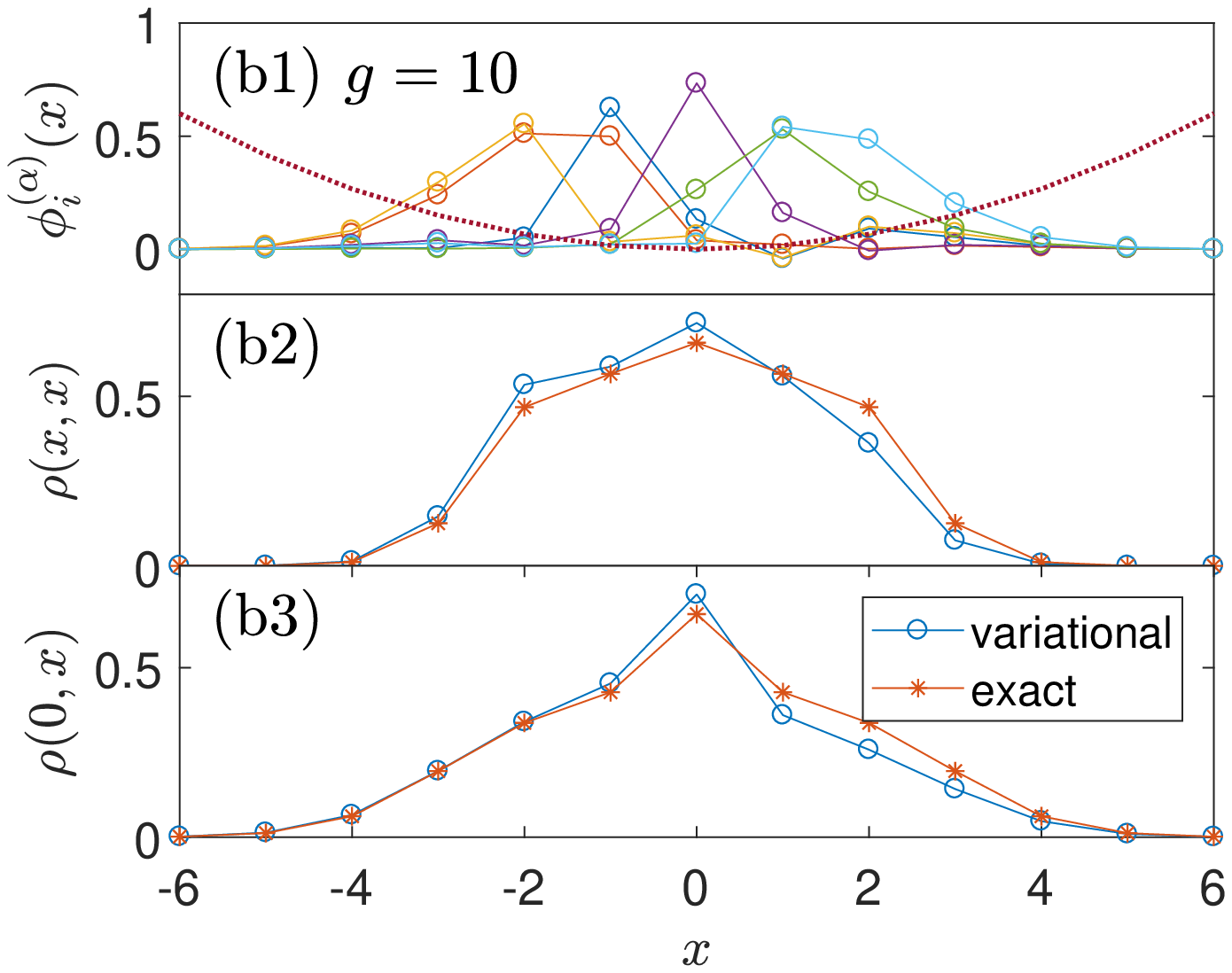}
\includegraphics[width= 0.32\textwidth ]{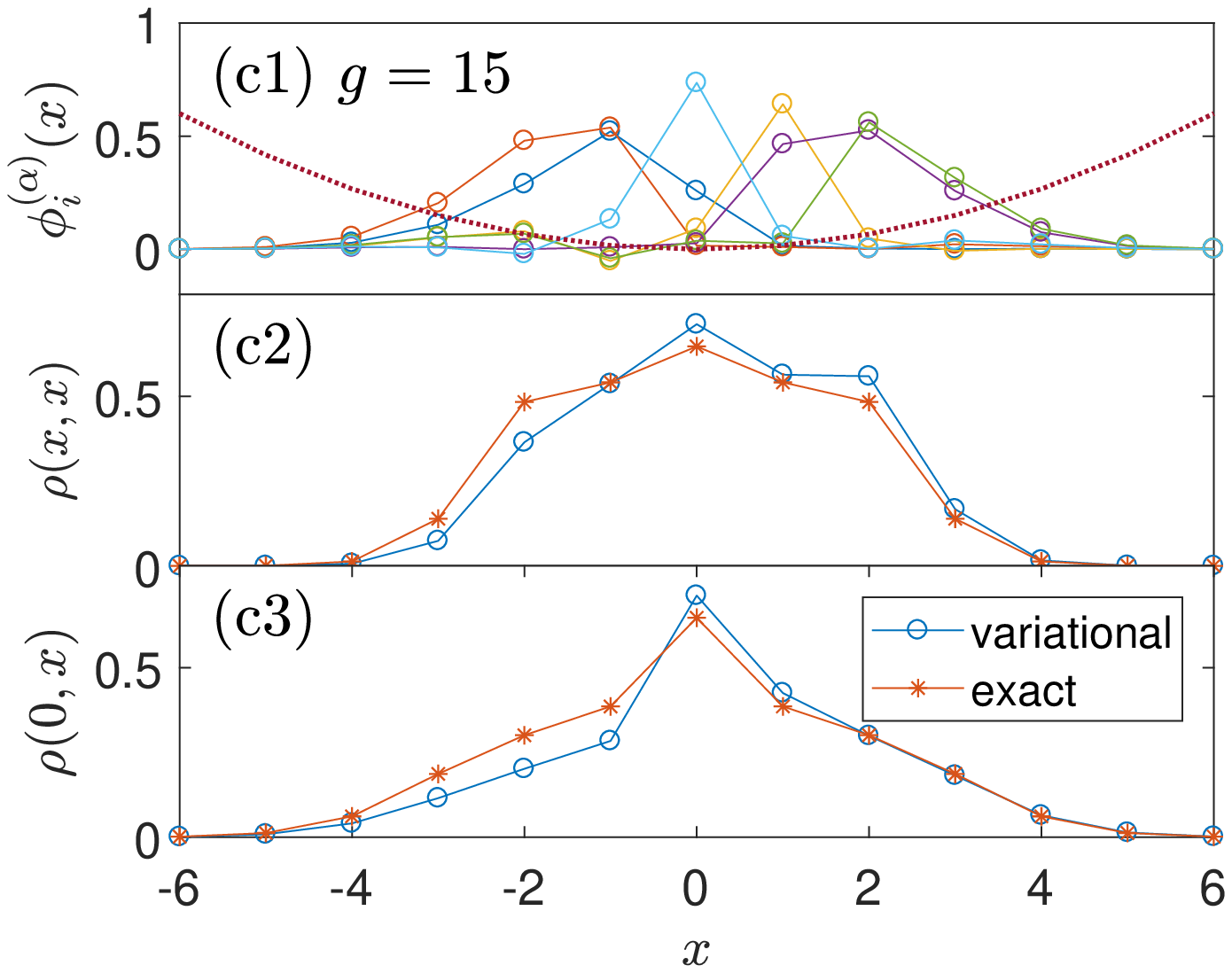}
\caption{(Color online) Same as Fig.~\ref{fig_trapM1}, but with $M=2$ configurations. Note that the variationally calculated density distribution $\rho(x,x)$ and the correlation function $\rho(0,x)$ are asymmetric.}
\label{fig_trapM2}
\end{figure*}

\begin{figure*}[tb]
\includegraphics[width= 0.32\textwidth ]{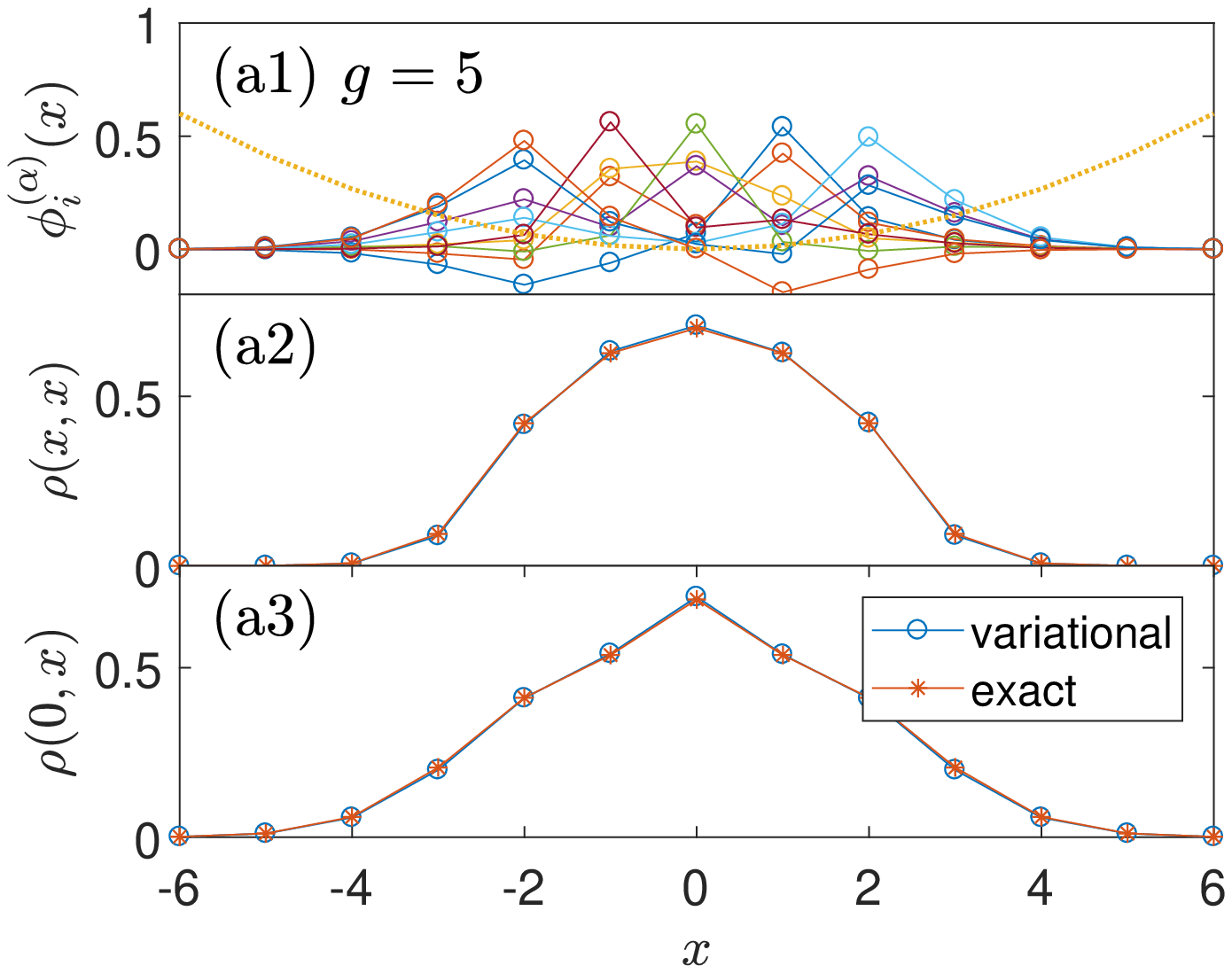}
\includegraphics[width= 0.32\textwidth ]{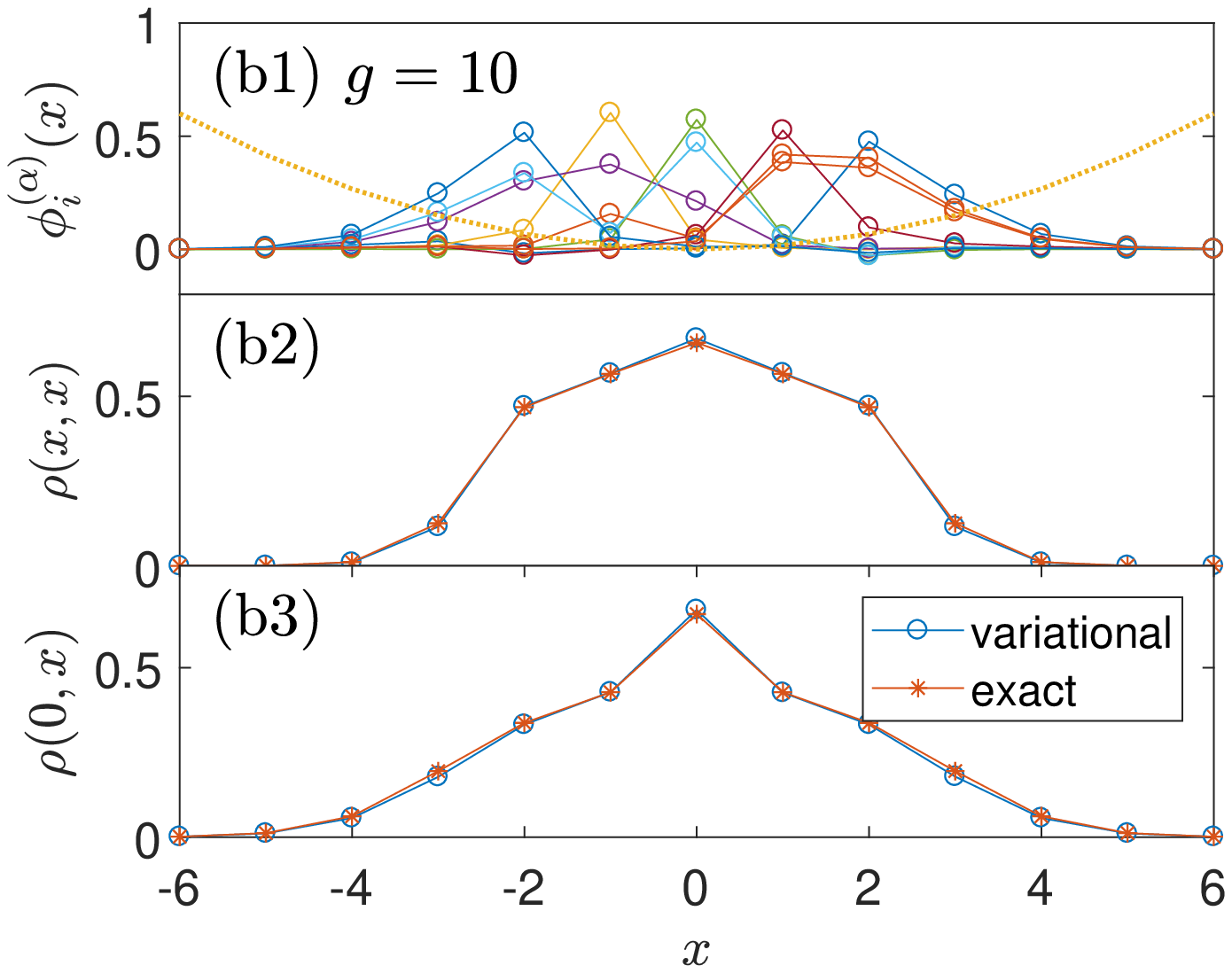}
\includegraphics[width= 0.32\textwidth ]{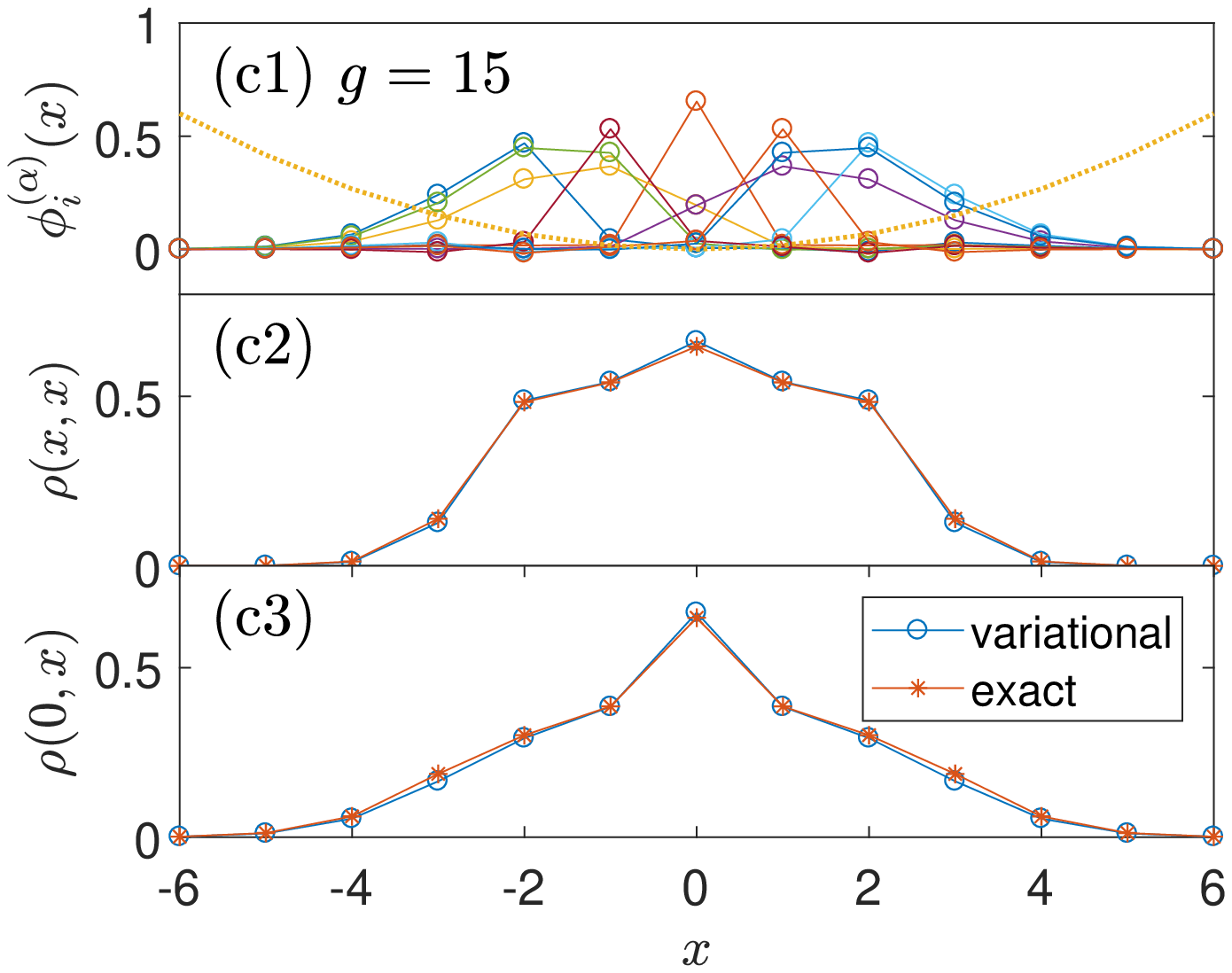}
\caption{(Color online) Same as Fig.~\ref{fig_trapM1} and Fig.~\ref{fig_trapM2}, but with $M=3$ configurations.  }
\label{fig_trapM3}
\end{figure*}

We now turn to more general models.

First of all, let us break the periodic boundary condition of the Bose-Hubbard model above and replace it with the open boundary condition. The Hamiltonian is
\begin{eqnarray}\label{hobc}
   H_{obc} = - \sum_{x=-L_0}^{L_0 -1} (a_x^\dagger a_{x+1} + \text{h.c.} ) + \frac{g}{2} \sum_{x=-L_0}^{L_0}a_x^\dagger a_x^\dagger a_x a_x.\;\;\;\;\;
 \end{eqnarray}
We still assume unit filling, so the particle number $N= L = 2 L_0 + 1$. By the mere change of the boundary condition, the translation symmetry is lost and now the orbitals should differ in shape. It is then unclear what orbitals to choose to construct the permanent state---We have to resort to the numerical algorithm.

In Fig.~\ref{fig_obc}(a), with $N=L = 11$, the permanent estimated ground state energy is compared with the exact diagonalization result, and in Fig.~\ref{fig_obc}(b), the overlap between the permanent variational state $\Phi_e$ and the exact ground state $|GS \rangle $ is shown. We see that like the periodic boundary condition case, across the full range of the on-site interaction $g$, the permanent state is a very good approximation of the exact ground state. The overlap is as large as $0.964$ even in the worst case, and the relative error in energy is at most $2\%$.

In Fig.~\ref{fig_obc2}, for three different values of $g$, we show the constituent orbitals, the density distribution function $\rho(x,x)$ and the correlation function $\rho(0,x)$. For each value of $g$, we start from $N$ random orbitals, and then update each orbital 100 times.
In the top panels, we see that besides the bulk orbitals which are similar to each other in shape, there are two edge orbitals, which are maximal on the edges and decay into the bulk. As the permanent state is very close to the exact ground state by overlap, these orbitals provide a very good picture of the exact ground state. We also see that the permanent state predicted values of the density distribution and the correlation function agree with the exact results very well.

As a second example, let us consider the one-dimensional Bose-Hubbard model in a harmonic trap, with the Hamiltonian of (\ref{htrap}).
In Fig.~\ref{fig_trapEO}, we show the variational ground state energy $E_{gs}$ and the overlap $|\langle GS | \Phi_e \rangle |^2$, calculated with various numbers of configurations and with either real or complex orbitals, as functions of $g$. We see that in the single configuration case ($M=1$), the real and complex approaches agree with each other exactly. The discrepancy between the variational energy and the exact value is apparent, and the overlap drops to 0.76 at $g=15$. However, the situation improves dramatically if we take $M=3$ configurations. With three configurations, both the real and the complex approaches get the ground state energy so accurate that the difference with the exact value is hardly visible in the figures. Accordingly, the overlap  $|\langle GS | \Phi_e \rangle |^2$ is very close to unity throughout the range of $g$. Actually, the minimal value of the overlap is as large as $0.995$ in the two figures. This means that with three configurations, be the orbitals real or complex, we can recover the exact ground state to very high precision. In the intermediate case of $M=2$ configurations, the energy and the overlap are in-between. A peculiarity is that with real orbitals, the curve of the overlap is discontinuous at some point. This is due to the existence of local minima. At the critical point, two local minima change order in energy, or more precisely, the originally global minimum is surpassed by another minimum which was originally just a local one.

In Figs.~\ref{fig_trapM1}, \ref{fig_trapM2}, and \ref{fig_trapM3}, which correspond to $M=1$, $M=2$, and $M=3$, respectively, we show the (real) orbitals $\phi_i^{(\alpha)}$, the density distribution $\rho(x,x)$, and the correlation function $\rho(0,x)$ for three different values of $g$.
In Fig.~\ref{fig_trapM1}, we see that the single configuration approximation, in accord with Fig.~\ref{fig_trapEO}, is quantitatively not very accurate for large values of $g$. However, the orbitals are consistent with the fermionization picture in the large-$g$ limit \cite{girardeau,weiss,yukalov}. In Fig.~\ref{fig_trapM2}, with two configurations, the similarity between the variational results and exact results improves, but the difference is still apparent for $g=10$ and $15$. A notable feature of the two-configuration approximation is that it breaks the parity symmetry of the model---the density distribution $\rho(x,x)$ and the correlation function $\rho(0,x)$ are asymmetric. This kind of phenomena is quite common in the conventional Hartree-Fock approximation, and is equally so with us. Below we shall see more examples. In Fig.~\ref{fig_trapM3}, we have $M=3$ configurations, and now the variational results almost coincide with the exact results, as is anticipated by the energy and overlap information in Fig.~\ref{fig_trapEO}.

\begin{figure*}[tb]
\includegraphics[width= 0.32\textwidth ]{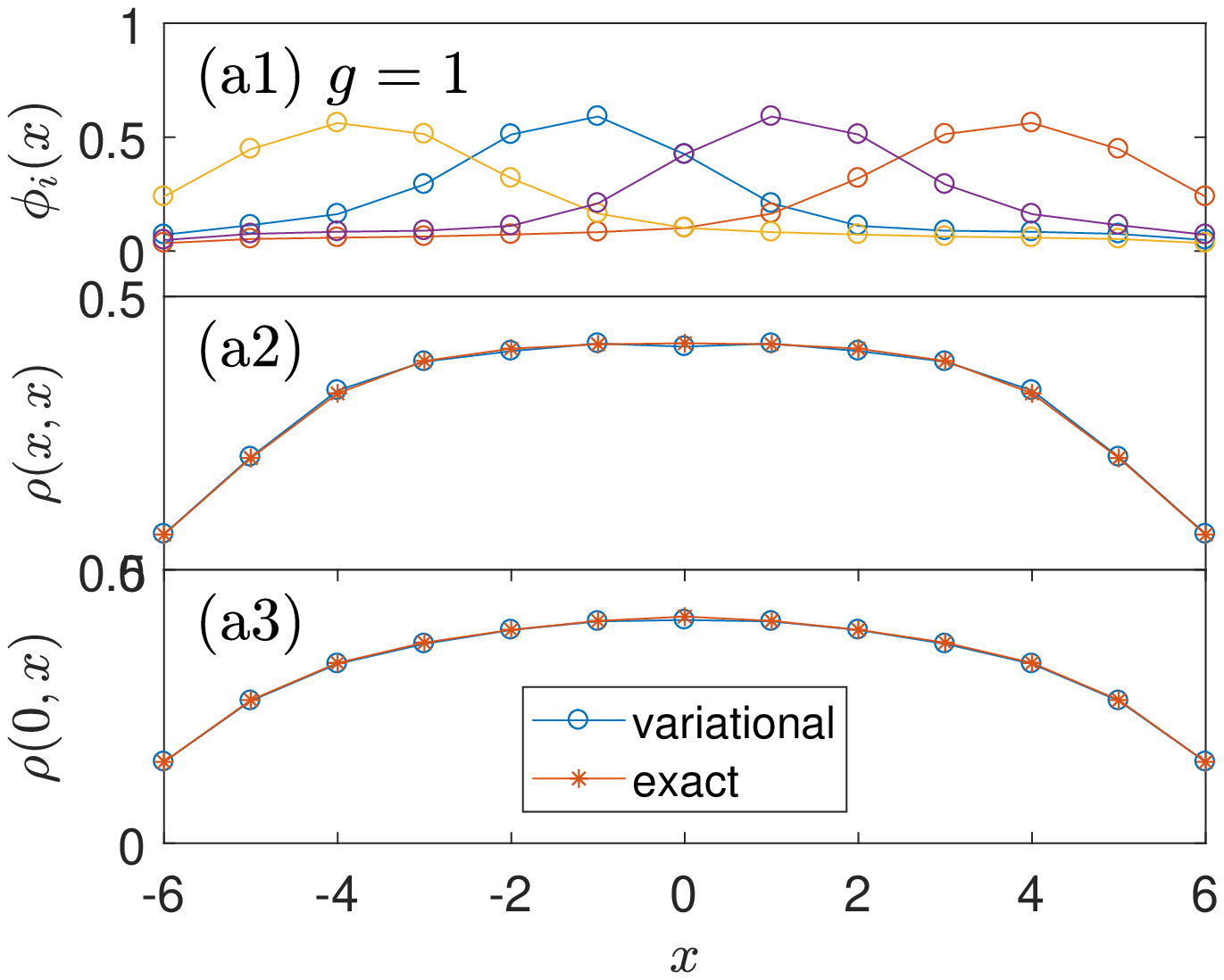}
\includegraphics[width= 0.32\textwidth ]{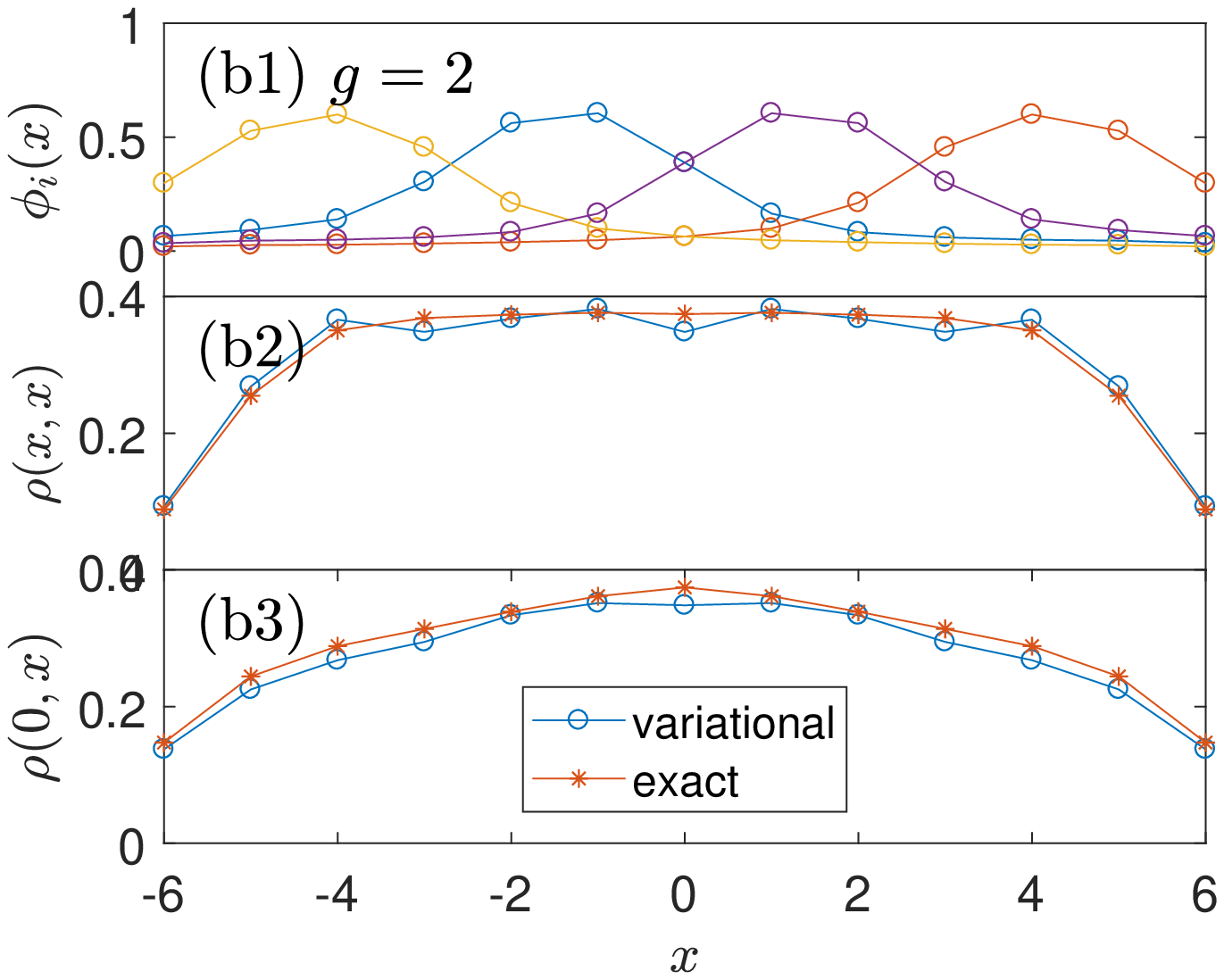}
\includegraphics[width= 0.32\textwidth ]{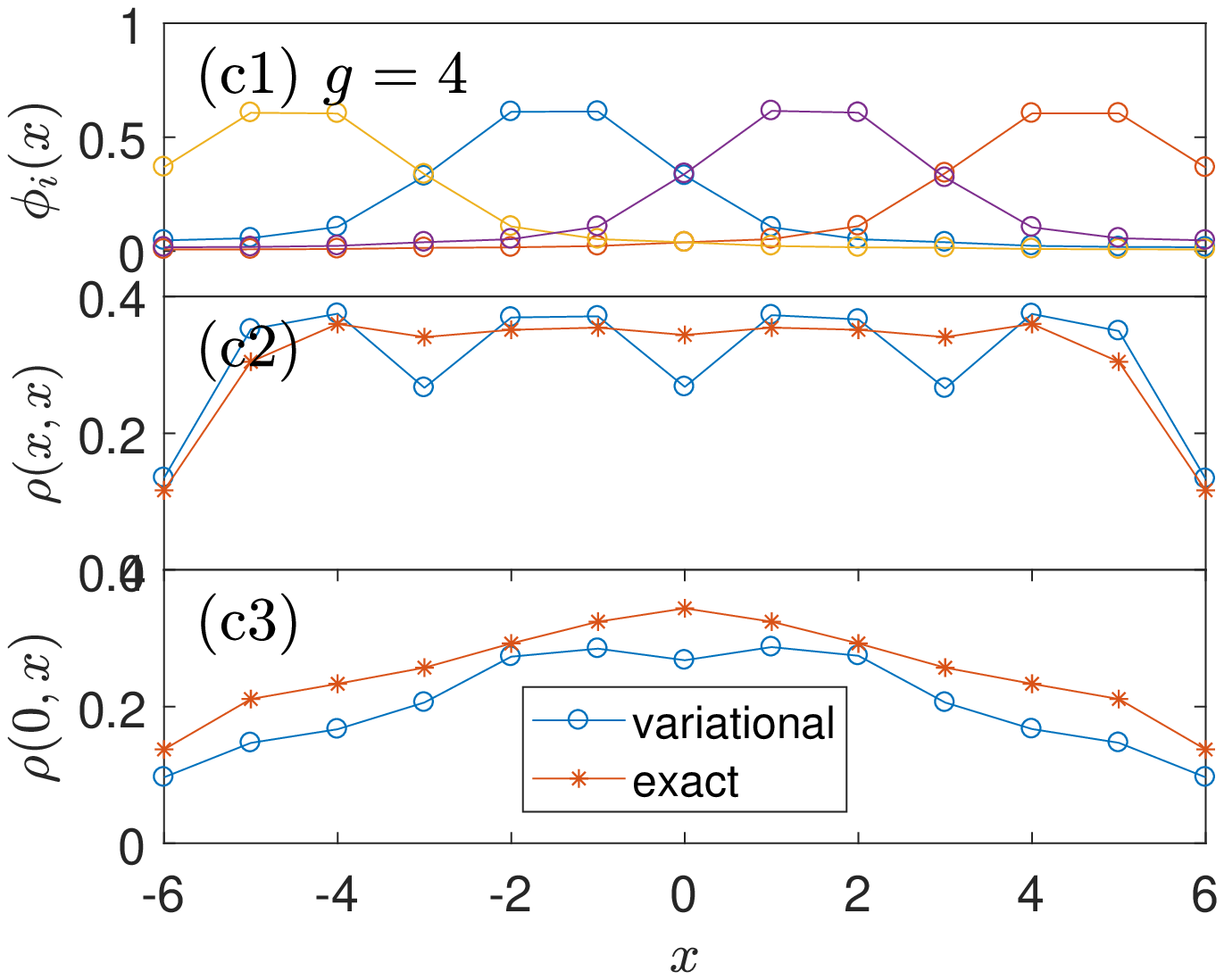}
\caption{(Color online) The orbitals $\phi_i(x)$, the particle density distribution $\rho(x,x)$, and the one-particle correlator $\rho(0,x)$ for three different values of $g$. In this figure, we have $N=4$ bosons on an open chain of $L=13$ sites. }
\label{fig_deter}
\end{figure*}

\begin{figure}[tb]
\includegraphics[width= 0.45\textwidth ]{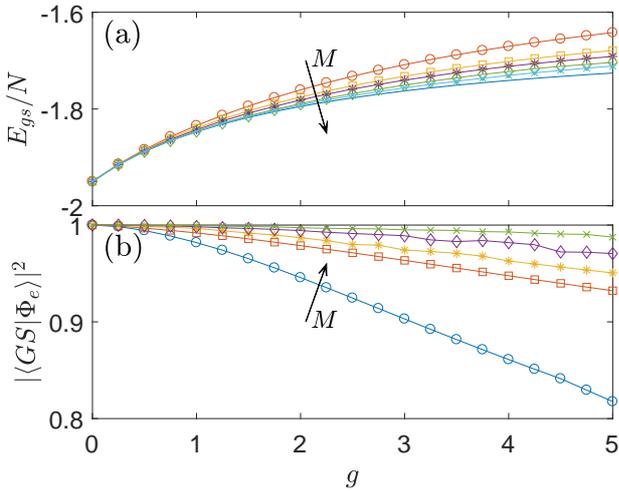}
\caption{(Color online) (a) Ground state energy per particle of a Bose-Hubbard model with the open boundary condition. The solid line is obtained by exact diagonalization, while the other lines are by the permanent variational approach with $M=1$ to $M=5$ configurations ($M$ increases in the direction of the arrow). The variational calculation is done with real orbitals. (b) Overlap between the exact ground state and the energy-minimizing permanent state. As in Fig.~\ref{fig_deter}, the particle number $N=4$ and the lattice size $L=13$.  }
\label{fig_flatEO}
\end{figure}

\begin{figure*}[tb]
\includegraphics[width= 0.32\textwidth ]{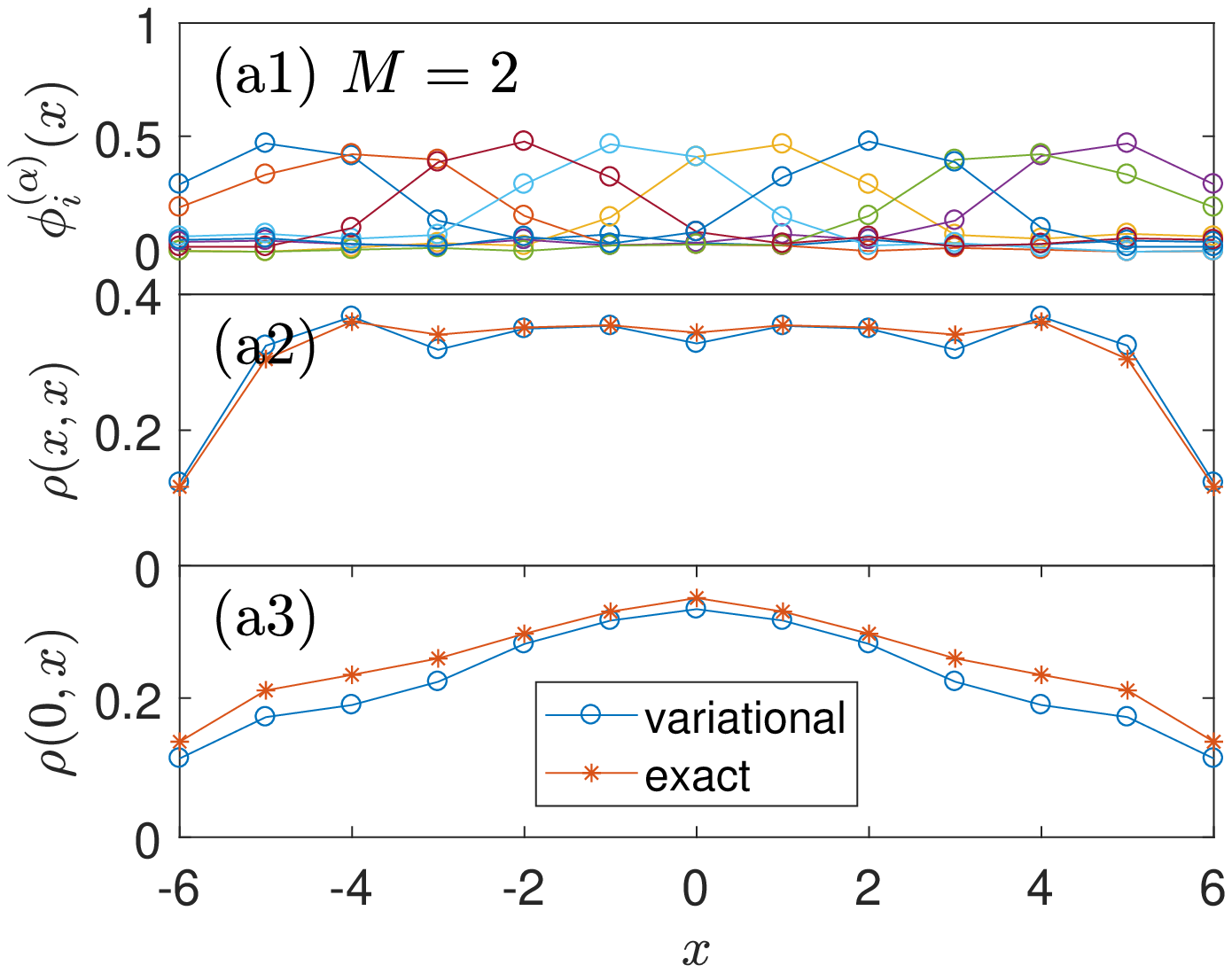}
\includegraphics[width= 0.32\textwidth ]{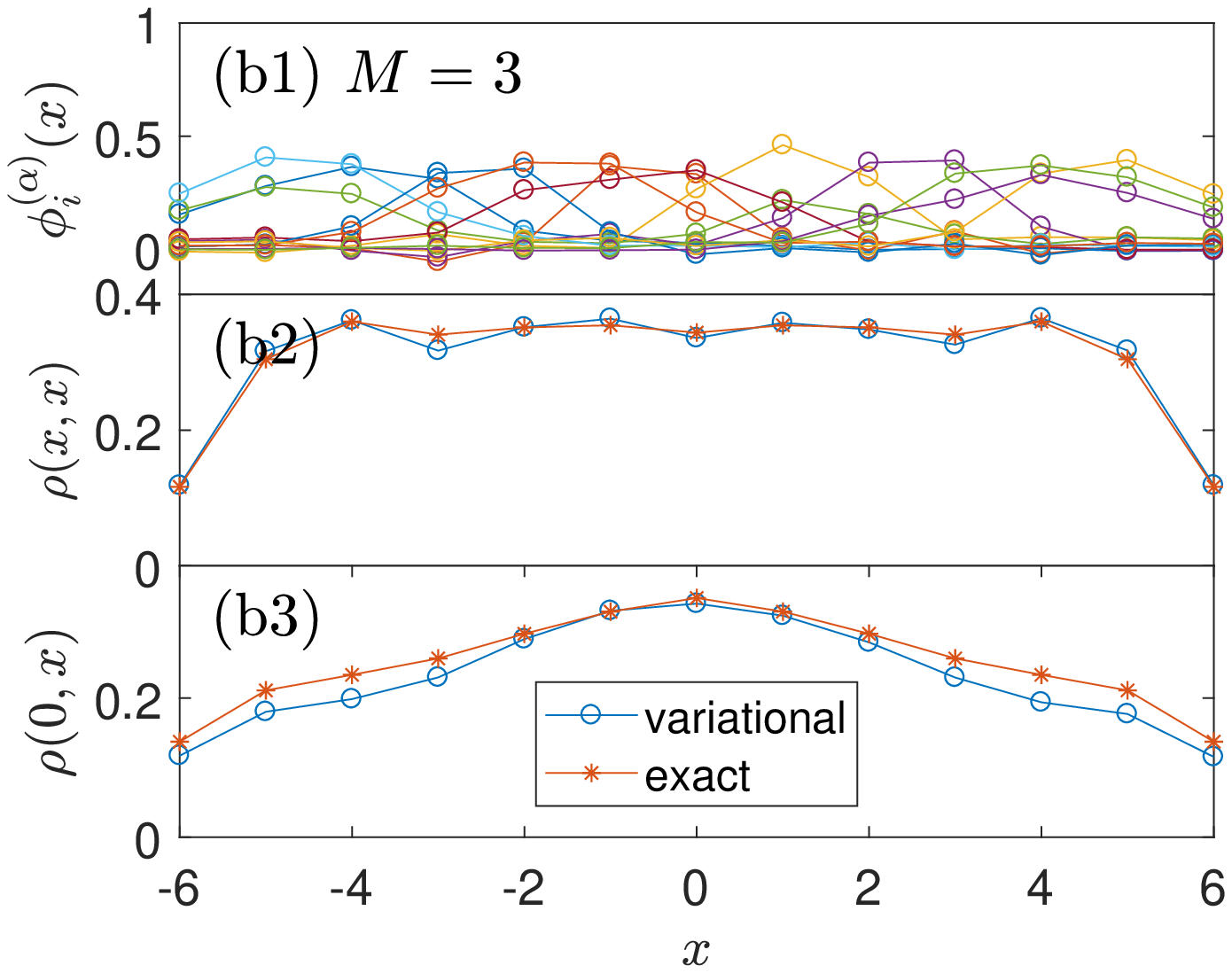}
\includegraphics[width= 0.32\textwidth ]{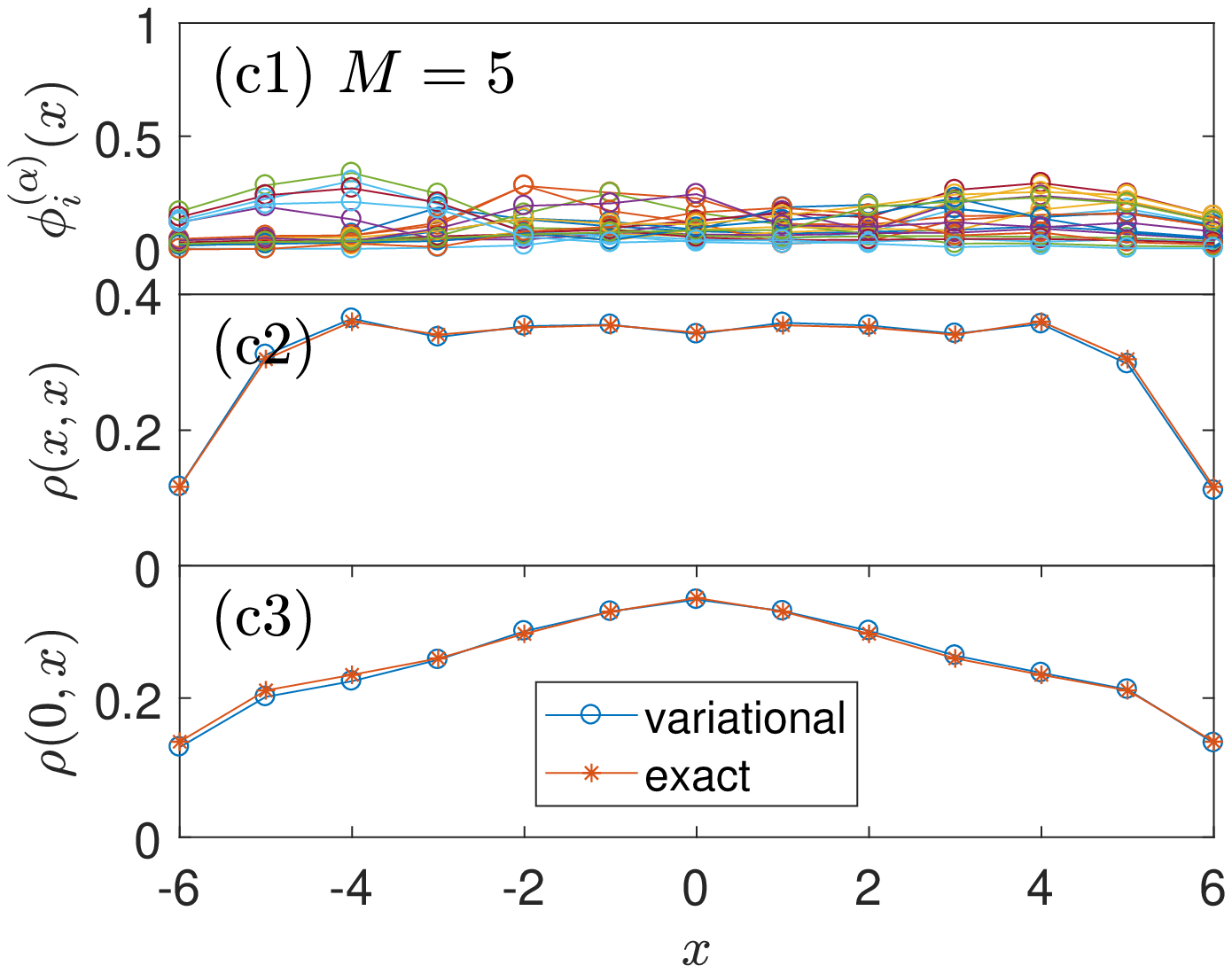}
\caption{(Color online) The orbitals $\phi_i^{(\alpha)}(x)$, the particle density distribution $\rho(x,x)$, and the one-particle correlator $\rho(0,x)$ for three increasing values of $M$. The setting and parameters are the same as in Fig.~\ref{fig_deter}(c). }
\label{fig_improve}
\end{figure*}

\begin{figure*}[tb]
\includegraphics[width= 0.32\textwidth ]{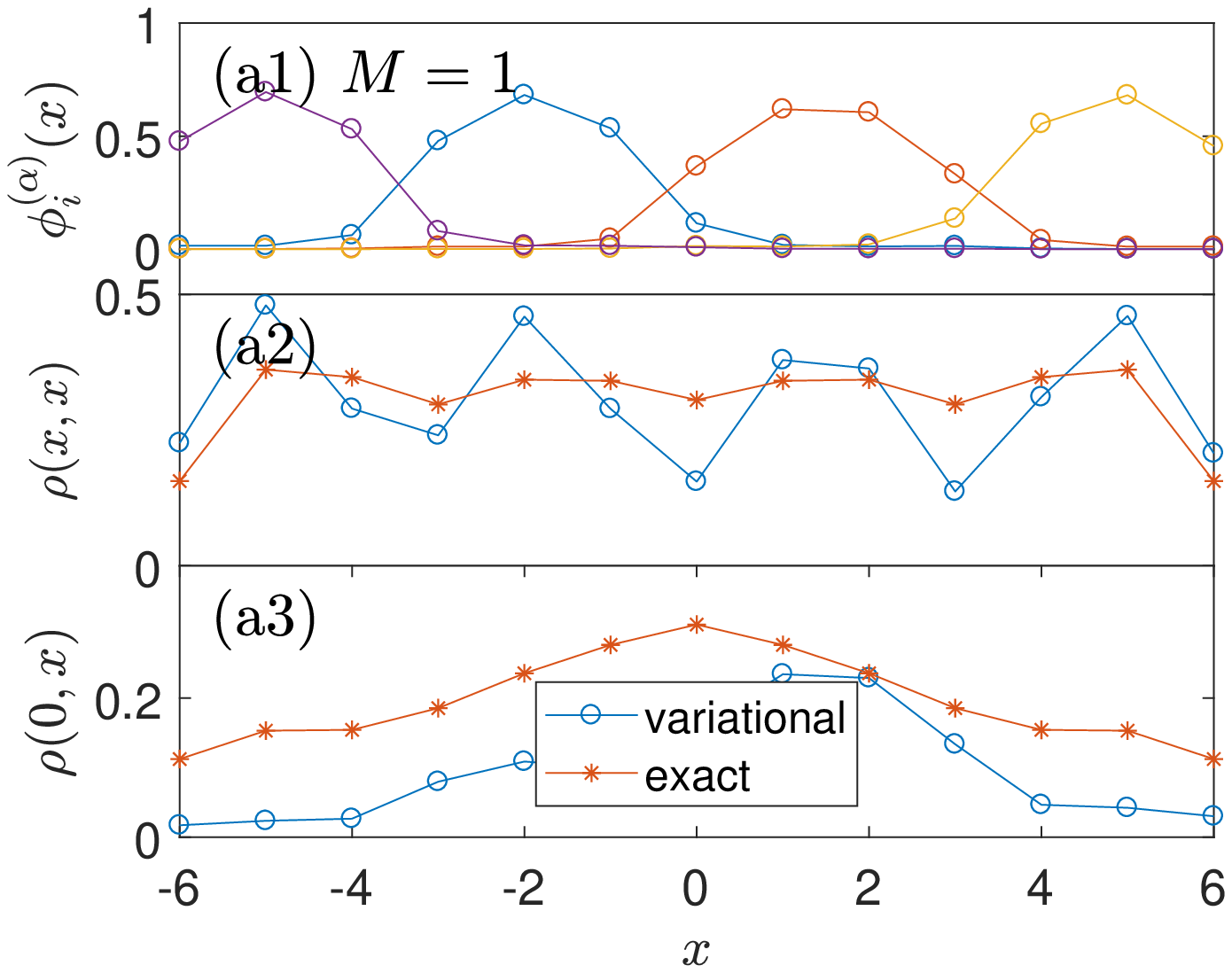}
\includegraphics[width= 0.32\textwidth ]{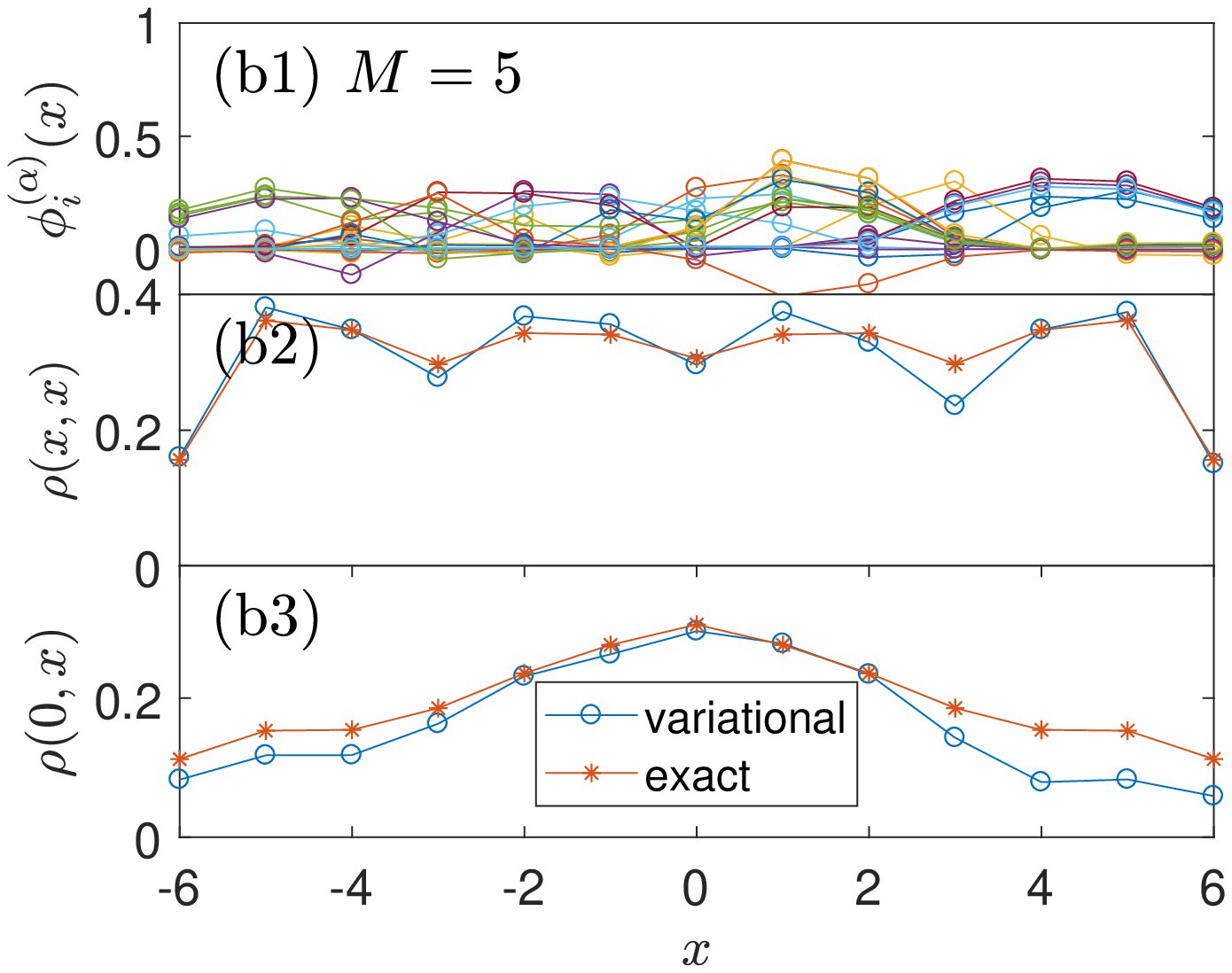}
\includegraphics[width= 0.32\textwidth ]{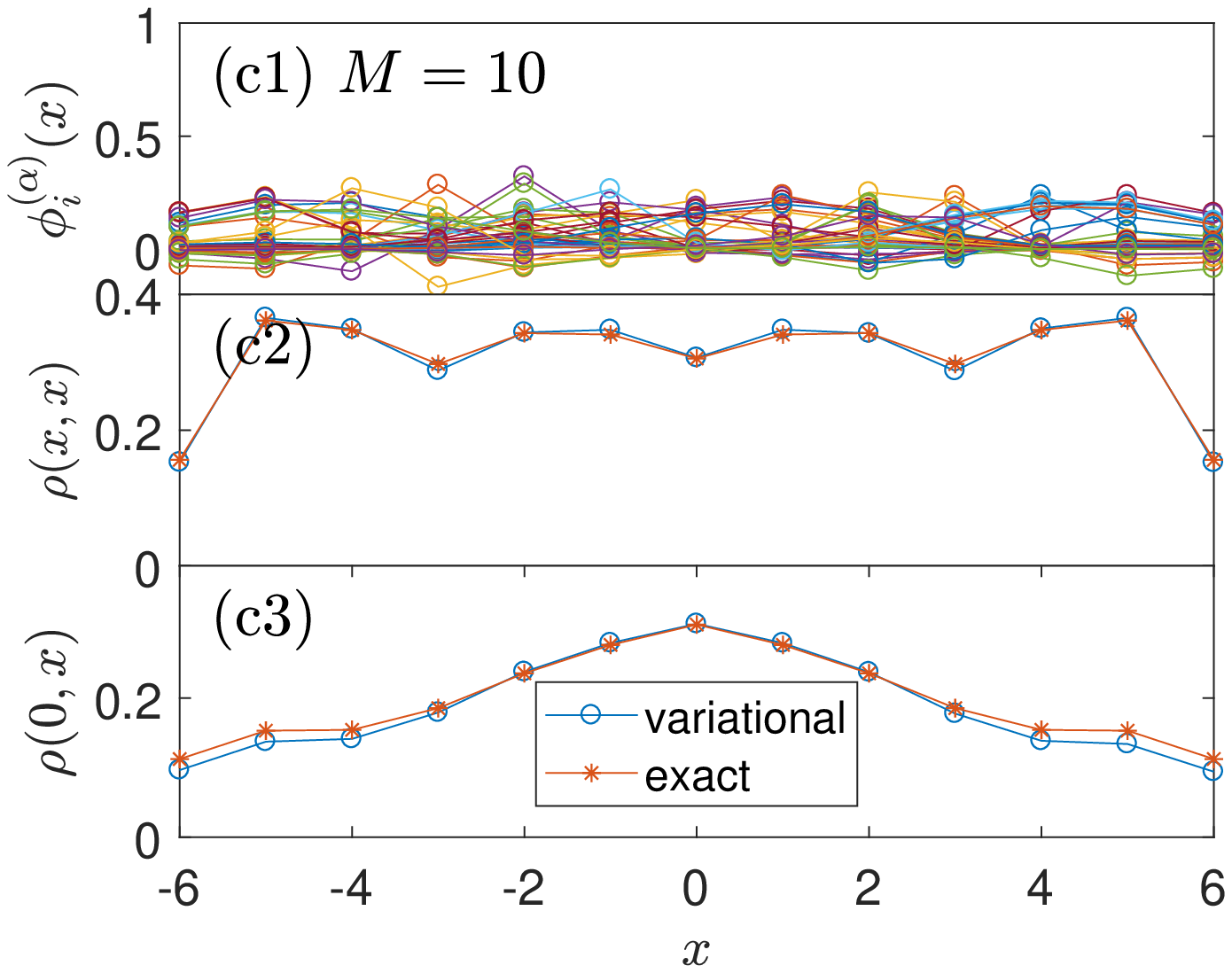}
\caption{(Color online) Same as Fig.~\ref{fig_improve}, but the value of $g$ is 15 instead of 4. }
\label{fig_improve2}
\end{figure*}

In the three models above, we see that the permanent approach with a very limited number of configurations can yield very accurate results. The observation is that regardless of the strength of the interaction, this is often the case if the system is not very dilute, or more precisely, if the ratio $N/L_{eff}$ is not too small, where by $L_{eff}$ we mean the effective volume of the system, i.e., the volume accessible to the particles. In the Bose-Hubbard model at unit filling, regardless of the boundary condition, the ratio is 1; in the Bose-Hubbard model in a harmonic trap above, although the lattice is of size $L=13$, by the shapes of the orbitals or the density distribution, we infer that the effective lattice size $L_{eff} \simeq 7$, so the ratio is about $3/7$.

Our experience is that the most challenging situation for the permanent variational approach is a dilute gas in the Tonks-Girardeau limit, in which the system is dilute (i.e., $N\ll L_{eff} $) and the on-site interaction $g$ is strong. We consider such a Bose-Hubbard model in Fig.~\ref{fig_deter}, where we have $N=4$ particles on a flat, open lattice of size $L=13$. When $g$ is as small as $1$, the single-configuration approximation can get the density distribution $\rho(x,x)$ and the correlator $\rho(0, x)$ accurately. As $g$ increases to 2, the error becomes visible but is still small. However, if $g$ further increases to $4$, the discrepancy becomes very apparent.

The natural remedy is to take multiple configurations. In Fig.~\ref{fig_flatEO}, we show how the variational energy and the overlap improve as the number of configurations increases. The improvement is steady but slow in comparison with Fig.~\ref{fig_trapEO}. At $g=4$, which is small by the scale of Fig.~\ref{fig_trapEO}, even with $M=5$ configurations, the deviation of the variational energy from the exact value and the deviation of the overlap from unity are still visible.
In Fig.~\ref{fig_improve}, we show how the situation in Fig.~\ref{fig_deter}(c) improves by taking more and more configurations. We see that unlike the situation in Fig.~\ref{fig_trapM3}, with $M=3$ configurations, the variational curves are still manifestly different from the exact ones. Only with $M=5$, do the two almost coincide.

The trend of the curves in Fig.~\ref{fig_flatEO} suggests that for even larger values of $g$, we would need even more configurations to get a good approximation of the exact ground state. In Fig.~\ref{fig_improve2}, we examine the case of $g=15$. For this value of $g$, the exact ground state is already very close to its fermionization limit at $g= +\infty $. We see that the single configuration approximation fails blatantly and it even breaks the parity symmetry of the model. With $M=5$ configurations, the situation improves but only with $M=10$ configurations do the variational predicted density distribution and correlation function agree with the exact values very well.

We thus see that generally taking multiple configurations can effectively reduce the error. The concern is how much price we have to pay. From Sec.~\ref{secmconf}, we see that with $M$ configurations, the number of blocks that we have to calculate for preparation of $\hat{F}$ and $\hat{G}$ increases by a factor of $\frac{1}{2}M(M+1)$. This polynomial growth is mild.

\subsection{Symmetry breaking and restoration}
\begin{figure*}[tb]
\includegraphics[width= 0.32\textwidth ]{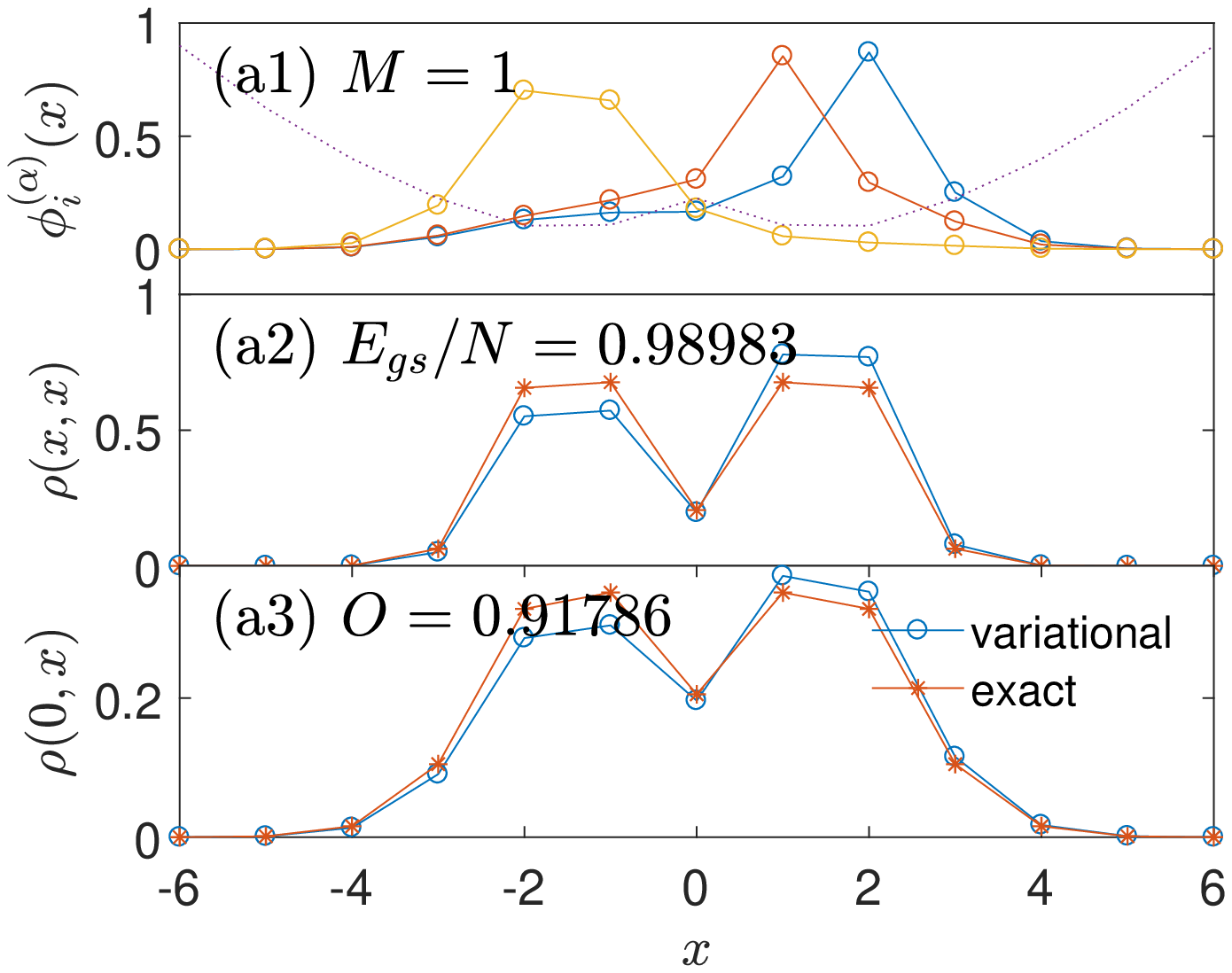}
\includegraphics[width= 0.32\textwidth ]{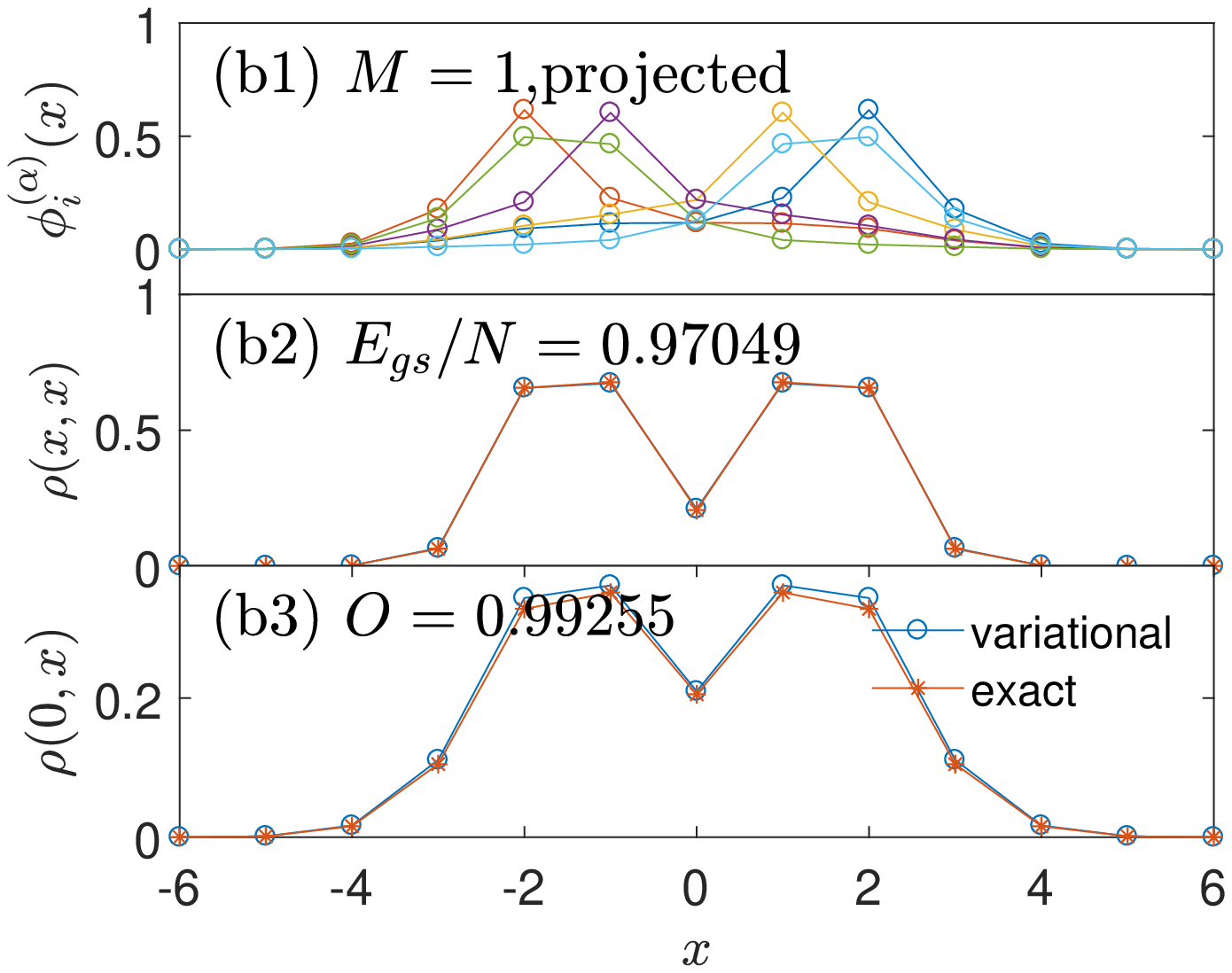}
\includegraphics[width= 0.32\textwidth ]{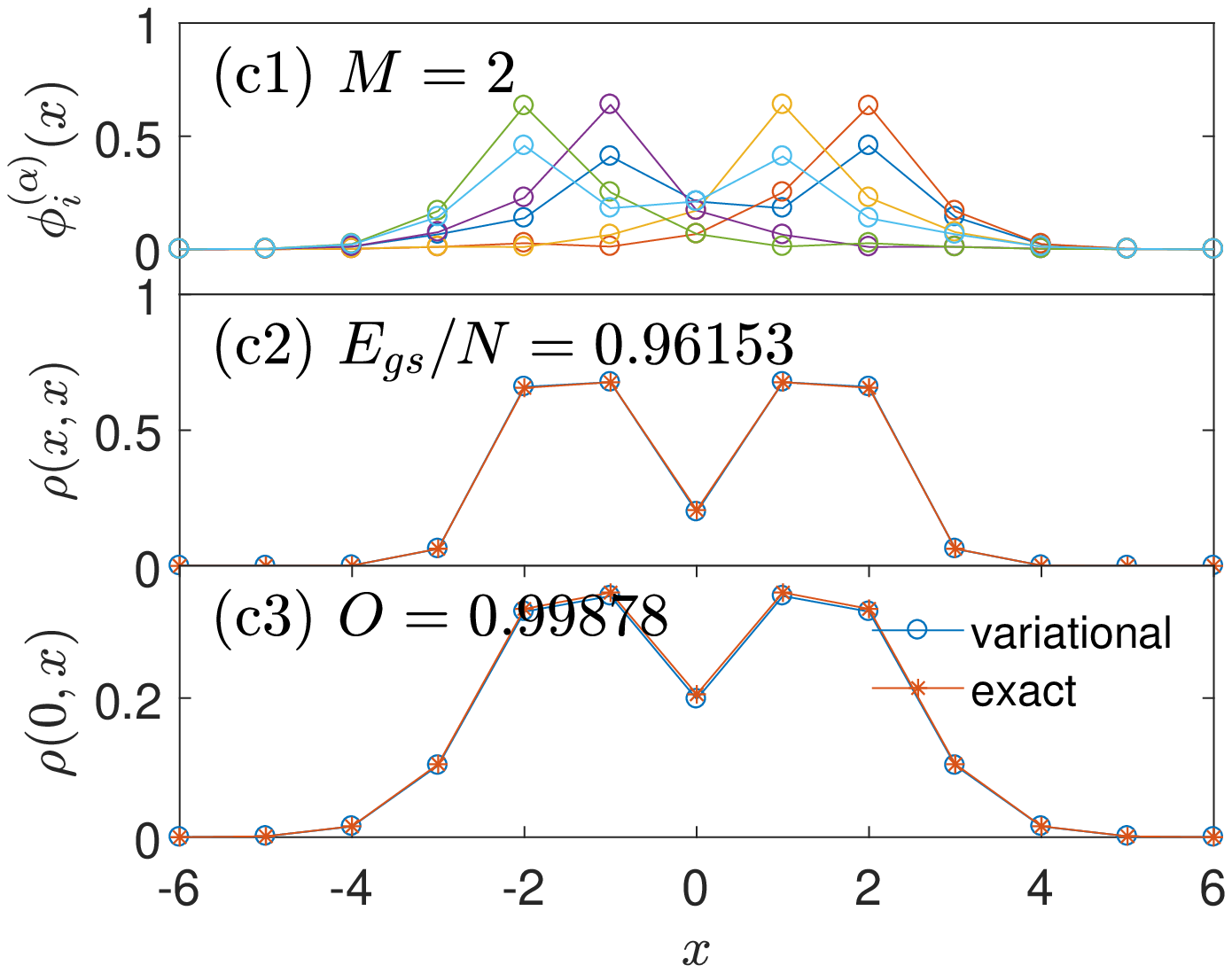}
\caption{(Color online) The orbitals $\phi_i(x)$, the particle density distribution $\rho(x,x)$, and the one-particle correlator $\rho(0,x)$ for $N=3$ bosons in a double-well potential [see Eqs.~(\ref{dwh}) and (\ref{dwp})]. The parameters are $L=13$, $g=2$, $\kappa = 0.5$, $h=4.5$, $\sigma = 1$. In (a) and (c), the variational state consists of $M=1$ or $M=2$ configuration(s), respectively. In (b), the state is built out of the state in (a) by the projection process in (\ref{proj1}). In each column, the energy $E_{gs}$ of the variational state and its overlap $O$ with the exact ground state are shown. The exact value of the ground state energy per particle is $0.9583$. }
\label{fig_dw}
\end{figure*}

\begin{figure*}[tb]
\includegraphics[width= 0.32\textwidth ]{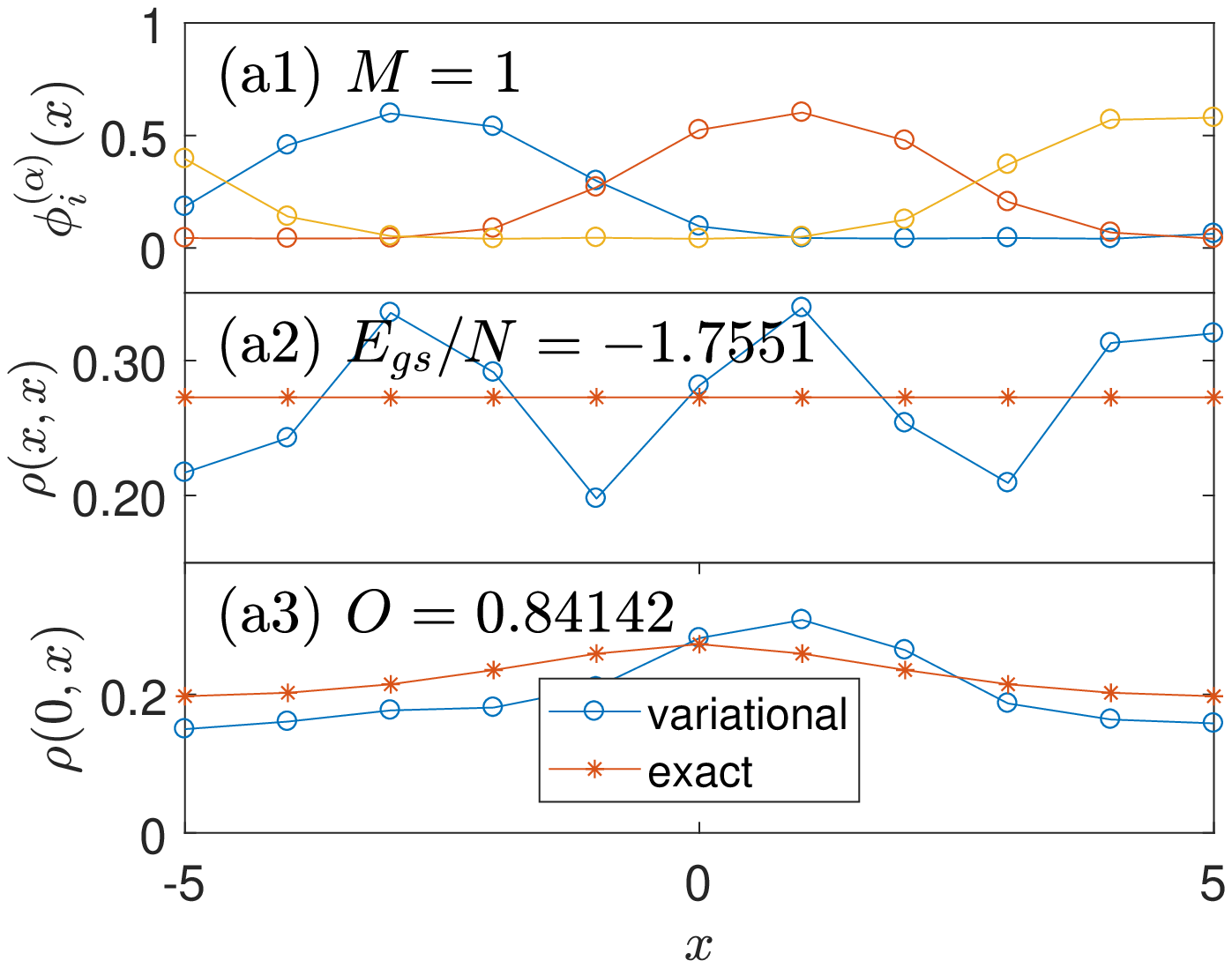}
\includegraphics[width= 0.32\textwidth ]{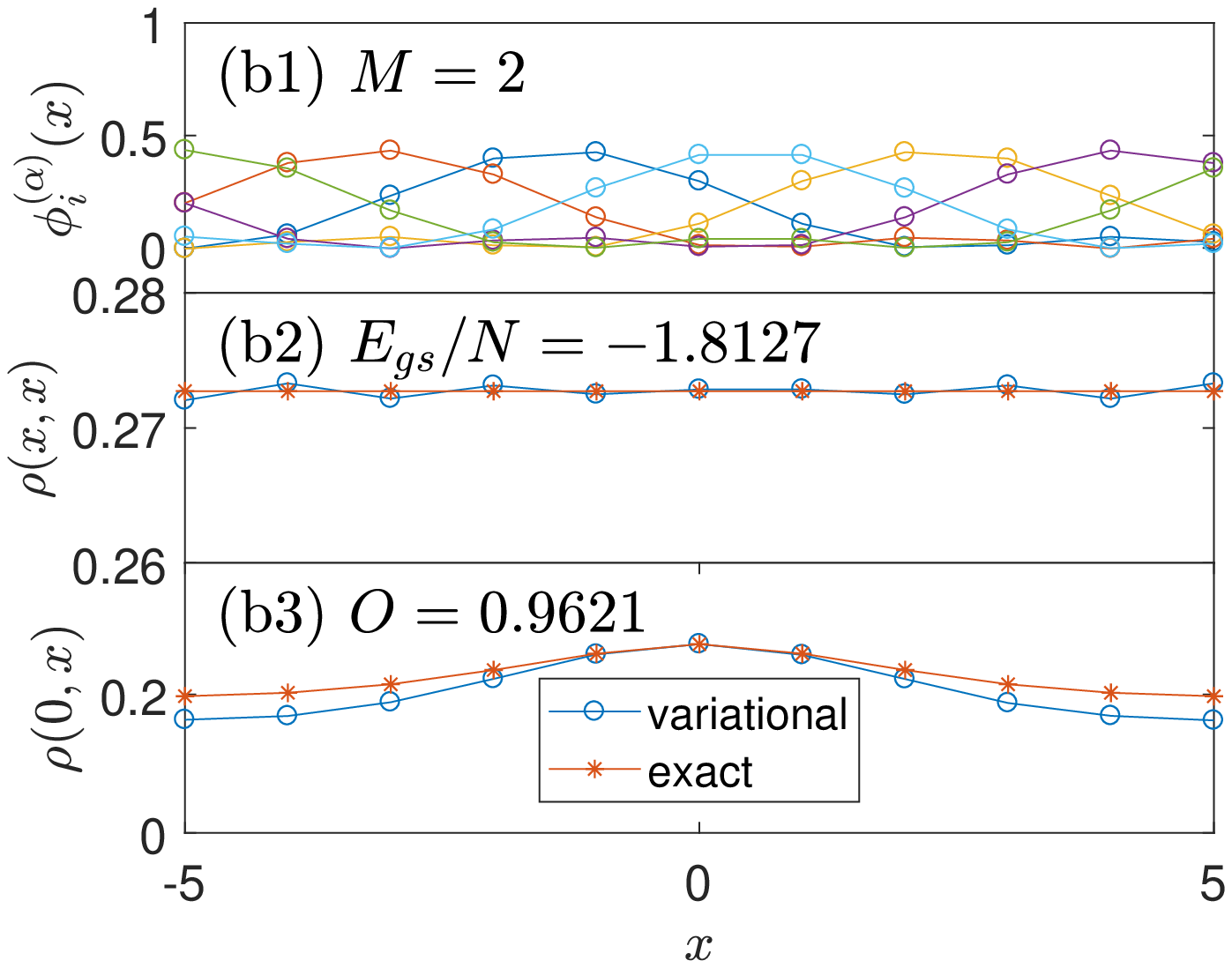}
\includegraphics[width= 0.32\textwidth ]{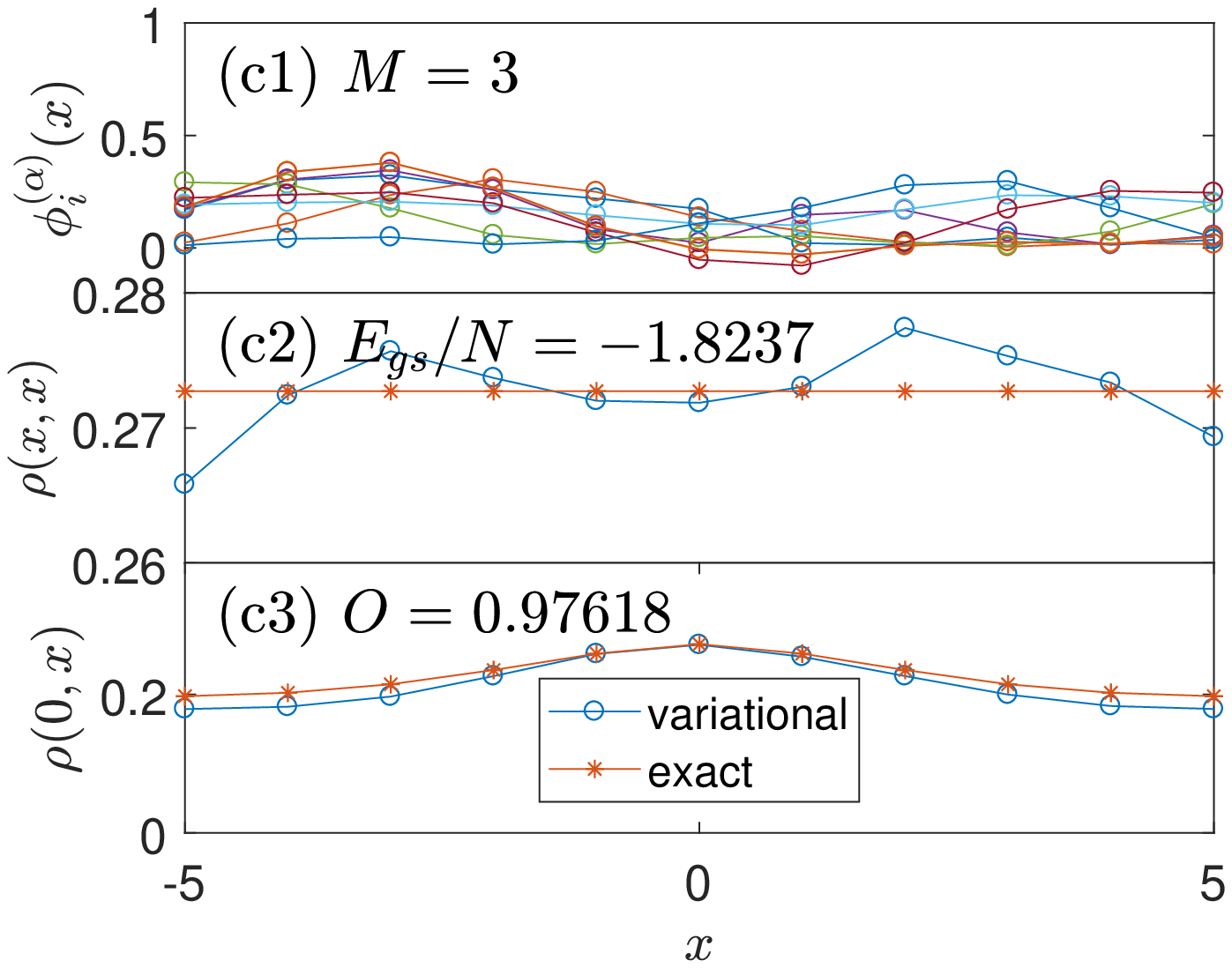}
\\
\includegraphics[width= 0.32\textwidth ]{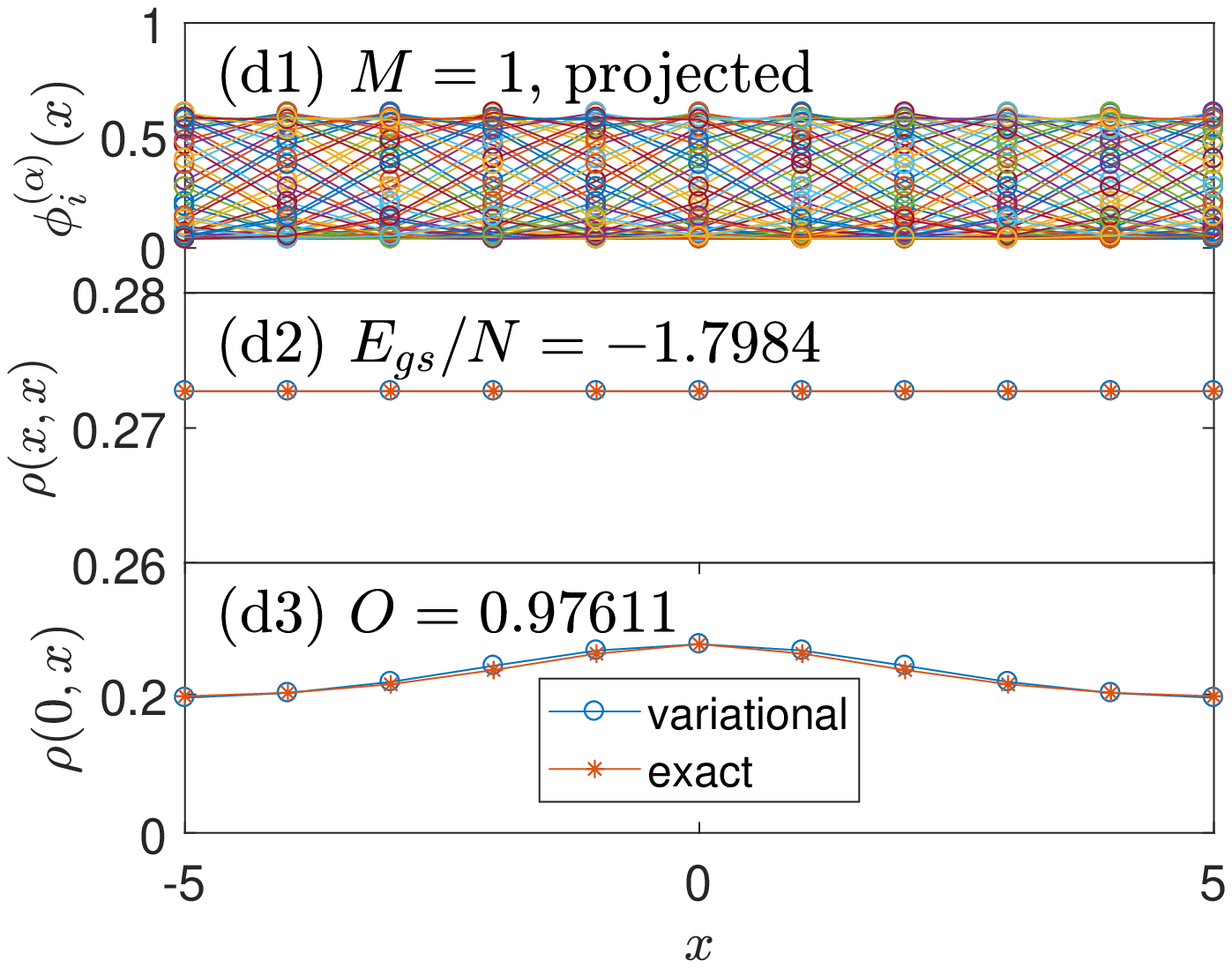}
\includegraphics[width= 0.32\textwidth ]{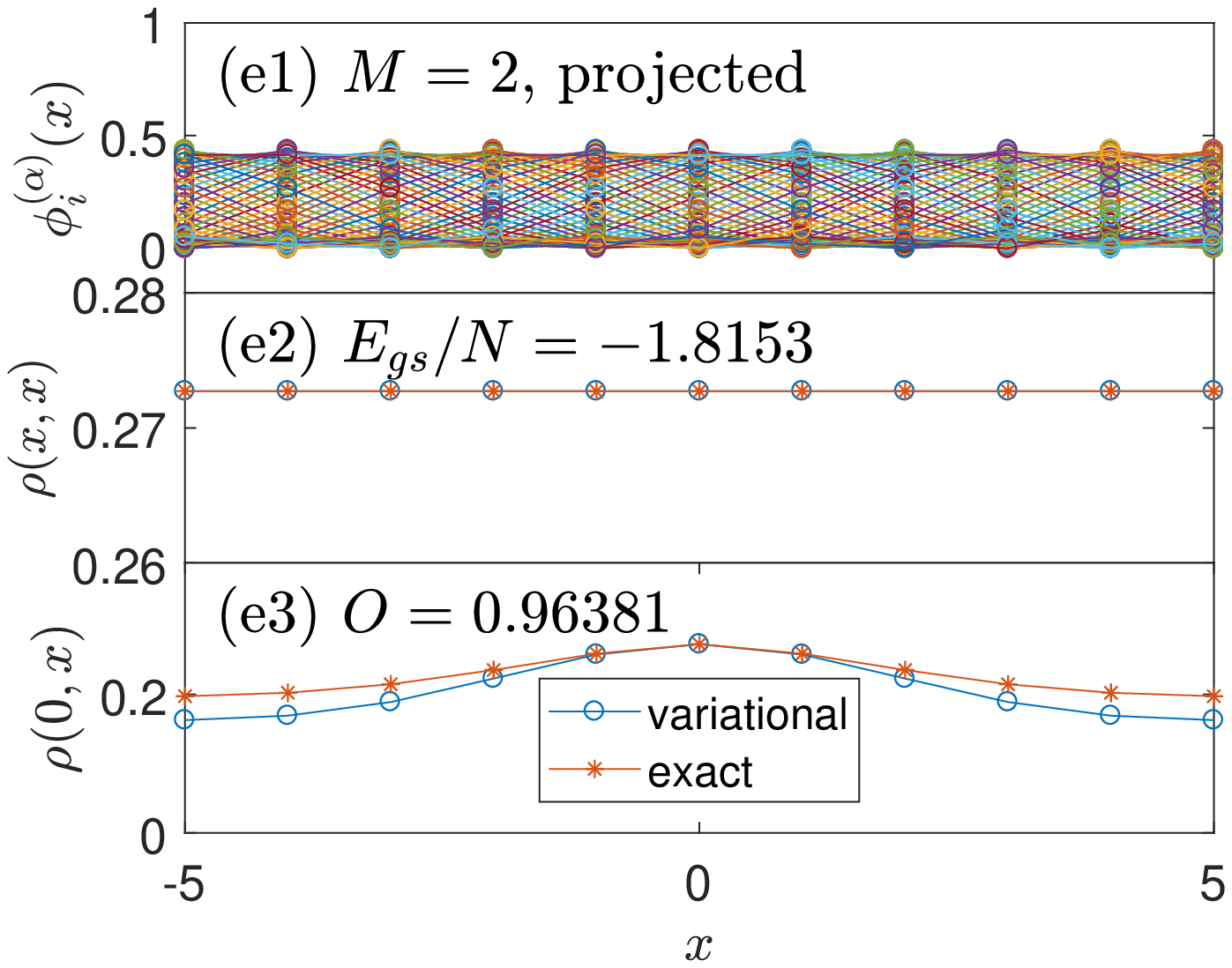}
\includegraphics[width= 0.32\textwidth ]{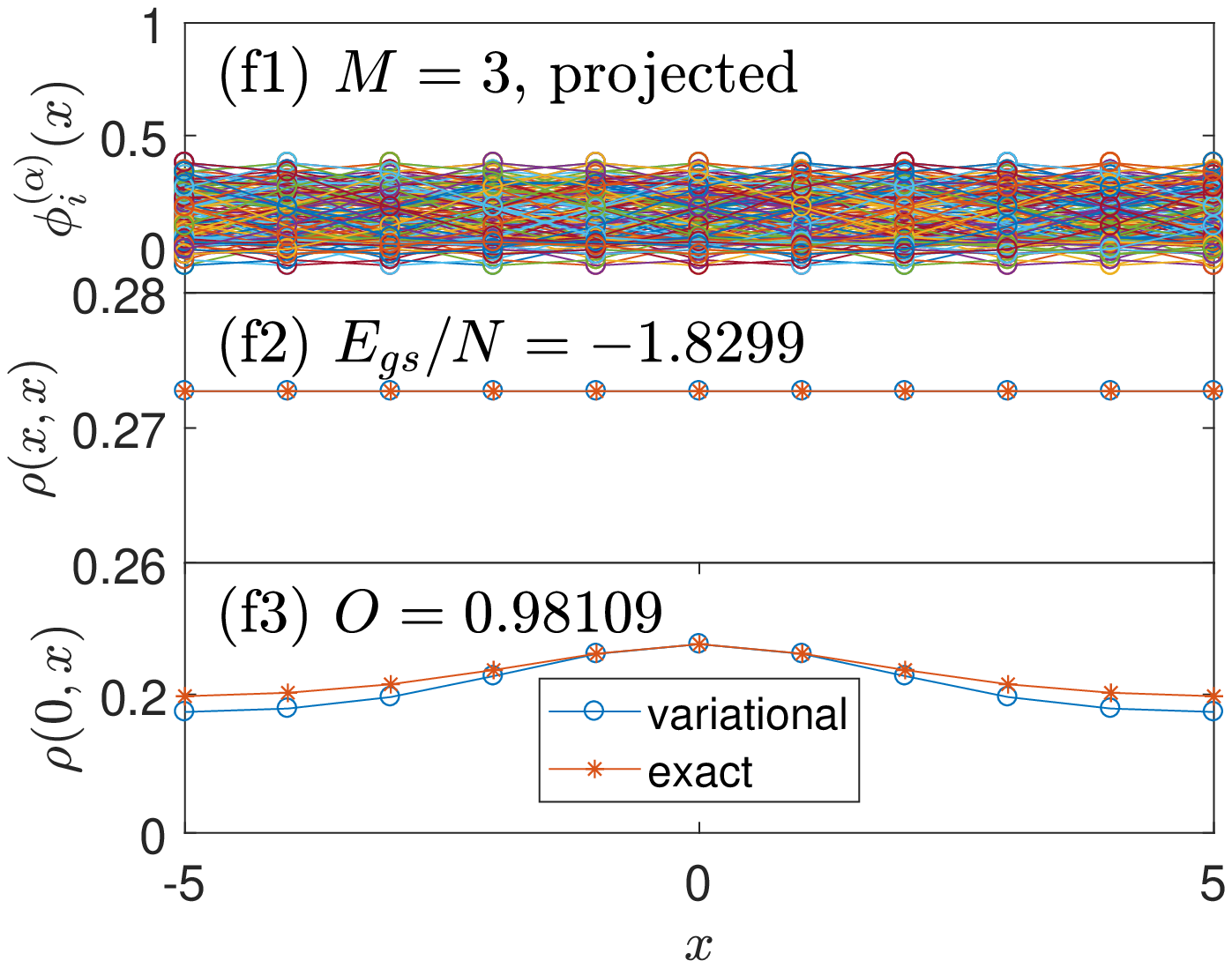}
\caption{(Color online) The orbitals $\phi_i(x)$, the particle density distribution $\rho(x,x)$, and the one-particle correlator $\rho(0,x)$ for $N=3$ bosons on a closed chain with $L=11 $ sites. The on-site interaction strength $g=5$. The top row corresponds to the pro-projection variational states with $M=1$ to $M=3$ configurations. The bottom row corresponds to the projected variational states. As in Fig.~\ref{fig_dw}, for each state, its energy and its overlap with the exact ground state are shown. The exact value of the ground state energy per particle is $-1.8526$. }
\label{fig_pbcsym}
\end{figure*}

It is a common observation in the practice of Hartree-Fock approximation that the solution often spontaneously breaks the symmetries of the Hamiltonian \cite{symmetry1,symmetry}.  This could also happen with us, as we have seen in Fig.~\ref{fig_trapM2} and Fig.~\ref{fig_improve2} above, where the permanent variational states do not respect the $\mathbb{Z}_2$ symmetry of the models.

We have two options to restore symmetry, i.e., to construct a state sharing the same symmetry with the exact ground state. The first approach is by brute force, we can simply take more configurations. Hopefully, the accuracy of the approximation will improve and the symmetry is restored alongside. This happens in Fig.~\ref{fig_trapM3} and Fig.~\ref{fig_improve2}(c). The second approach is based on the group representation theory. We can construct a projection operator corresponding to the irreducible representation of the exact ground state, and let it act on the variational state. The resultant state is also a permanent variational state, but generally with more configurations.

Below we take two concrete models to illustrate and compare the two approaches. In the first model, we have $N=3$ bosons in a double-well potential. The Hamiltonian is
 \begin{eqnarray}\label{dwh}
   H_{dw} &=&H_{obc} +  \sum_{x=-L_0}^{L_0} V(x) a_x^\dagger a_{x}  ,
 \end{eqnarray}
where the double-well potential $V(x)$ is the superposition of a harmonic trap and a Gaussian bump, i.e.,
\begin{eqnarray}\label{dwp}
  V(x) &=& \kappa x^2 + h \exp(-x^2/\sigma^2) ,
\end{eqnarray}
where $\kappa $, $h$, and $\sigma $ are parameters. The system has a $\mathbb{Z}_2$ symmetry, and it can be easily proven that the parity of the ground state is even and consequently the particle density distribution and the correlator are both even functions of $x$.
However, in Fig.~\ref{fig_dw}(a1), in the single configuration approximation, we see that the orbitals are apparently asymmetric, with one orbital located in the left well and the rest two in the right well. Consequently, the particle density and the correlator are both asymmetric, and the deviation from the exact values is significant.

To reinstall symmetry by the first approach, we can simply take one more configuration as in Fig.~\ref{fig_dw}(c), where we get symmetric orbitals, and the variational results agree with the exact ones perfectly. We can also try the second approach. Let the permanent state in Fig.~\ref{fig_dw}(a) be $\Phi_e = \hat{\mathcal{S}}(\phi_1, \phi_2, \phi_3)$. An even-parity state can then be easily constructed as
\begin{eqnarray}\label{proj1}
  \bar{\Phi}_e &=&  \hat{\mathcal{S}}(\phi_1, \phi_2, \phi_3) + \hat{\mathcal{S}}(\hat{P}\phi_1, \hat{P}\phi_2, \hat{P}\phi_3) ,
\end{eqnarray}
where the inversion operator $\hat{P}$ is defined as $(\hat{P}\phi )(x)= \phi(-x)$. The newly constructed variational state $\bar{\Phi}_e$ consists of two configurations, with the new configuration transformed from the old one by inversion. The physical quantities calculated with the projected state $\bar{\Phi}_c$ are shown in Fig.~\ref{fig_dw}(b). We see significant improvement over the pro-projection state $\Phi_e$ in Fig.~\ref{fig_dw}(a). Qualitatively, the symmetry is restored; quantitatively, the overlap with the exact ground state has increased from 0.91786 to 0.99255, and the predicted ground state energy per particle has decreased from 0.98983 to 0.97049 (the exact value is 0.95831). Of course,
by construction, we do not expect $\bar{\Phi}_e$ to be optimal in energy among all the two-configuration variational states. Indeed, its energy is slightly higher than that of the optimal state in Fig.~\ref{fig_dw}(c). However, its advantage is that it is obtained for free and is still a fairly good approximation.

The second model is simply the Bose-Hubbard model with the periodic boundary condition. Because of the boundary condition, the lattice can be visualized as a closed lattice ring, and the symmetry group of the model is recognized as that of a regular polygon, i.e., the dihedral group $D_L$. The group consists of $L$ translations (or rotations) and $L$ reflections
\begin{equation}\label{dihedral}
  D_L = \{ \hat{S}^m  \hat{T}^n, 0\leq m \leq 1, 0\leq n \leq L-1 \} .
\end{equation}
Here the generating operators $\hat{S}$ and $\hat{T}$ are defined as $(\hat{S}\phi)(x) = \phi(-x)$ and $(\hat{T}\phi)(x) = \phi(x-1)$. By the Perron-Frobenius theorem, the exact ground state $|GS\rangle $ is non-degenerate and positive everywhere in the Fock basis. It then follows easily that $|GS\rangle $ belongs to the trivial representation of the dihedral group. The projection operator for this irreducible representation is simply
\begin{eqnarray}
  \hat{\mathcal{P}} &=& \frac{1}{2L}  \sum_{m=0}^1 \sum_{n=0}^{L-1} \hat{S}^m  \hat{T}^n .
\end{eqnarray}
Therefore, if we obtain an $M$-configuration variational state in the form of (\ref{mconf}) by the algorithm, by projection we obtain immediately the following state invariant under all the symmetry transforms of the model,
\begin{eqnarray}
  \bar{\Phi}_e &=& \sum_{m=0}^1 \sum_{n=0}^{L-1} \sum_{\alpha = 1}^M  \hat{\mathcal{S}}(\hat{S}^m  \hat{T}^n \phi^{(\alpha)}_1, \ldots, \hat{S}^m  \hat{T}^n\phi^{(\alpha)}_N ),
\end{eqnarray}
which is a $2LM$-configuration state.

In Fig.~\ref{fig_pbcsym}, we consider such a model with $(N,L,g)=(3,11,5)$. In the top row, we see that with $M=1$ to $M=3$, the variational state always breaks the symmetry of the model completely, i.e., all the translation and reflection symmetries are lost. This is most easily seen from the density distribution. In the bottom row, with the projected states, the symmetry of the exact ground state is recovered. We see that generally, the projection process not only restores the expected symmetry, but also lowers the energy and increases the overlap. Remarkably, it is most effective in the single configuration case.

From Fig.~\ref{fig_dw} and Fig.~\ref{fig_pbcsym}, we see that the symmetry restoration procedure is a worthwhile post-processing to the time-consuming iteration procedure. It is fairly cheap and reasonably effective.

\subsection{When multiple configurations are perfect}

In the proceeding sections, we have seen that taking multiple configurations can get us very close to the target state. Here, we discuss the scenario that a generic target state can be exactly recovered by multiple configurations.

By Proposition \ref{prop3}, if the single-particle Hilbert space is of dimension $L=2$, then for arbitrary $N$, an $N$-boson state is a permanent state, i.e., it consists of a single configuration. For higher values of $L$, a generic state is not a permanent state and a natural question is, at least how many configurations we need to recover it exactly. A quick lower bound is obtained by mere dimension counting. The dimension $\mathcal{D}$ of the many-body Hilbert space is given in (\ref{dimh}). By (\ref{dimperm}), a permanent state has $d= N(L-1)+ 1$ degrees of freedom. Hence, we need at least
\begin{eqnarray}\label{mestimate}
  M_0 &=& \lceil \mathcal{D}/d \rceil
\end{eqnarray}
configurations to recover a generic state, where $\lceil x \rceil $ denotes the least integer no less than $x$. There is no reason that this lower bound can be achieved. Indeed, it is an underestimate in the special case of $N=2$. By Proposition \ref{prop3}, we need $\lfloor (L+1)/2 \rfloor$ configurations to recover a generic $2$-boson state. Here the number scales as $L/2$ for large $L$. However, the estimate of (\ref{mestimate}) is $M_0 \simeq L/4$ for large $L$.

Although (\ref{mestimate}) fails for $N=2$, there are evidences that for many pairs of $(L,N)$, it does give the right answer. Specifically, extensive numerical experiments indicate that that if the pair $(L,N)$ take values among the set of $\{ (3,2), (3,3),(3,4), (3,5), (4,3)\}$, the $N$-boson wave function can always be written as the summation of $M_0=2$ permanent states. Similarly, if $(L,N )$ take values among the set of  $\{ (4,4), (5,3) \}$, the $N$-boson wave function can always be written as the sum of $M_0=3 $ permanent states.

The numerical experiment is done in the following way. We first generate a random $N$-boson state, then generate a set of $M_0 N$ random single-particle orbitals, with $M_0$ given by (\ref{mestimate}), and then use the overlap maximization algorithm to update the orbitals. The concern is whether the overlap will surpass the threshold $1-10^{-5}$ after $300$ rounds of update. If not, a new set of random orbitals are generated and the optimization process is restarted. This process is repeated until in some run the threshold is surpassed. If so, we turn to a new random $N$-boson state and repeat the check.

For the set of values of $(L,N)$ mentioned above, we have checked over $10^5$ random $N$-boson states, and they all passed the check. This is strong evidence that for such $(L,N)$, the naive lower bound of (\ref{mestimate}) is achieved.

So far, we have failed to find a rigorous proof of the findings above. Here we just reformulate the problem in pure mathematics so that it might be more convenient for further study. In second quantization, a generic $N$-boson state is
\begin{eqnarray}
  \Psi &=& \sum_{\textbf{n}} C_{\textbf{n}} \prod_{j=1}^L \left(a_j^\dagger \right)^{n_j } |vac\rangle .
\end{eqnarray}
Here the summation is over all occupation tuple $\textbf{n} =(n_1, n_2, \ldots, n_L )$ with $n_j \geq 0$ and $\sum_{j=1}^L n_j = N $.
By (\ref{permstate2}), that it can be written as the sum of $M$ permanent states means
\begin{eqnarray}
  \Psi &=& \sum_{\alpha =1 }^M  \prod_{i = 1}^N \left ( \sum_{j=1}^L C_{ij}^{(\alpha )  } a_j^\dagger \right ) |vac\rangle  ,
\end{eqnarray}
where $C_{ij}^{(\alpha )  } $ are constants. Because the $a_j^\dagger $ operators commute, and the (unnormalized) Fock states $\prod_{j=1}^L \left(a_j^\dagger \right)^{n_j } |vac\rangle$ are linearly independent, this is equivalent to saying that the degree-$N$ homogeneous polynomial in $L $ variables
\begin{eqnarray}
  P(z_1 ,z_2, \ldots, z_L ) &=& \sum_{\textbf{n}} C_{\textbf{n}} \prod_{j=1}^L z_j^{n_j }
\end{eqnarray}
is expressible as
\begin{eqnarray}
  P(z_1 ,z_2, \ldots, z_L ) &=& \sum_{\alpha =1 }^M  \prod_{i = 1}^N \left ( \sum_{j=1}^L C_{ij}^{(\alpha )  } z_j \right ) .
\end{eqnarray}
Formulated in this way, we see the problem is very similar to the polynomial Waring problem \cite{landsberg}. The difference is just that while in the Waring problem, one seeks to decompose a general degree-$N$ homogeneous polynomial in $N$-th powers of linear forms (linear polynomials), here we seek a decomposition in terms of $N$-th products of linear forms.

\subsection{A stability problem}
\begin{figure*}[tb]
\includegraphics[width= 0.45\textwidth ]{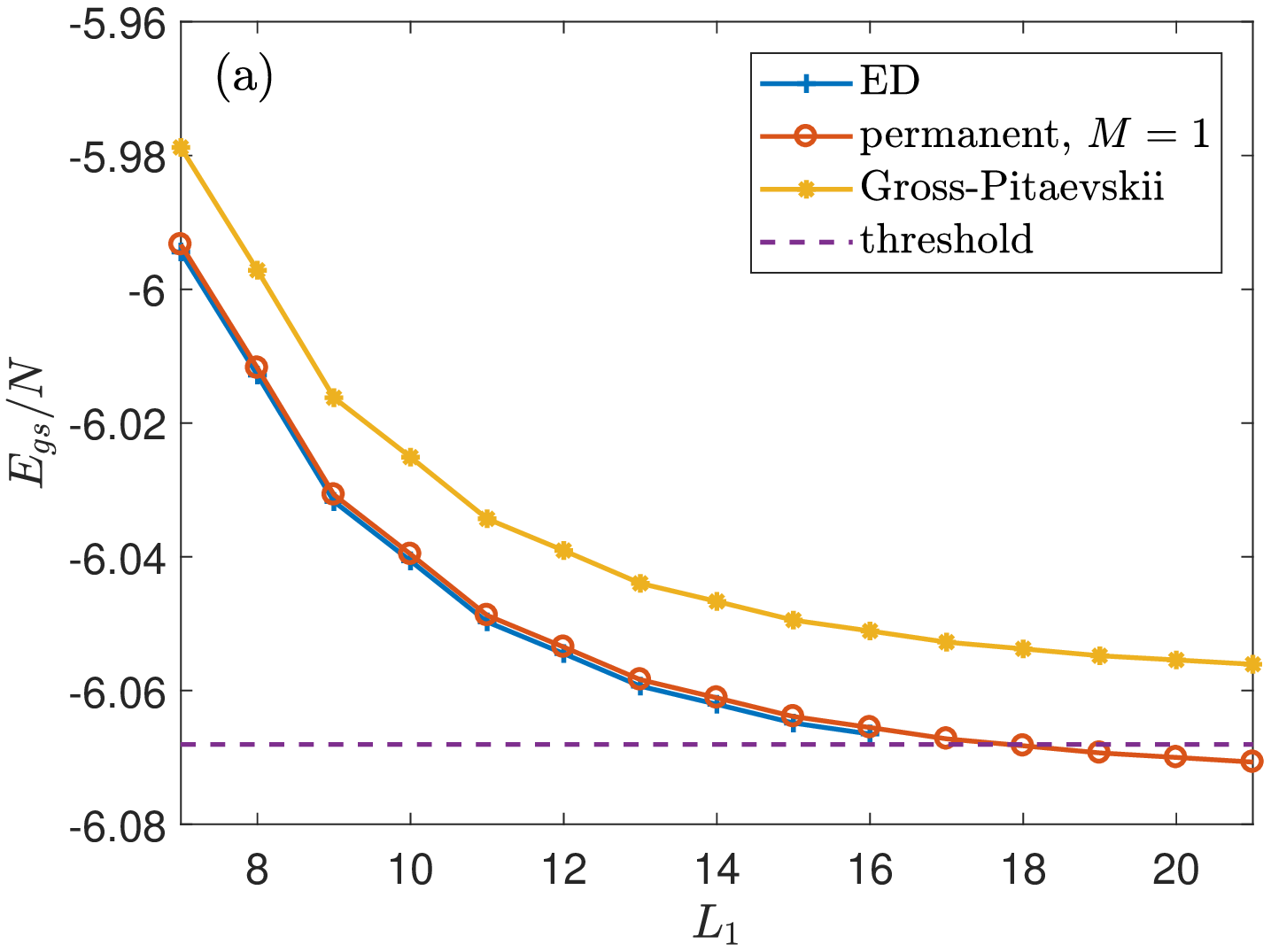}
\includegraphics[width= 0.45\textwidth ]{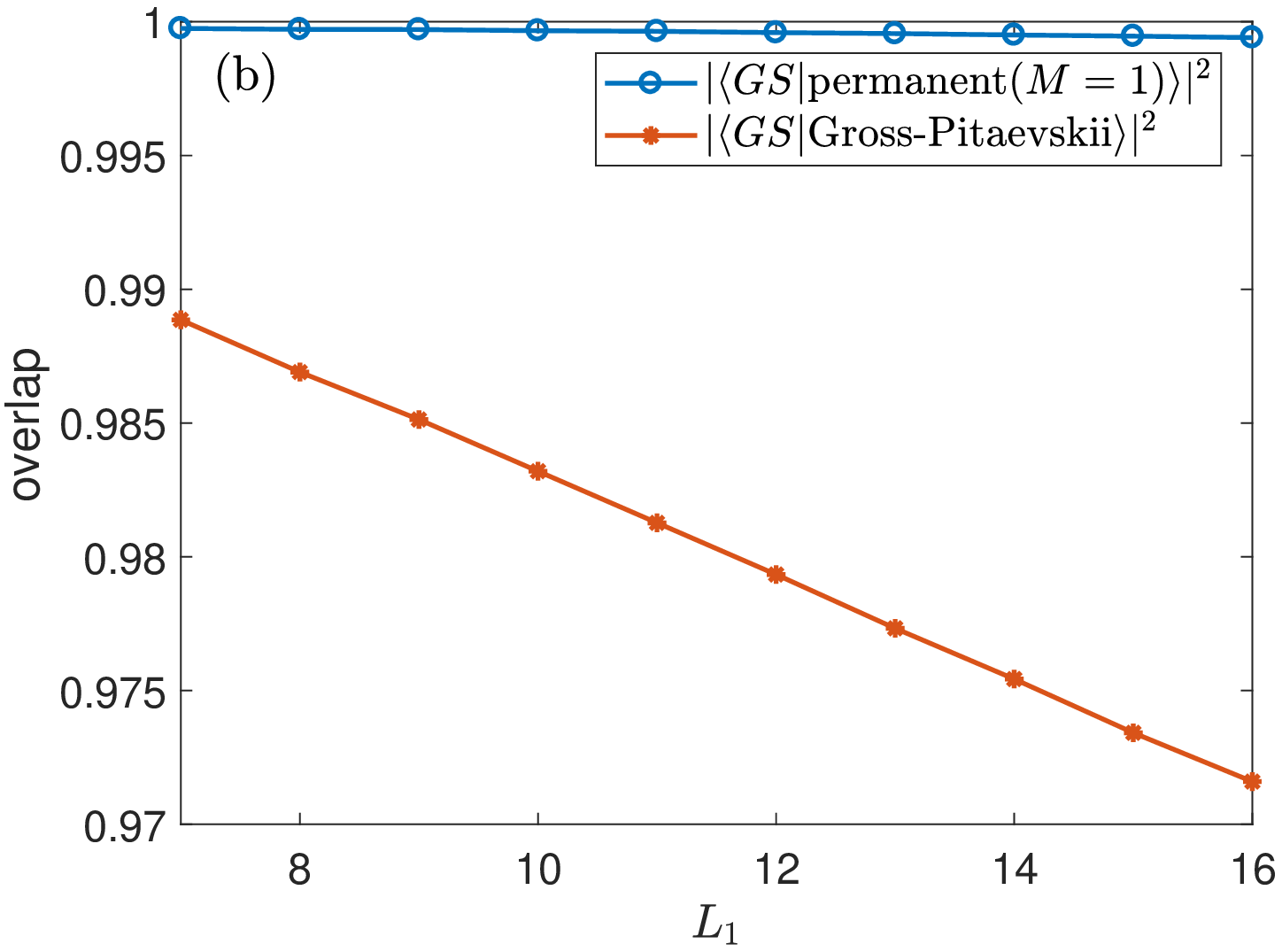}
\caption{(Color online) (a) Ground state energy (per particle) of the two-boson model (\ref{ionproblem}) as calculated by exact diagonalization (ED), or the permanent variational state with a single configuration, or under the Gross-Pitaevskii approximation (GPA). The dashed line indicates the ionization threshold (\ref{threshold}) of the double-bound state. The parameters are $(g, U )= (1.5,4.5)$. (b) Overlap of the permanent state and the Gross-Pitaevskii state with the exact ground state $|GS \rangle $. Note that the very limited memory of our laptop (less than 8 GB) allows us to do ED on a cubic lattice only up to the size $L_1 =16$. }
\label{fig_stable1}
\end{figure*}
\begin{figure*}[tb]
\includegraphics[width= 0.45\textwidth ]{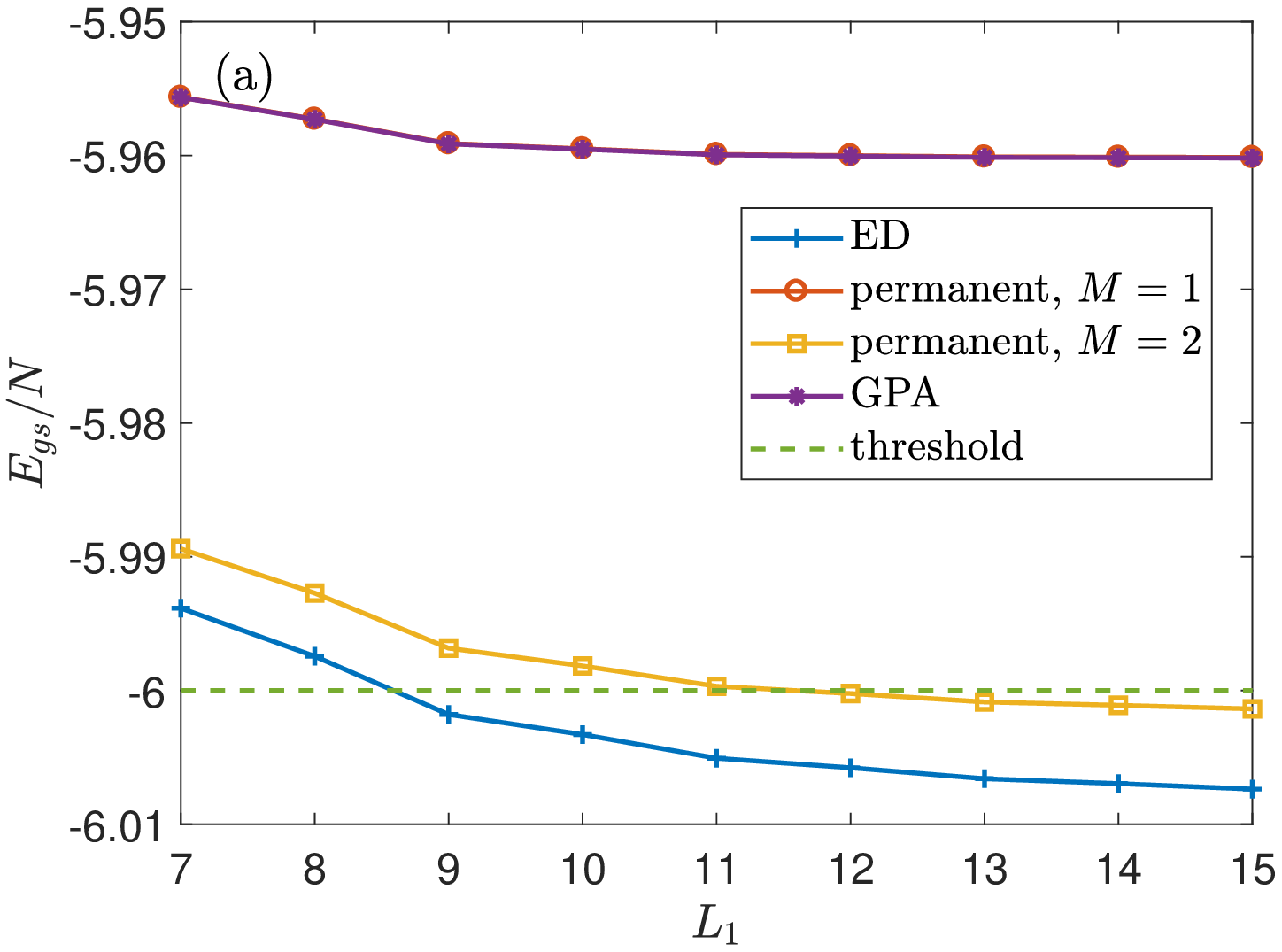}
\includegraphics[width= 0.45\textwidth ]{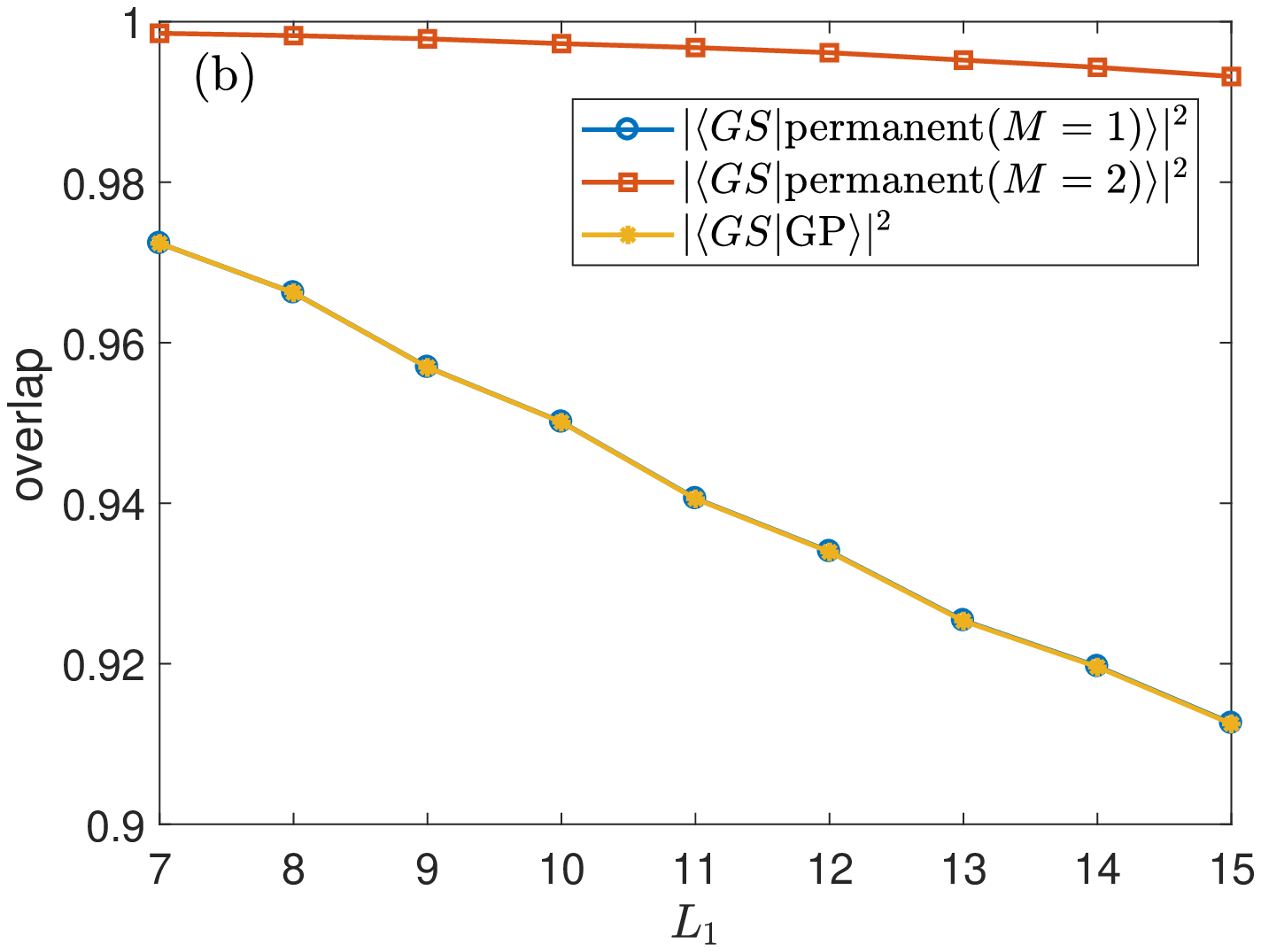}
\caption{(Color online) (a) Ground state energy (per particle) of the two-boson model (\ref{ionproblem}) as calculated by exact diagonalization (ED), or the permanent approach with $M=1$ or $M=2 $ configuration(s), or under the Gross-Pitaevskii approximation (GPA). The parameters are $(g, U )= (-2.5,3.5)$. (b) Overlap between the exact ground state $|GS \rangle $ and the various variational states.  }
\label{fig_stable2}
\end{figure*}

Having checked the accuracy, reliability, and flexibility of the algorithm, we now apply it to a realistic problem.

In atomic physics, a famous problem is the stability of the negative ion of hydrogen \cite{rau}. The question is, can we add an extra electron to a hydrogen atom, or can a proton bind two electrons simultaneously? The delicacy is that the repulsion between the two electrons is as strong as the attraction of the proton to either of them. Historically, it took people much imagination, insights, and endeavors to devise appropriate variational wave functions to establish the stability of the system \cite{bethe,hylleraas, chandrasekhar}.
Among all the trial wave functions people came up with, the one by Chandrasekhar is the simplest---It is a permanent state! Explicitly, it is of the form \cite{chandrasekhar}
\begin{eqnarray}
  f(r_1, r_2) &=& e^{-\alpha r_1 - \beta r_2} + e^{-\beta r_1 - \alpha r_2} ,
\end{eqnarray}
where $r_{1,2}$ are the distances of the two electrons to the proton and $\alpha$, $\beta$ are two variational parameters. The energy minimum is achieved at $(\alpha, \beta) = (1.03925, 0.28309)$. Note that $\alpha \gg \beta $, which means that one orbital is ``in'', while the other is far ``out''. This configuration is of course reasonable. The outer electron feels an almost completely shielded potential, and therefore can  only be loosely bound.

Here we consider a lattice variant of the problem \cite{bic1}. Suppose we have two on-site interacting bosons on a cubic lattice with a single defect site. The Hamiltonian is
\begin{eqnarray}\label{ionproblem}
  H =  -\sum_{\langle i, j \rangle } (a_i^\dagger a_j + \text{h.c.}) + \frac{g}{2}\sum_i a_i^\dagger a_i^\dagger a_i a_i - U a_0^\dagger a_0 .  \quad
\end{eqnarray}
The defect site is taken as the origin of the lattice, which theoretically should be infinite but in our numerical simulation will be assumed to be a $L_1 \times L_1 \times L_1 $ one and will be put as symmetric as possible with respect to the defect site. Apparently, the defect site is introduced to mimic the proton attraction and the on-site interaction to mimic the electron-electron repulsion.

It is well known that if the defect potential $U$ is strong enough, i.e., if $U$ is larger than some critical value $ U_c$ which is about $ 3.957$, the defect can induce a single-particle defect model localized around it \cite{green}. The two bosons can then be trapped in this defect mode simultaneously if they are not interacting. The problem is whether they can still be bound by the defect potential if the repulsion $g$ between them is turned on, and tuned to a certain strength.

Denote the energy of the defect mode as $E_{df}$. At $U= U_c$, $E_{ef} = -6$. As $U$ increases, $E_{df}$ decreases monotonically. Its exact value can be easily calculated by exact diagonalization with a sufficiently large lattice, say, $L_1 = 101$.  The ionization threshold of the double-bound state is apparently
\begin{eqnarray}\label{threshold}
  E_{th} &=&  -6 + E_{df},
\end{eqnarray}
which corresponds to the state in which one boson is trapped in the defect mode and the other boson free and at the bottom of the energy band.

The ground state of the two-boson system can then be shown to be a double-bound state if we can show that the ground state energy is below the threshold energy. In Fig.~\ref{fig_stable1}, we study a concrete case with $(g,U ) = (1.5, 4.5)$. Note that for this value of $U$, the defect mode is very shallow, with the energy $E_{df} =-6.136 $, so the fate of the double-bound state is uncertain without calculation. In Fig.~\ref{fig_stable1}(a), we see that the variational calculation with a permanent state with a single configuration yields very accurate results---the difference with those obtained by exact diagonalization is hardly visible even by the minute scale of the plot.
Limited by the memory capacity of our laptop, we can do exact diagonalization only up to $L_1=16$, which as we see is insufficient for proving binding. However, with the variational method, we can work up to $L_1=21$ and succeed in pushing the ground state energy $E_{gs}$ below the threshold, thus proving binding. For comparison, we have also
checked the Gross-Pitaevskii approximation, which as a special case of the permanent approximation, forces the two orbitals to be the same. We see that the error is much larger---too large to prove binding actually. In Fig.~\ref{fig_stable1}(b), we examine the quality of the permanent and the Gross-Pitaevskii variational states by considering their overlap with the exact ground state $|GS\rangle $ obtained by exact diagonalization. Again, we see that the permanent state is an exceedingly good variational state---its overlap with the exact state is as high as $0.9994$ even for $L_1 =16$. This vividly demonstrates the relevance of the permanent state in bosonic systems. The Gross-Pitaevskii state is also good, but not as good as the permanent state. Its errors in energy and overlap are 15-50 times larger. We thus see how crucial it is to allow the orbitals to vary independently.

As a second case, we consider $(g,U)=(-2.5,3.5)$ in Fig.~\ref{fig_stable2}. This value of $U$ is inadequate for the formation of a defect mode. However, as the interaction between the two bosons is now attractive, a double-bound state is still possible as the particle-particle attraction can reinforce the defect potential. This is indeed the case. As we see in Fig.~\ref{fig_stable2}(a), both exact diagonalization and the permanent variational state with $M=2$ configurations can push the ground state energy below the threshold, which is now simply $E_{th}= -12 $. Unlike in Fig.~\ref{fig_stable1}, here the permanent approach with a single configuration fails to prove binding.
It yields identical results with the Gross-Pitaevskii approach. This indicates that the permanent state actually degenerates into the Gross-Pitaevskii state, a fact further confirmed by studying the overlap of these states with the exact state in Fig.~\ref{fig_stable2}(b). In both panels of Fig.~\ref{fig_stable2}, we see that in this particular problem and with this particular set of parameters, by including one more configuration, the error in energy and overlap can be reduced by one order of magnitude. This exemplifies the necessity and effectiveness of taking multiple configurations.

Finally, we mention that for a two-body (or even a three-body) system like the present one, the permanent calculation is never an issue. The
preparation of the operators $\hat{F}$ and $\hat{G}$ can be done swiftly, and the most time-consuming part would be solving the generalized eigenvalue problem (\ref{generalized}) or (\ref{modified}) if the lattice size is in the order of thousands as here. But here are some tricks. First, as we need just the smallest generalized eigenvalue, we can invoke a Lanczos-type algorithm (in Matlab, just call \verb"eigs"); second,  actually we  do not need to construct $\hat{F}$ as a matrix explicitly. In a Lanczos-type algorithm, we just need to know its action on a vector, which is very simple and can be done swiftly as $\hat{F}$ is simply structured. With these tricks, in Fig.~\ref{fig_stable1}, it takes only 1 (10, respectively) second(s) to update an orbital on an $L_1= 15$ ($L_1 = 21 $, respectively) cubic lattice on our laptop. The observation is that generally convergence is achieved after 10 rounds of update in this particular problem.

\section{Conclusions and open problems}\label{secconclude}

We have explored the potential of the permanent state as variational wave functions for bosons. The result is very encouraging. First, we found that for the one-dimensional Bose-Hubbard model with periodic boundary condition and at unit filling, the exact ground state can be well approximated by a permanent state with translation-related orbitals. The permanent state overlaps well with the exact ground state, yields energy close to the exact value, and produces correlation functions close to the exact ones. Then with an iteration algorithm, we examined more general models. It is quite often that the a single permanent state approximates the exact ground state very well, by all the criterions of energy, overlap, and correlation functions. In case the discrepancy is apparent, it can be remedied by including more configurations.

The algorithm has its advantages and disadvantages. Let us first address its disadvantages. The primary drawback of a permanent-based approach is of course the permanent computation, which scales unfavorably with the particle number $N$. However, with current computational facilities, it is not prohibitively expensive. On our laptop, in the single-configuration case, it takes about $5.1$ seconds to update one orbital for $N=L=12$, and the time reduces to $1.0$ seconds if $N=10$. Hence, it is totally feasible to study a large enough few-boson system with the algorithm. Note that there is still room for acceleration, as computing  the permanents of the minors of the overlap matrix can be easily parallelized. Note also that in practice many models of interest have only $N=2$ or $N=3$ particles. For such small values of $N$, the permanent computation is of course not an issue at all. While the scaling of the complexity of the algorithm with respect to the particle number $N$ is not that favorable, the scaling with respect to the system volume $L$ is quite favorable. The observation is that for $L \lesssim N$, most time is spent on the permanent calculation and the $L$-dependence is negligible. Only for $L\gg N $, the time needed to update one orbital increases apparently with $L$, but still it increases at most in a polynomial way. For example, for $N=12$, the time is $5.5$, $7.6$, and $21.4$ seconds for $L = 25, 50, 100$, respectively. Roughly speaking, while we are indeed confined to a limited number of particles, we have essentially no limitation on the system volume, nor the dimensionality of the system. The memory needed by the algorithm is also minimal. While in this paper we have only considered the $(N,L)$ pairs for which the dimension of the many-body Hilbert space is at most on the order of one million, so that exact diagonalization is possible and we have exact results for comparison, the algorithm can handle other values of $(N,L)$ easily.

It might be a good idea to combine the current algorithm with the Lanczos algorithm. That is, one can think of post-processing the single- or multi-configurational variational state $\Phi$ obtained by the current algorithm by a Lanczos-type process. Starting from $\Phi$, we apply the Hamiltonian $H$ repeatedly to it, and at each stage, we try to approximate the resultant wave function with a combination of permanent states. In this way, we construct a Krylov-type subspace spanned by a set of multi-configurational states. A variational wave function better than $\Phi$ can then be sought in this subspace by solving a generalized eigenvalue problem. The procedures can be carried out most conveniently for the two-particle ($N=2$) case, thanks to the Autonne-Takagi factorization (\ref{takagi}). Some preliminary test yields encouraging results.

In this tentative work, we have focused on some one-dimensional lattice models in the Bose-Hubbard category. Generalization to continuum models \cite{zinner,pilati}, higher dimensions, and multi-component systems should be straightforward. Moreover, the simple strategy of the algorithm apparently is also applicable to fermions---Actually, originally it was used for fermions \cite{zhang1, zhang2}, although for some different problem. It should be worthwhile to check how it works for fermions.

Below are some open problems.

The fact that for the one-dimensional Bose-Hubbard model with periodic boundary condition and at unit filling, the exact ground state can be well approximated by a permanent state with translation-related orbitals is very impressive. This should arouse one's interest in the permanent state in its own right. In the field of cold atoms, the folklore is that Bose-Einstein condensation occurs as the temperature lowers, the de Broglie wave lengths of the particles increase, and the wave-packets overlap. Now, a permanent state with translation-related orbitals is in accord with this picture. Given a primitive wave packet $\phi(\vec{x})$, we can imagine a many-body permanent state constructed with the wave packets $ \phi(\vec{x} - \vec{R}_m)$, where $\vec{R}_m $ runs through an $n$-dimensional lattice. The overlap between adjacent orbitals can be well adjusted by changing the length scale of the primitive orbital $\phi $. The concern is, is it possible to realize a transition by tuning its length scale? How does the correlation function depend on the primitive orbital? What is the effect of the dimensionality of the lattice? In short, can we use a permanent state with regularly distributed identical wave-packets as a prototypical wave function to model the condensation transition? Note that the problem does not refer to a Hamiltonian.

That a single permanent state is often a very good approximation of the exact ground state motivates two problems. First, is it possible to construct an interacting Hamiltonian \cite{comment} whose ground state is exactly a permanent state? Second, how \emph{dense} are the permanent states in the many-boson Hilbert space? Quantitatively, is there a number $\delta >0 $, such that for any many-boson state there exists a permanent state whose overlap with it is at least $\delta $? Preliminary study suggests that $\delta \geq 0.15$ for $(N,L)=(3,13)$. This is not a small number in view of the fact that the dimension of the few-body Hilbert space is 455. The ground state of a realistic model is non-generic, so the largest possible value of the overlap of a permanent state with it should be much higher.

In this paper, we have focused on the energy-minimization problem. The other problem of overlap-maximization, i.e., the problem of finding the optimal single- or multi-configurational permanent approximation of a given bosonic wave function should also be a worthy one. It is about the structure of a bosonic wave function. At least for fermions, it is now well-known that the anti-symmetry condition entails deep structures of the fermionic wave function, with consequences far beyond the commonplace of the Pauli exclusion principle \cite{borland1,borland2,klyachko,coleman,schilling}, and the notion of optimal Slater approximation has proven to be useful in this study \cite{zhang1,zhang2,mbl}. It is fair to expect that for bosons, the symmetry condition also has far-reaching consequences and hopefully, the notion of optimal permanent approximation is a useful one too.

\section*{Acknowledgments}
The authors are grateful to J. Guo, K. Yang, Y. Xiang and K. Jin for their helpful comments.
This work is supported by the Science Challenge Project (NO.~TZ2018002) and the Foundation of LCP.

\appendix*

\begin{widetext}

\section{Expressions of $\hat{F}$ and $\hat{G} $ in Sec.~\ref{secalg}}
Let us start with $\hat{G} $ which is simpler than $\hat{F}$. By an analogy of the Laplace expansion for the determinant, we have
\begin{eqnarray}
  \langle \Phi | \Phi \rangle &=& \per(A) =   \langle \phi_1 | \phi_1 \rangle \per(A;1|1 ) + \sum_{j_1 =2 }^N \langle \phi_1 | \phi_{j_1} \rangle \per(A;1|j_1 ) \nonumber \\
  &=& \langle \phi_1 | \phi_1 \rangle \per(A;1|1 ) + \sum_{j_1 =2 }^N \sum_{i_1 =2 }^N\langle \phi_1 | \phi_{j_1} \rangle \langle \phi_{i_1} | \phi_1 \rangle \per(A;1,i_1|1,j_1 ) .
\end{eqnarray}
From this expression, we can read off the operator $\hat{G} $ defined by $\langle \phi_1 |\hat{G}  | \phi_1 \rangle = \langle \Phi | \Phi \rangle $. It is
\begin{eqnarray}
  \hat{G} &=&  \per(A;1|1 ) \hat{I} + \sum_{j_1 =2 }^N \sum_{i_1 =2 }^N    \per(A;1,i_1|1,j_1 )  | \phi_{j_1} \rangle \langle \phi_{i_1}|,
\end{eqnarray}
where $\hat{I}$ is the identity operator. Note that $\hat{G}$ depends on the orbitals $\phi_{2\leq j \leq N }$ but not on $\phi_1$.

We then turn to $\hat{F} $. For the single-particle part, we have
\begin{eqnarray}
\langle  \Phi |H_{1}| \Phi \rangle &=&   \sum_{i_1=1}^N \sum_{j_1 =1}^N \langle \phi_{i_1}|K | \phi_{j_1}\rangle \per(A;i_1 | j_1 ) \nonumber \\
  &=& \langle \phi_{1}|K | \phi_{1}\rangle \per(A;1 | 1 ) +\sum_{j_1 =2 }^N \langle \phi_1 |K| \phi_{j_1} \rangle \per(A;1|j_1 ) +\sum_{i_1 =2 }^N \langle \phi_{i_1} |K| \phi_1 \rangle \per(A;i_1 |1 ) \nonumber \\
   & & + \sum_{i_1=2}^N \sum_{j_1 =2}^N \langle \phi_{i_1}|K | \phi_{j_1}\rangle \per(A;i_1 | j_1 )   \nonumber \\
  &=& \langle \phi_1 |K| \phi_1 \rangle \per(A;1|1 ) + \sum_{j_1 =2 }^N \sum_{i_1 =2 }^N\langle \phi_1 |K| \phi_{j_1} \rangle \langle \phi_{i_1} | \phi_1 \rangle \per(A;1,i_1|1,j_1 ) \nonumber \\
  & & + \sum_{j_1 =2 }^N \sum_{i_1 =2 }^N\langle   \phi_1 | \phi_{j_1} \rangle \langle \phi_{i_1} |K| \phi_1 \rangle \per(A;1,i_1|1,j_1 ) + \sum_{i_1=2}^N \sum_{j_1 =2}^N \langle \phi_{i_1}|K | \phi_{j_1}\rangle \langle \phi_1 | \phi_1 \rangle  \per(A;1,i_1 |1, j_1 ) \nonumber \\
  & & + \sum_{i_2\neq i_1,2}^N \sum_{j_1\neq j_2 ,2}^N  \langle \phi_{i_1}|K | \phi_{j_1}\rangle \langle \phi_1 | \phi_{j_2} \rangle \langle \phi_{i_2} | \phi_1 \rangle \per(A;1,i_1 ,i_2 |1, j_1,j_2 ).
\end{eqnarray}
Here in the last line, the summation $\sum_{i_2\neq i_1,2}^N $ means that $i_1$ and $i_2$ both run from $2$ to $N$, but they must take different values. Similar summation expressions below should be interpreted similarly. We see that the contribution of $H_1$ to $\hat{F} $ is
\begin{eqnarray}
 & &  \per(A;1|1 ) K  +  \sum_{j_1 =2 }^N \sum_{i_1 =2 }^N  K| \phi_{j_1} \rangle \langle \phi_{i_1}| * \per(A;1,i_1|1,j_1 )  + \sum_{j_1 =2 }^N \sum_{i_1 =2 }^N   | \phi_{j_1} \rangle \langle \phi_{i_1} |K *  \per(A;1,i_1|1,j_1 ) \nonumber \\
 & & + \sum_{i_1=2}^N \sum_{j_1 =2}^N \langle \phi_{i_1}|K | \phi_{j_1}\rangle \hat{I} *   \per(A;1,i_1 |1, j_1 ) + \sum_{i_2\neq i_1,2}^N \sum_{j_1\neq j_2 ,2}^N  \langle \phi_{i_1}|K | \phi_{j_1}\rangle  | \phi_{j_2} \rangle \langle \phi_{i_2} |*  \per(A;1,i_1 ,i_2 |1, j_1,j_2 ).\quad \quad
\end{eqnarray}

Next we turn to the two-particle or the interaction term $H_2$. We have by (\ref{exph2})
\begin{eqnarray}
 \langle  \Phi |H_{2}| \Phi \rangle  &=& \frac{1}{2} \sum_{i_1\neq i_2,1}^N \sum_{j_1\neq j_2,1}^N \langle \phi_{i_1} \phi_{i_2}|U | \phi_{j_1} \phi_{j_2} \rangle \per(A;i_1, i_2| j_1, j_2) .
\end{eqnarray}
For clarity, three cases will be considered separately. In the first case, two $\phi_1$'s are associated with $U$. We have
\begin{eqnarray}
 & & \frac{1}{2} \sum_{i_2=2}^N \sum_{ j_2 =2}^N \langle \phi_{1} \phi_{i_2}|U | \phi_{1} \phi_{j_2} \rangle \per(A;1, i_2| 1, j_2) +  \frac{1}{2} \sum_{i_1=2}^N \sum_{ j_1 =2}^N \langle  \phi_{i_1} \phi_1|U |  \phi_{j_1} \phi_1 \rangle \per(A;1, i_1| 1, j_1) \nonumber \\
  & & + \frac{1}{2} \sum_{i_1=2}^N \sum_{ j_2 =2}^N \langle \phi_{i_1} \phi_{1}|U | \phi_{1} \phi_{j_2} \rangle \per(A;1, i_1| 1, j_2) +  \frac{1}{2} \sum_{i_2 =2}^N \sum_{ j_1 =2}^N \langle  \phi_{1} \phi_{i_2}|U |  \phi_{j_1} \phi_1 \rangle \per(A;1, i_2| 1, j_1) \nonumber \\
  &=& \sum_{i_2=2}^N \sum_{ j_2 =2}^N \langle \phi_{1} \phi_{i_2}|U | \phi_{1} \phi_{j_2} \rangle \per(A;1, i_2| 1, j_2) + \sum_{i_1=2}^N \sum_{ j_2 =2}^N \langle \phi_{i_1} \phi_{1}|U | \phi_{1} \phi_{j_2} \rangle \per(A;1, i_1| 1, j_2) \nonumber \\
  &=& \sum_{i_2=2}^N \sum_{ j_2 =2}^N \langle \phi_{1} \phi_{i_2}|U | \phi_{1} \phi_{j_2} \rangle \per(A;1, i_2| 1, j_2) + \sum_{i_2=2}^N \sum_{ j_2 =2}^N \langle \phi_{i_2} \phi_{1}|U | \phi_{1} \phi_{j_2} \rangle \per(A;1, i_2| 1, j_2).
\end{eqnarray}
Contributions of these expressions to the operator $\hat{F} $ can be easily read off. For example, the operator $\mathcal{O}$ corresponding to the matrix element $\langle \phi_{1} \phi_{i_2}|U | \phi_{1} \phi_{j_2} \rangle$ is defined by $\langle \phi_1 | \mathcal{O}| \phi_1 \rangle = \langle \phi_{1} \phi_{i_2}|U | \phi_{1} \phi_{j_2} \rangle $. For the Bose-Hubbard model in which $U$ is an on-site interaction (\ref{kandu}), the operator $\mathcal{O}$ is actually a (generally complex) potential with the explicit expression $\mathcal{O}(x) = g \phi_{i_2}^*(x) \phi_{j_2}(x)$.

In the second case, one $\phi_1$ is associated with $U$. We have
\begin{eqnarray}
 & &\frac{1}{2} \sum_{i_2=2}^N \sum_{j_1\neq j_2,2}^N \langle \phi_{1} \phi_{i_2}|U | \phi_{j_1} \phi_{j_2} \rangle \per(A;1, i_2| j_1, j_2) + \frac{1}{2} \sum_{i_1=2}^N \sum_{j_1\neq j_2,2}^N \langle \phi_{i_1} \phi_{1}|U | \phi_{j_1} \phi_{j_2} \rangle \per(A;1, i_1| j_1, j_2) \nonumber \\
 & & + \frac{1}{2} \sum_{i_1\neq i_2, 2}^N \sum_{j_2=2}^N \langle \phi_{i_1} \phi_{i_2}|U | \phi_{1} \phi_{j_2} \rangle \per(A;i_1, i_2| 1, j_2) + \frac{1}{2} \sum_{i_1\neq i_2, 2}^N \sum_{j_1= 2}^N \langle \phi_{i_1} \phi_{i_2}|U | \phi_{j_1} \phi_{1} \rangle \per(A;i_1, i_2| 1, j_1) \nonumber \\
 &=& \sum_{i_2=2}^N \sum_{j_1\neq j_2,2}^N \langle \phi_{1} \phi_{i_2}|U | \phi_{j_1} \phi_{j_2} \rangle \per(A;1, i_2| j_1, j_2) + \sum_{i_1\neq i_2, 2}^N \sum_{j_2=2}^N \langle \phi_{i_1} \phi_{i_2}|U | \phi_{1} \phi_{j_2} \rangle \per(A;i_1, i_2| 1, j_2) \nonumber \\
 &=& \sum_{i_1\neq i_2,2}^N \sum_{j_1\neq j_2,2}^N \langle \phi_{1} \phi_{i_2}|U | \phi_{j_1} \phi_{j_2}  \rangle \langle \phi_{i_1} |\phi_1\rangle  \per(A;1,i_1, i_2|1, j_1, j_2) \nonumber \\
 & & + \sum_{i_1\neq i_2,2}^N \sum_{j_1\neq j_2,2}^N  \langle \phi_1 |\phi_{j_1}\rangle \langle \phi_{i_1} \phi_{i_2}|U | \phi_{1} \phi_{j_2} \rangle  \per(A;1, i_1, i_2| 1, j_1,j_2).
\end{eqnarray}
In the third case, none $\phi_1$ is associated with $U$. We have
 \begin{eqnarray}
   & & \frac{1}{2} \sum_{i_1\neq i_2,2}^N \sum_{j_1\neq j_2,2}^N \langle \phi_{i_1} \phi_{i_2}|U | \phi_{j_1} \phi_{j_2} \rangle \per(A;i_1, i_2| j_1, j_2) \nonumber \\
   &=&  \frac{1}{2} \sum_{i_1\neq i_2,2}^N \sum_{j_1\neq j_2,2}^N \langle \phi_{i_1} \phi_{i_2}|U | \phi_{j_1} \phi_{j_2} \rangle \langle \phi_1 | \phi_1 \rangle  \per(A;1,i_1, i_2|1, j_1, j_2) \nonumber \\
   & & + \frac{1}{2} \sum_{i_1\neq i_2\neq i_3,2}^N \sum_{j_1\neq j_2\neq j_3,2}^N \langle \phi_{i_1} \phi_{i_2}|U | \phi_{j_1} \phi_{j_2} \rangle  \langle \phi_1 | \phi_{j_3} \rangle \langle \phi_{i_3} | \phi_1 \rangle \per(A;1,i_1, i_2,i_3|1, j_1, j_2,j_3).
 \end{eqnarray}
In the last line, we see that we have to calculate the permanents of a series of $(N-4)\times(N-4)$ matrices $\binom{N-1}{3}^2$ times.  By the improved Ryser algorithm, the total evaluation is on the order of $N^7 2^{(N-5)}$. This is the most time-consuming part in the preparation of $\hat{F}$ and $\hat{G}$, if $N\geq 5 $.

\end{widetext}

\end{document}